\documentclass[11pt]{article}

\usepackage[a4paper,margin=1in]{geometry}
\usepackage{amsmath,amssymb,amsfonts,amsthm,bm,mathrsfs}
\usepackage{graphicx}
\usepackage{xcolor}
\usepackage{booktabs}
\usepackage{tabularx}
\usepackage{multirow}
\usepackage{tikz}
\usepackage{enumitem}
\usepackage{array}
\usepackage{float}
\usepackage{placeins}
\usetikzlibrary{arrows.meta,positioning,fit,backgrounds,calc}
\newcommand{\E}{\mathbb E}
\newcommand{\R}{\mathbb R}

\newcommand{\sg}{\operatorname{sg}}
\newcommand{\dd}{\mathrm d}
\newcommand{\DCBSDE}{\textsc{DC--BSDE}}
\newcommand{\one}{\bm{1}}

\newtheorem{proposition}{Proposition}[section]
\newtheorem{lemma}[proposition]{Lemma}
\newtheorem{corollary}[proposition]{Corollary}
\theoremstyle{remark}
\newtheorem{remark}{Remark}[section]

\usepackage{algorithm}
\usepackage{algpseudocode}

\floatname{algorithm}{Algorithm}
\algrenewcommand\algorithmicrequire{\textbf{Input:}}
\algrenewcommand\algorithmicensure{\textbf{Output:}}
\algrenewcommand\algorithmiccomment[1]{\hfill$\triangleright$ #1}
\algrenewcommand\algorithmicif{\textbf{if}}
\algrenewcommand\algorithmicthen{\textbf{then}}
\algrenewcommand\algorithmicelse{\textbf{else}}
\algrenewcommand\algorithmicfor{\textbf{for}}
\algrenewcommand\algorithmicdo{\textbf{do}}
\algrenewcommand\algorithmicend{\textbf{end}}

\usepackage{hyperref}
\hypersetup{
  hidelinks,
  pdftitle={Deep-Control BSDE: Layerwise Brownian-Weighted Regression for High-Dimensional Semilinear PDEs},
  pdfauthor={Mingcan Wang and Xiangjun Wang},
  pdfkeywords={high-dimensional PDEs, BSDEs, Brownian-weighted regression, scientific machine learning}
}

\begin{document}

\title{Deep-Control BSDE: Layerwise Brownian-Weighted Regression for High-Dimensional Semilinear PDEs}
\author{Mingcan Wang \quad and \quad Xiangjun Wang\\[0.6em]
\small School of Mathematics and Statistics, Huazhong University of Science and Technology\\
\small Wuhan 430074, China\\
\small Corresponding author: \href{mailto:xjwang@hust.edu.cn}{xjwang@hust.edu.cn}}
\date{}

\maketitle

\begin{abstract}

High-dimensional semilinear parabolic partial differential equations arise in stochastic control, financial engineering, and uncertainty quantification, but classical spatial discretizations suffer from the curse of dimensionality. Motivated by Gaussian perturbation and conditional regression in denoising score matching, we propose Deep-Control BSDE (\DCBSDE{}), a layerwise control-regression method for Markovian backward stochastic differential equations. From an implicit Euler scheme, we identify the discrete ideal control $z_n^\pi=\mathcal M_nu_{n+1}^\pi$ as the conditional projection coefficient of the successor value response onto the one-step Brownian increment. At each time level, the method freezes the successor value function, approximates the resulting Brownian conditional-moment target using finitely many branches, regresses the control, and then fits the value through the implicit BSDE relation. A frozen linear-response baseline and antithetic pairing reduce finite-branch fluctuations while preserving the conditional target, and a within-layer correction coordinates the value and control approximations. We establish backward stability through a contraction property of the Brownian projection and derive conditional consistency when discretization, local learning, numerical, and finite-branch errors vanish jointly. Experiments across six benchmarks demonstrate that \DCBSDE{} achieves a favorable overall balance among value accuracy, control accuracy, and dynamic consistency while exhibiting generally stable performance across random seeds. These results support the use of \DCBSDE{} in applications requiring reliable joint approximation of the value and control processes.

\end{abstract}

\medskip
\noindent\textbf{Keywords:}
High-dimensional semilinear parabolic PDEs; backward stochastic differential equations;
Brownian-weighted regression; variance-reduction control variates; denoising score matching

\smallskip
\noindent\textbf{MSC 2020:}
65C30; 60H35; 65M75; 60H10; 35K58; 68T07
% ============================================================================
% ============================================================================
\section{Introduction}
\label{sec:introduction}

High-dimensional semilinear parabolic partial differential equations (PDEs) arise in stochastic control, financial pricing and hedging, reaction--diffusion systems, and uncertainty quantification. Finite-difference, finite-element, and spectral methods have well-developed theories of stability, convergence, and error analysis for low-dimensional problems \cite{MortonMayers2005,Thomee2006,CanutoEtAl2006}. Their numbers of degrees of freedom, however, typically grow rapidly with the spatial dimension. This curse of dimensionality makes direct application to problems with tens or hundreds of dimensions prohibitively expensive \cite{EHanJentzen2017,HurePhamWarin2020}.

A central question in high-dimensional scientific computing is therefore how to approximate both the PDE solution and its diffusion-directional gradient without constructing an explicit high-dimensional spatial grid. Physics-informed neural networks (PINNs)\cite{RaissiEtAl2019,ZhangEtAl2026} and the Deep Galerkin Method (DGM) \cite{SirignanoSpiliopoulos2018} represent the PDE solution by a space--time neural network and minimize residuals of the governing equation, terminal condition, and boundary conditions using automatic differentiation. These approaches produce a continuous space--time approximation directly. In high-dimensional, stiff, or low-regularity regimes, however, the automatic differentiation of high-order derivatives, insufficient coverage of the state space by collocation points, and scale imbalance among different loss terms may substantially complicate optimization \cite{WangEtAl2021,KrishnapriyanEtAl2021}.

Stochastic analysis provides an alternative mesh-free route for high-dimensional semilinear PDEs. Pardoux and Peng \cite{PardouxPeng1992} established the connection between nonlinear backward stochastic differential equations (BSDEs) and semilinear parabolic PDEs. In the Markovian setting and under suitable regularity assumptions, the nonlinear Feynman--Kac formula turns the computation of the value function and its diffusion-directional gradient into the numerical solution of a BSDE. The Deep BSDE method propagates a time-discretized BSDE in forward form, treats the initial value and the control approximations at all time levels as trainable objects, and jointly optimizes them through the global terminal loss $\mathcal L_{\mathrm{term}}(\theta)=\mathbb E\!\left[\left|Y_N^\theta-g(X_N)\right|^2\right]$ \cite{EHanJentzen2017,HanJentzenE2018}.

In contrast to global pathwise optimization, DBDP1 and DBDP2 proceed backward from the terminal condition according to a dynamic-programming relation: DBDP1 jointly approximates the value and control functions through a one-step BSDE quadratic loss, whereas DBDP2 learns the value function and recovers the control from its spatial gradient \cite{HurePhamWarin2020}. More recently, Deep Picard Iteration (DPI) has reformulated high-dimensional PDE solution as a sequence of space--time regression problems with value and gradient labels derived from Feynman--Kac and variance-reduced Bismut--Elworthy--Li representations \cite{HanHuLongZhao2026}. Although these sampling-based methods avoid high-dimensional spatial grids, they still incur errors from time discretization, Monte Carlo sampling, function approximation, and layerwise backward propagation, with control identification remaining particularly delicate. In particular, neither DBDP1 nor DBDP2 constructs, from a frozen successor value function, a separate computable conditional-regression target for the current control that supports variance reduction and independent analysis. Developing such explicit control supervision is the main focus of this work.

Our point of departure is the gradient-learning paradigm of denoising score matching (DSM) \cite{Vincent2011,SongEtAl2021}. In Gaussian corruption models, DSM uses the conditional score of a known transition kernel to construct random labels and then recovers the score of the unknown marginal density by conditional regression. The control of a Markovian BSDE, $Z=\sigma^\top\nabla_xu$, is likewise a gradient-related quantity. This analogy suggests that a one-step Brownian perturbation may be used to construct explicit supervision for $Z$. It is important, however, to distinguish the two objects. DSM learns the logarithmic gradient of a probability density, $\nabla_x\log p$, whereas the present method learns the diffusion-directional gradient of a value function, $\sigma^\top\nabla_xu$. All control targets below are therefore derived rigorously from the chosen discrete BSDE rather than defined by analogy with score matching.

Malliavin weights and conditional-moment identities provide the corresponding stochastic-analytic basis for explicit control supervision. Previous work combines Malliavin-weight representations with least-squares regression to approximate the value and control processes of BSDEs \cite{GobetTurkedjiev2016}. The One Step Malliavin method further develops BSDE discretizations based on Malliavin derivatives together with deep-regression implementations \cite{NegyesiEtAl2024}. Takahashi et al. construct control variates for Deep BSDE solvers through a linear--nonlinear decomposition \cite{TakahashiEtAl2022}.

The main contributions of this paper are as follows.
\begin{enumerate}[label=(\roman*),leftmargin=*]

\item
\textbf{A control-first layerwise backward formulation with explicit Brownian supervision.}
Starting from an implicit Euler discretization, we identify the discrete ideal control as
$z_n^\pi=\mathcal M_nu_{n+1}^\pi$ and distinguish it from the frozen population target
$z_n^{\mathrm{tar}}=\mathcal M_nU_{n+1}^{\mathrm{snap}}$, the finite-branch label
$\widehat Z_n^M$, and the learned approximation $Z_n^\theta$. Freezing the successor value function turns the Brownian conditional moment into a fixed and analyzable supervision target for the current control. Each layer then follows the sequence ``freeze the successor value--regress the control--fit the implicit value--coordinate and store the snapshots.'' Relative to global end-to-end formulations, the layerwise backward organization localizes the optimization at each time level and makes the propagation of approximation errors through frozen snapshots explicit. This provides a distinct control-identification mechanism from the joint value--control residual of DBDP1 and the value-gradient recovery of DBDP2.

\item
\textbf{Conditional-mean-preserving variance reduction and backward error analysis.}
Frozen centering, a matched linear-response baseline, and antithetic Brownian pairing reduce finite-branch fluctuations without changing the population control target. We prove the Brownian projection contraction $h_n|\mathcal M_ne(x)|^2\leq (P_n|e|^2)(x)-|(P_ne)(x)|^2,$ and use it to establish stability of the exact discrete operators and local-residual-driven backward recursions. The resulting error decomposition separates time-discretization, local learning and numerical, and finite-branch errors, yielding conditional consistency when these error classes vanish jointly.

\item
\textbf{Joint numerical assessment of value accuracy, control accuracy, and dynamic consistency.}
We evaluate the methods using value and control errors together with one-step BSDE residuals and terminal-rollout errors. Experiments across six benchmarks and multiple paired random seeds, with comparisons against established deep BSDE solvers, are complemented by structural comparisons, component ablations, and wall-clock-matched tests. The results demonstrate a favorable overall balance among value accuracy, control accuracy, and dynamic consistency, together with generally stable performance across random seeds, while also showing that these criteria need not produce the same method ranking.

\end{enumerate}

The remainder of the paper is organized as follows. 
Section~\ref{sec:preliminaries} introduces the PDE--BSDE correspondence and derives the hierarchy of discrete ideal controls, frozen population targets, finite-branch labels, and variance-reduction mechanisms. Section~\ref{sec:method} presents the layerwise backward \DCBSDE{} algorithm and its optional structural interfaces, followed by the error decomposition, backward stability, conditional consistency, and complexity analysis. Section~\ref{sec:experiments} compares \DCBSDE{} with established deep BSDE solvers through joint value--control and dynamic diagnostics, structural comparisons, component ablations, and wall-clock-matched tests. Section~\ref{sec:conclusion} summarizes the main findings and future directions. The appendices provide the benchmark definitions, complete numerical results, and LogHeat experiments.

% ============================================================================
% ============================================================================
\section{Theoretical Foundations}
\label{sec:preliminaries}

This section starts from the Markovian BSDE representation of a semilinear PDE and develops the discrete objects underlying our control-regression method. Following the terminology commonly used for BSDE solvers, we refer to $Z$ as the BSDE control process \cite{EHanJentzen2017}. More precisely, $Z$ is the martingale integrand in the stochastic integral of the BSDE. In the regular Markovian setting, it coincides with the diffusion-weighted gradient of the value function, $\sigma^\top\nabla_xu$.

For a fixed time-discretization scheme, we distinguish four objects: the discrete ideal control $z_n^\pi$, the frozen population target $z_n^{\mathrm{tar}}$, the finite-branch random label $\widehat Z_n^M$, and the learned function approximation $Z_n$. These objects are respectively associated with time-discretization error, error in the frozen successor function, finite-branch sampling error, and function-approximation and optimization errors. We first define the discrete ideal control and the frozen population target. We then establish the equivalence between random-label regression and the corresponding population regression risk, and analyze conditional-mean-preserving variance reduction. Backward-propagation stability is deferred to Section~\ref{sec:method}, after the algorithmic operators have been introduced.
% ============================================================================

\subsection{Semilinear PDEs, the BSDE Control Process, and Baseline Assumptions}
\label{subsec:basic_assumptions}

Let $(\Omega,\mathcal F,\mathbb F,\mathbb P)$ be a complete filtered probability space satisfying the usual conditions. Let $W=(W_t)_{0\leq t\leq T}$ be an $m$-dimensional standard Brownian motion, and let $\mathbb F=(\mathcal F_t)_{0\leq t\leq T}$ be its augmented natural filtration. We use the Euclidean norm $|\cdot|$ for vectors and the Frobenius norm $\|\cdot\|_{\mathrm F}$ for matrices. The Euclidean inner product is denoted by $\langle\cdot,\cdot\rangle$, and $A^\top$ denotes the transpose of a matrix $A$.

We consider the terminal-value problem
\begin{equation}
\left\{
\begin{aligned}
    &\partial_tu(t,x)+\mathcal Lu(t,x) +f\!\left(t,x,u(t,x),\sigma(t,x)^\top\nabla_xu(t,x)\right)=0,
      &(t,x)\in[0,T)\times\R^d,\\
    &u(T,x)=g(x), \qquad x\in\R^d,
\end{aligned}
\right.
\label{eq:pde}
\end{equation}
where 
\[
\mathcal L\varphi(t,x)=\mu(t,x)^\top\nabla_x\varphi(t,x)+\frac12\operatorname{tr}\!\left(\sigma(t,x)\sigma(t,x)^\top D_x^2\varphi(t,x)\right).
\]

Here, $\mu:[0,T]\times\R^d\to\R^d$ and $\sigma:[0,T]\times\R^d\to\R^{d\times m}$ are the drift and diffusion coefficients, respectively; $f:[0,T]\times\R^d\times\R\times\R^m\to\R$ is the generator; and $g:\R^d\to\R$ is the terminal function.

By the nonlinear Feynman--Kac representation, for any initial condition $(t,x)$ the corresponding forward diffusion and BSDE are
\begin{align}
    X_s^{t,x}&=x+\int_t^s\mu(r,X_r^{t,x})\,\dd r +\int_t^s\sigma(r,X_r^{t,x})\,\dd W_r, \qquad s\in[t,T],
\label{eq:forward_sde}\\
    Y_s^{t,x}&=g(X_T^{t,x})+\int_s^T f(r,X_r^{t,x},Y_r^{t,x},Z_r^{t,x})\,\dd r -\int_s^T (Z_r^{t,x})^\top\dd W_r.
\label{eq:bsde}
\end{align}

The following assumptions are used for the baseline Lipschitz theory developed in this paper.

\medskip 
\noindent\textit{(H1) Forward coefficients.}
The mappings $\mu$ and $\sigma$ are continuous in $(t,x)$. There exists a constant $L_X>0$ such that, for all $t,s\in[0,T]$ and $x,x'\in\R^d$,
\begin{align*}
    |\mu(t,x)-\mu(t,x')|+\|\sigma(t,x)-\sigma(t,x')\|_{\mathrm F}&\leq L_X|x-x'|,\\
    |\mu(t,x)|+\|\sigma(t,x)\|_{\mathrm F}&\leq L_X(1+|x|),\\
    |\mu(t,x)-\mu(s,x)|+\|\sigma(t,x)-\sigma(s,x)\|_{\mathrm F}&\leq L_X(1+|x|)|t-s|^{1/2}.
\end{align*}

Consequently, for every $p\geq2$, the forward SDE admits a unique strong solution satisfying the standard moment estimate
\[
\E\!\left[\sup_{r\in[t,T]}|X_r^{t,x}|^p\right] \leq C_p(1+|x|^p).
\]

\medskip 
\noindent\textit{(H2) Generator and terminal data.}
The mapping $f$ is continuous in all variables. There exist nonnegative constants $L_x,L_y,L_z,L_t$ such that
\[
|f(t,x,y,z)-f(t,x',y',z')| \leq L_x|x-x'|+L_y|y-y'|+L_z|z-z'|,
\]
and
\[
|f(t,x,y,z)-f(s,x,y,z)| \leq L_t(1+|x|+|y|+|z|)|t-s|^{1/2}.
\]
In addition, $\int_0^T|f(t,0,0,0)|^2\,\dd t<\infty$, and $g$ is globally Lipschitz continuous.

Under (H1)--(H2), both the forward SDE and the Lipschitz BSDE are well posed. If PDE~\eqref{eq:pde} admits a sufficiently smooth classical solution, the nonlinear Feynman--Kac formula gives
\begin{equation}
    Y_s^{t,x}=u(s,X_s^{t,x}),
    \qquad
    Z_s^{t,x}=\sigma(s,X_s^{t,x})^\top \nabla_xu(s,X_s^{t,x}).
\label{eq:feynman_kac}
\end{equation}
Under weaker regularity, $u$ may be interpreted as a viscosity solution. We invoke the gradient representation in~\eqref{eq:feynman_kac} only when the corresponding Sobolev regularity and probabilistic representation justify it. In arguments that do not require this representation, $Z$ is always understood as the martingale integrand of the BSDE.

\begin{remark}[Theoretical assumptions and extended benchmarks]
\label{rem:theory_vs_benchmarks}
Assumptions (H1)--(H2) are imposed to establish the baseline Lipschitz theory. Benchmarks with quadratic-gradient generators, nonsmooth terminal data, only local growth control, or degenerate boundary structures are included as empirical robustness extensions beyond this baseline setting.
\end{remark}

% ============================================================================
\subsection{Implicit Euler Discretization and Markovian Recursion}
\label{subsec:discrete_bsde}

Let
$\pi=\{0=t_0<t_1<\cdots<t_N=T\}$ be a time grid. Set $h_n=t_{n+1}-t_n$, $|\pi|=\max_{0\leq n<N}h_n$, and $\Delta W_n=W_{t_{n+1}}-W_{t_n}$, and write $\E_n[\,\cdot\,]:=\E[\,\cdot\mid\mathcal F_{t_n}]$. The Brownian increment $\Delta W_n$ is independent of $\mathcal F_{t_n}$ and follows $\mathcal N(0,h_nI_m)$.

To distinguish the continuous-time state from its time-discrete approximation, we denote the Euler--Maruyama state by $\bar X_n$:
\begin{equation}
    \bar X_{n+1}=\Phi_n(\bar X_n,\Delta W_n):=\bar X_n+\mu(t_n,\bar X_n)h_n+\sigma(t_n,\bar X_n)\Delta W_n.
\label{eq:euler}
\end{equation}
The term ``discrete state'' refers only to the temporal grid. The random variable $\bar X_n$ remains continuously valued in $\R^d$.

We take the implicit Euler BSDE with the generator evaluated at the current time as the reference discretization \cite{Zhang2004,GobetLemorWarin2005}:
\begin{subequations}
\label{eq:discrete_scheme}
\begin{align}
    Y_N^\pi&=g(\bar X_N),
\label{eq:discrete_terminal}\\
    Z_n^\pi&=\frac1{h_n}\E_n\!\left[Y_{n+1}^\pi\Delta W_n\right],
\label{eq:discrete_z}\\
    Y_n^\pi&=\E_n[Y_{n+1}^\pi]+h_nf(t_n,\bar X_n,Y_n^\pi,Z_n^\pi).
\label{eq:discrete_y}
\end{align}
\end{subequations}
We further assume
\begin{equation}
    |\pi|L_y<1.
\label{eq:implicit_solvability}
\end{equation}
Then, for fixed $(t_n,x,z)$ and any $a\in\R$, the mapping $y\mapsto a+h_nf(t_n,x,y,z)$ is a contraction, and the current-layer implicit scalar equation in~\eqref{eq:discrete_y} has a unique solution. Condition~\eqref{eq:implicit_solvability} is sufficient for a Picard iteration, but it is not necessary in general.

Fix the current state $x$, let $w\sim\mathcal N(0,h_nI_m)$, and let $\E_w$ denote expectation only with respect to the next Brownian increment. For a suitably integrable scalar function $\varphi:\R^d\to\R$, define the one-step transition operator $P_n$ and the Brownian-weighted operator $\mathcal M_n$ by
\begin{align*}
    (P_n\varphi)(x):=\E_w\!\left[\varphi(\Phi_n(x,w))\right],\quad
    (\mathcal M_n\varphi)(x):=\E_w\!\left[\varphi(\Phi_n(x,w))\frac{w}{h_n}\right].
\end{align*}
Along the Euler state chain, these definitions are equivalently written as
\begin{align*}
    (P_n\varphi)(\bar X_n)=\E[\varphi(\bar X_{n+1})\mid\mathcal F_{t_n}],\quad
    (\mathcal M_n\varphi)(\bar X_n)=\frac1{h_n}\E[\varphi(\bar X_{n+1})\Delta W_n\mid\mathcal F_{t_n}].
\end{align*}
Thus, $P_n$ maps a scalar function at the next time level to a scalar function at the current level, whereas $\mathcal M_n$ maps it to an $\R^m$-valued function at the current level.

In the Markovian setting, write $Y_n^\pi=u_n^\pi(\bar X_n)$ and $Z_n^\pi=z_n^\pi(\bar X_n)$. Then
\begin{subequations}
\label{eq:markov_recursion}
\begin{align}
    z_n^\pi(x)&=(\mathcal M_nu_{n+1}^\pi)(x),
\label{eq:markov_z}\\
    u_n^\pi(x)&=(P_nu_{n+1}^\pi)(x)+h_nf(t_n,x,u_n^\pi(x),z_n^\pi(x)).
\label{eq:value_operator}
\end{align}
\end{subequations}
Equation~\eqref{eq:markov_z} defines the discrete ideal control associated with the chosen time-discretization scheme.

% ============================================================================
\subsection{Conditional Projection, Gaussian Integration by Parts, and Gradient Interpretation}
\label{subsec:projection_stein}

Equation~\eqref{eq:discrete_z} can also be derived from a conditional least-squares projection. More precisely, $Z_n^\pi$ is the unique optimal coefficient in
\[
Z_n^\pi=\operatorname*{argmin}_{\zeta\in L^2(\mathcal F_{t_n};\R^m)}\E\!\left[\left|Y_{n+1}^\pi-\E_n[Y_{n+1}^\pi]-\zeta^\top\Delta W_n\right|^2\right].
\]
Define the projection residual $R_{n+1}^{\mathrm{proj},\pi}:=Y_{n+1}^\pi-\E_n[Y_{n+1}^\pi]-(Z_n^\pi)^\top\Delta W_n.$
It satisfies the orthogonality conditions $\E_n[R_{n+1}^{\mathrm{proj},\pi}]=0, \E_n[R_{n+1}^{\mathrm{proj},\pi}\Delta W_n]=0.$
Therefore, $Z_n^\pi$ is the projection coefficient of the centered successor value response onto the conditional linear space spanned by the components of $\Delta W_n$. It extracts the first-order Brownian component of the one-step response, without requiring $Y_{n+1}^\pi$ to be exactly linear in $\Delta W_n$. Equation~\eqref{eq:discrete_z} is an exact identity for the selected discrete scheme. Relative to the continuous-time control $z(t_n,x)=\sigma(t_n,x)^\top\nabla_xu(t_n,x)$, however, it still contains time-discretization error.

Let $q_{h_n}$ denote the density of $\mathcal N(0,h_nI_m)$ and define the one-step Brownian--Stein weight $H_n(w):=\frac{w}{h_n}=-\nabla_w\log q_{h_n}(w).$
If $\Phi_n(x,\cdot)$ and $\varphi$ are weakly differentiable and the relevant integrability conditions hold, Gaussian integration by parts gives
\[
\E_w\!\left[\varphi(\Phi_n(x,w))\frac{w}{h_n}\right]
=\E_w\!\left[D_w\Phi_n(x,w)^\top\nabla\varphi(\Phi_n(x,w))\right].
\]
For the Euler--Maruyama map~\eqref{eq:euler}, $D_w\Phi_n(x,w)=\sigma(t_n,x)$, and hence
\begin{equation}
    (\mathcal M_n\varphi)(x) =\E_w\!\left[ \sigma(t_n,x)^\top \nabla\varphi(\Phi_n(x,w)) \right].
\label{eq:euler_stein_identity}
\end{equation}
Thus, the Brownian-weighted conditional moment equals the conditional average of the diffusion-directional gradient of the successor function \cite{Stein1981}.

\begin{remark}[Relation to denoising score matching]
\label{rem:dsm_connection}
Let the latent variable be $\xi$ and the corrupted variable be $\widetilde\xi=\xi+\sqrt h\,\varepsilon$, where $\varepsilon\sim\mathcal N(0,I)$. DSM constructs random supervision from the conditional score $-\varepsilon/\sqrt h$ of the corruption kernel and recovers the logarithmic gradient of the marginal density by conditional regression. The weight $w/h_n=-\nabla_w\log q_{h_n}(w)$ used here is the corresponding Gaussian Stein weight. The two settings share Gaussian conditional regression and integration-by-parts identities, but they weight different responses and target different quantities. DSM learns $\nabla\log p$, whereas the present method weights a successor value response and identifies a discrete BSDE control. DSM therefore provides motivation for the supervision design only; the target definition, conditional-mean property, and error analysis in this paper are all determined by the discrete BSDE. A strict intersection of the two constructions for the log-heat equation is discussed in Appendix~\ref{app:score_case}.
\end{remark}

\begin{remark}[Difference from continuous-time Malliavin weights]
\label{rem:discrete_vs_bismut_weight}
The factor $w/h_n$ is an integration-by-parts weight for the current one-step Gaussian transition. We refer to it as a one-step Brownian--Malliavin or Brownian--Stein weight. In general, it is not the continuous-time Bismut--Elworthy--Li weight, which typically depends on a stochastic-flow Jacobian, information along an entire path segment, and corresponding nondegeneracy conditions \cite{ElworthyLi1994,MaZhang2002,GobetTurkedjiev2016}.
\end{remark}

% ============================================================================
\subsection{Frozen Population Targets and Population Regression Risk}
\label{subsec:frozen_targets}

Before entering time level $n$, the trained successor function $U_{n+1}^{\mathrm{snap}}$ is frozen strictly. Let $\mathcal G_n^{\mathrm{pre}}$ denote the algorithmic information available before the current outer states and Brownian branches are generated, including the training history, frozen functions, and predetermined variance-control terms.

Fix a $\mathcal G_n^{\mathrm{pre}}$-measurable state $x$ and independently draw $w\sim\mathcal N(0,h_nI_m)$. Define the one-step future response by
\begin{equation}
    \Psi_n(x,w) :=U_{n+1}^{\mathrm{snap}}(\Phi_n(x,w)).
\label{eq:future_response}
\end{equation}
The current-layer population target induced by the frozen function is
\begin{equation}
    z_n^{\mathrm{tar}}(x):=(\mathcal M_nU_{n+1}^{\mathrm{snap}})(x) =\E_w\!\left[\Psi_n(x,w)\frac{w}{h_n}\right].
\label{eq:frozen_z_target}
\end{equation}
The corresponding raw one-branch label is
\begin{equation}
    T_n^{\mathrm{raw}}(x;w):=\Psi_n(x,w)H_n(w)=\Psi_n(x,w)\frac{w}{h_n}.
\label{eq:raw_brownian_label}
\end{equation}
By conditional independence of the current increment and $\mathcal G_n^{\mathrm{pre}}$,
\begin{equation}
    \E\!\left[T_n^{\mathrm{raw}}(x;w) \mid\mathcal G_n^{\mathrm{pre}} \right] =z_n^{\mathrm{tar}}(x).
\label{eq:raw_label_mean}
\end{equation}
Thus, the raw label is conditionally unbiased for the frozen population target, but it is not a pathwise realization of the discrete ideal control $z_n^\pi$. The difference between the two targets satisfies the exact identity
\begin{equation}
    z_n^{\mathrm{tar}}(x)-z_n^\pi(x) =\Bigl[\mathcal M_n \bigl(U_{n+1}^{\mathrm{snap}}-u_{n+1}^\pi\bigr) \Bigr](x).
\label{eq:frozen_target_error}
\end{equation}

If $u_{n+1}^\pi$ were known, the ideal control risk under a prescribed state design measure $\nu_n^{\mathrm{des}}$ would be
\[
\mathcal L_{Z,n}^{\mathrm{ideal}}(F) :=\frac12\int_{\R^d} |F(x)-z_n^\pi(x)|^2\,\nu_n^{\mathrm{des}}(\dd x).
\]
Choosing $\nu_n^{\mathrm{des}}=\operatorname{Law}(\bar X_n)$ yields the risk under the path distribution. Since $u_{n+1}^\pi$ is unknown, the algorithm learns the frozen population target $z_n^{\mathrm{tar}}$ instead.

\begin{proposition}[Random labels and the population regression target]
\label{prop:population_regression_target}
Suppose that, conditional on $\mathcal G_n^{\mathrm{pre}}$, the distribution of $X_n^{\mathrm{des}}$ is the predetermined design measure $\nu_n^{\mathrm{des}}$, and that the random label $\widehat T_n$ satisfies
\begin{equation}
    \E\!\left[ \widehat T_n \mid\mathcal G_n^{\mathrm{pre}},X_n^{\mathrm{des}} \right] =z_n^{\mathrm{tar}}(X_n^{\mathrm{des}}),
\label{eq:generic_label_conditional_mean}
\end{equation}
with a finite second moment. Then, for any $F\in L^2(\nu_n^{\mathrm{des}};\R^m)$ fixed before the current design states and labels are generated,
\begin{equation}
\begin{aligned}
    &\E\!\left[ |F(X_n^{\mathrm{des}})-\widehat T_n|^2 \mid\mathcal G_n^{\mathrm{pre}} \right]= \|F-z_n^{\mathrm{tar}}\|_{L^2(\nu_n^{\mathrm{des}})}^2+\mathcal V_n,
\end{aligned}
\label{eq:population_risk_decomposition}
\end{equation}
where $\mathcal V_n:=\E\!\left[ |\widehat T_n-z_n^{\mathrm{tar}}(X_n^{\mathrm{des}})|^2\mid\mathcal G_n^{\mathrm{pre}} \right]$ is independent of $F$. Therefore, over an unrestricted function class, the unique minimizer of the population risk is $z_n^{\mathrm{tar}}$, $\nu_n^{\mathrm{des}}$-almost everywhere. Within a prescribed function class $\mathcal H_n$, minimizing the population risk is equivalent to minimizing the $L^2(\nu_n^{\mathrm{des}})$ approximation error to $z_n^{\mathrm{tar}}$ over $\mathcal H_n$.
\end{proposition}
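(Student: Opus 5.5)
The plan is a standard bias--variance (Pythagorean) decomposition in the conditional $L^2$ space, carried out conditionally on $\mathcal G_n^{\mathrm{pre}}$. Write
\[
F(X_n^{\mathrm{des}})-\widehat T_n=\bigl(F(X_n^{\mathrm{des}})-z_n^{\mathrm{tar}}(X_n^{\mathrm{des}})\bigr)+\bigl(z_n^{\mathrm{tar}}(X_n^{\mathrm{des}})-\widehat T_n\bigr)=:A+B,
\]
expand $|A+B|^2=|A|^2+2\langle A,B\rangle+|B|^2$, and take $\E[\,\cdot\mid\mathcal G_n^{\mathrm{pre}}]$ term by term.

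\textbf{Integrability.} First check that all three terms are integrable. $F\in L^2(\nu_n^{\mathrm{des}})$ and $F$ is fixed before the current draws, so $F$ is $\mathcal G_n^{\mathrm{pre}}$-measurable as a function. Since the conditional law of $X_n^{\mathrm{des}}$ is $\nu_n^{\mathrm{des}}$, we get $\E[|F(X_n^{\mathrm{des}})|^2\mid\mathcal G_n^{\mathrm{pre}}]=\|F\|_{L^2(\nu_n^{\mathrm{des}})}^2<\infty$. For $z_n^{\mathrm{tar}}$, apply conditional Jensen to \eqref{eq:generic_label_conditional_mean}: $|z_n^{\mathrm{tar}}(X_n^{\mathrm{des}})|^2\le\E[|\widehat T_n|^2\mid\mathcal G_n^{\mathrm{pre}},X_n^{\mathrm{des}}]$, so $z_n^{\mathrm{tar}}\in L^2(\nu_n^{\mathrm{des}})$ whenever $\widehat T_n$ has finite second moment. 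Cauchy--Schwarz then makes $\langle A,B\rangle$ integrable.

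\textbf{The cross term.} This is the key step. Because $A$ is measurable with respect to $\sigma(\mathcal G_n^{\mathrm{pre}},X_n^{\mathrm{des}})$, the tower property gives
\[
\E[\langle A,B\rangle\mid\mathcal G_n^{\mathrm{pre}}]=\E\bigl[\langle A,\E[B\mid\mathcal G_n^{\mathrm{pre}},X_n^{\mathrm{des}}]\rangle\mid\mathcal G_n^{\mathrm{pre}}\bigr]=0,
\]
where the inner conditional expectation vanishes by \eqref{eq:generic_label_conditional_mean}.

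\textbf{Identifying the remaining terms.} The first term is $\E[|A|^2\mid\mathcal G_n^{\mathrm{pre}}]=\|F-z_n^{\mathrm{tar}}\|^2_{L^2(\nu_n^{\mathrm{des}})}$, again using that the conditional law is $\nu_n^{\mathrm{des}}$. The last term is exactly $\mathcal V_n$. Moreover $\mathcal V_n$ involves only $\widehat T_n$, $X_n^{\mathrm{des}}$ and $z_n^{\mathrm{tar}}$, so it does not depend on $F$. This proves \eqref{eq:population_risk_decomposition}.

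\textbf{Consequences.} Since $\mathcal V_n$ is an $F$-independent constant, minimizing the population risk over any class $\mathcal H_n$ is the same as minimizing $\|F-z_n^{\mathrm{tar}}\|_{L^2(\nu_n^{\mathrm{des}})}$ over $\mathcal H_n$. Over the unrestricted class $L^2(\nu_n^{\mathrm{des}};\R^m)$, the minimum value $0$ is attained at $F=z_n^{\mathrm{tar}}$, which is admissible by the integrability step. It is attained only when $F=z_n^{\mathrm{tar}}$ $\nu_n^{\mathrm{des}}$-a.e., which gives the uniqueness claim.

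\textbf{Main obstacle.} The only delicate point is measure-theoretic: making rigorous that $F$ "fixed before" means $F(X_n^{\mathrm{des}})$ is $\sigma(\mathcal G_n^{\mathrm{pre}},X_n^{\mathrm{des}})$-measurable, and that conditioning on $\mathcal G_n^{\mathrm{pre}}$ turns expectations of functions of $X_n^{\mathrm{des}}$ into $\nu_n^{\mathrm{des}}$-integrals. I would handle this with a regular conditional distribution, or more simply by taking $F$ jointly measurable in the parameters and the state and applying the freezing lemma.
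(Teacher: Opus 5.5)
Your proposal is correct and follows the same route as the paper: you add and subtract $z_n^{\mathrm{tar}}(X_n^{\mathrm{des}})$, then make the cross term vanish by conditioning on $(\mathcal G_n^{\mathrm{pre}},X_n^{\mathrm{des}})$ and using the conditional-mean hypothesis. Your integrability check via conditional Jensen and your remark about the freezing lemma spell out details that the paper's one-line proof leaves implicit.
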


\noindent\emph{Proof.}
Add and subtract $z_n^{\mathrm{tar}}(X_n^{\mathrm{des}})$ on the left-hand side of~\eqref{eq:population_risk_decomposition}. By~\eqref{eq:generic_label_conditional_mean}, the conditional mean of the cross term is zero given $(\mathcal G_n^{\mathrm{pre}},X_n^{\mathrm{des}})$, which yields the stated orthogonal decomposition.\hfill$\square$

Proposition~\ref{prop:population_regression_target} closes the population-level argument from ``the label preserves the conditional mean'' to ``neural regression learns the frozen population target.''

% ============================================================================
\subsection{Label Variance and Conditional-Mean-Preserving Variance-Control Terms}
\label{subsec:variance_reduction}

Let the one-step mean of the frozen response be $\mu_n^{\mathrm{tar}}(x):=(P_nU_{n+1}^{\mathrm{snap}})(x).$
The raw label admits the decomposition 
\[
T_n^{\mathrm{raw}}(x;w) =\mu_n^{\mathrm{tar}}(x)\frac{w}{h_n} +[\Psi_n(x,w)-\mu_n^{\mathrm{tar}}(x)] \frac{w}{h_n}.
\]
The first term has conditional mean zero, but its conditional second moment is
\[
\E_w\!\left[ \left|\mu_n^{\mathrm{tar}}(x)\frac{w}{h_n}\right|^2 \right] =\frac{m|\mu_n^{\mathrm{tar}}(x)|^2}{h_n}.
\]
Hence, when $h_n$ is small, the Brownian dimension $m$ is large, and the response level is non-negligible, the raw label may contain substantial zero-mean fluctuations of order $h_n^{-1}$.

\begin{remark}[Conditionally affine responses]
\label{rem:affine_response_variance}
Suppose that, conditional on $(\mathcal G_n^{\mathrm{pre}},x)$, the response is affine:
$\Psi_n(x,w)=c_n(x)+a_n(x)^\top w$. Then
\[
\E_w\!\left[|T_n^{\mathrm{raw}}(x;w)-a_n(x)|^2\right] =\frac{m|c_n(x)|^2}{h_n}+(m+1)|a_n(x)|^2.
\]
This special case makes the two sources of variance explicit: the $h_n^{-1}$ term is caused by the constant response level, while the linear response itself also induces fluctuations that grow with the Brownian dimension.
\end{remark}

To avoid confusion with the BSDE control process $Z$, we refer to Monte Carlo control variates as \emph{variance-control terms}. Let $b_n(x)\in\R$ be a frozen scalar center and $B_n(x)\in\R^m$ a frozen linear-response coefficient. Both must be determined before the current Brownian branches are generated.

With scalar centering only, define
\[
T_n^{\mathrm{ctr}}(x;w):=[\Psi_n(x,w)-b_n(x)]\frac{w}{h_n}.
\]
When an approximately linear response is removed as well, define
\[
T_n^{\mathrm{cv}}(x;w):=B_n(x)+[\Psi_n(x,w)-b_n(x)-B_n(x)^\top w] \frac{w}{h_n}.
\]

\begin{proposition}[Conditional-mean preservation of Brownian-weighted labels]
\label{prop:conditional_unbiasedness}
Suppose that $U_{n+1}^{\mathrm{snap}}$, $b_n$, $B_n$, and the current state $x$ are $\mathcal G_n^{\mathrm{pre}}$-measurable, and that $w\sim\mathcal N(0,h_nI_m)$ is conditionally independent of $\mathcal G_n^{\mathrm{pre}}$. If the relevant expectations are finite, then
\begin{equation}
\begin{aligned}
    \E[T_n^{\mathrm{raw}}(x;w)\mid\mathcal G_n^{\mathrm{pre}}]
    &=\E[T_n^{\mathrm{ctr}}(x;w)\mid\mathcal G_n^{\mathrm{pre}}]\\
    &=\E[T_n^{\mathrm{cv}}(x;w)\mid\mathcal G_n^{\mathrm{pre}}]\\
    &=z_n^{\mathrm{tar}}(x).
\end{aligned}
\label{eq:cv_mean}
\end{equation}
\end{proposition}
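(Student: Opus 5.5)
The first equality with $z_n^{\mathrm{tar}}(x)$ is already \eqref{eq:raw_label_mean}. For the other two labels, I will write each as $T_n^{\mathrm{raw}}$ plus correction terms that are built only from $\mathcal G_n^{\mathrm{pre}}$-measurable coefficients multiplied by explicit polynomials in $w$. It then suffices to show that each correction has zero conditional mean.

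I will compute all conditional expectations by freezing. Because $x$, $b_n(x)$, $B_n(x)$ and $U_{n+1}^{\mathrm{snap}}$ are $\mathcal G_n^{\mathrm{pre}}$-measurable and $w$ is independent of $\mathcal G_n^{\mathrm{pre}}$, the freezing lemma gives
\[
\E[G(\xi,w)\mid\mathcal G_n^{\mathrm{pre}}]=\E_w[G(\xi,w)]\big|_{\xi=(x,b_n(x),B_n(x))}.
\]
This holds for any jointly measurable $G$ whose relevant expectation is finite. The finiteness assumption in the statement covers the products $\Psi_n w$, $b_n w$ and $(B_n^\top w)w$ used below.

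For the centered label, I write $T_n^{\mathrm{ctr}}=T_n^{\mathrm{raw}}-b_n(x)\,w/h_n$. Freezing gives $\E_w[b\,w/h_n]=b\,\E_w[w]/h_n=0$, so $\E[T_n^{\mathrm{ctr}}\mid\mathcal G_n^{\mathrm{pre}}]=z_n^{\mathrm{tar}}(x)$.

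For the control-variate label, I expand
\[
T_n^{\mathrm{cv}}=T_n^{\mathrm{raw}}-b_n(x)\frac{w}{h_n}+B_n(x)-\frac{ww^\top}{h_n}B_n(x),
\]
using $(B_n^\top w)\,w=ww^\top B_n$. Freezing $B_n$ and using $\E_w[ww^\top]=h_nI_m$ gives
\[
\E_w\!\left[B-\frac{ww^\top}{h_n}B\right]=B-\frac{h_nI_m}{h_n}B=0 .
\]
Adding the pieces yields $\E[T_n^{\mathrm{cv}}\mid\mathcal G_n^{\mathrm{pre}}]=z_n^{\mathrm{tar}}(x)$.

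The only point that needs care is the freezing step. One has to check that $b_n$ and $B_n$ are genuinely fixed before the branches are drawn, which the hypotheses ensure, and that splitting the conditional expectation across the terms is legitimate. The split is justified because each term is separately integrable: $w$ has Gaussian moments of all orders, the coefficients are $\mathcal G_n^{\mathrm{pre}}$-measurable, and the raw term is integrable by assumption. If one wants to avoid integrability conditions on $b_n$ and $B_n$ themselves, one can condition first and use the generalized conditional expectation for $\mathcal G_n^{\mathrm{pre}}$-measurable multipliers.
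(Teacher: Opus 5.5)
Your proposal is correct and follows the paper's proof: condition on $\mathcal G_n^{\mathrm{pre}}$, then use $\E_w[w]=0$ to remove the scalar-centering term and $\E_w[ww^\top]=h_nI_m$ to cancel the matched pair $B_n - h_n^{-1}(B_n^\top w)w$ in the conditional mean. Your freezing-lemma and integrability bookkeeping only makes explicit what the paper's one-line ``conditioning on $\mathcal G_n^{\mathrm{pre}}$'' leaves implicit.
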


\noindent\emph{Proof.}
Conditioning on $\mathcal G_n^{\mathrm{pre}}$ and using $\E_w[w]=0$ and $\E_w[ww^\top]=h_nI_m$ gives
\begin{align*}
    \E_w[T_n^{\mathrm{ctr}}] &=z_n^{\mathrm{tar}},\\
    \E_w[T_n^{\mathrm{cv}}]&=B_n+z_n^{\mathrm{tar}} -\frac1{h_n}\E_w[(B_n^\top w)w] =z_n^{\mathrm{tar}}.
\end{align*}
The statement for the raw label follows directly from its definition.\hfill$\square$

\begin{remark}[Scaling of the linear-response term]
\label{rem:baseline_scaling}
If $\lambda_n(x)$ is fixed before the current branches are generated, the conditional-mean-preserving scaled form must be written as
\[
\lambda_nB_n+[\Psi_n-b_n-\lambda_nB_n^\top w]\frac{w}{h_n}.
\]
Scaling only the linear term inside the residual while retaining the unscaled external compensation $B_n$ introduces the additional bias $(1-\lambda_n)B_n$.
\end{remark}

Gaussian integration by parts also suggests a natural linear-response baseline. Let
\[
x_{n+1}^0(x):=\Phi_n(x,0).
\]
If $U_{n+1}^{\mathrm{snap}}$ is differentiable at this point, the local linear coefficient of the future response with respect to $w$ at $w=0$ is
\begin{equation}
    B_n^{\mathrm{lin}}(x):=D_w\Phi_n(x,0)^\top \nabla U_{n+1}^{\mathrm{snap}}(x_{n+1}^0(x)).
\label{eq:local_linear_baseline}
\end{equation}
For the Euler--Maruyama transition,~\eqref{eq:local_linear_baseline} reduces to
\begin{equation}
    B_n^{\mathrm{lin}}(x) =\sigma(t_n,x)^\top \nabla U_{n+1}^{\mathrm{snap}}(x_{n+1}^0(x)).
\label{eq:local_linear_baseline_euler}
\end{equation}
At the final time level, $U_N^{\mathrm{snap}}=g$, so one may use
\begin{equation}
    B_{N-1}^{\mathrm{grad}}(x) =D_w\Phi_{N-1}(x,0)^\top \nabla g(x_N^0(x)).
\label{eq:terminal_baseline}
\end{equation}
These quantities approximate only the local linear component of the future response with respect to the current Brownian increment. In general, they are not equal to the discrete ideal control $z_n^\pi(x)$. If automatic differentiation is unstable or the terminal function is nonsmooth, one may use a prespecified weak gradient, a smoothed gradient, a future control snapshot, or a zero baseline. Whatever the choice, it must be frozen before the current branches are generated. Valid freezing guarantees preservation of the conditional mean.

% ============================================================================
\subsection{Antithetic Brownian Pairing and Finite-Branch Labels}
\label{subsec:antithetic_labels}

Using $w\overset{d}= -w$, define the odd part of the future response with respect to the current increment by
\begin{equation}
    D_n(x,w):=\frac{\Psi_n(x,w)-\Psi_n(x,-w)}2.
\label{eq:odd_response}
\end{equation}
Any scalar center independent of $w$ cancels automatically in this difference. Let the total budget of future-function evaluations be an even integer $M$, set $K=M/2$, draw independent increments $w^{(k)}\sim\mathcal N(0,h_nI_m)$, and evaluate both the positive and negative branches for each increment. Define
\begin{align}
    C_n^{(k)}(x) &:=[D_n(x,w^{(k)})-B_n(x)^\top w^{(k)}]\frac{w^{(k)}}{h_n},
\label{eq:antithetic_correction}\\
    \widehat Z_n^{M}(x)&:=B_n(x)+\frac1K\sum_{k=1}^{K}C_n^{(k)}(x).
\label{eq:antithetic_target}
\end{align}
By Gaussian symmetry,
\[
\E_w\!\left[D_n(x,w)\frac{w}{h_n}\right]=z_n^{\mathrm{tar}}(x),
\]
and therefore
\[
\E[\widehat Z_n^{M}(x)\mid\mathcal G_n^{\mathrm{pre}}] =z_n^{\mathrm{tar}}(x).
\]
If the branch pairs are conditionally independent, the finite-branch mean-square fluctuation satisfies
\begin{equation}
\begin{aligned}
&\E\!\left[
    \left|\widehat Z_n^{M}(x)-z_n^{\mathrm{tar}}(x)\right|^2 \,\middle|\,\mathcal G_n^{\mathrm{pre}}
    \right] =\frac{1}{K}\,\E\!\left[
    \left|C_n^{(1)}(x)-\bigl(z_n^{\mathrm{tar}}(x)-B_n(x)\bigr)\right|^2
    \,\middle|\,\mathcal G_n^{\mathrm{pre}}
\right].
\end{aligned}
\label{eq:antithetic_mc_variance}
\end{equation}
Exploiting Gaussian symmetry, antithetic pairing extracts the odd response by canceling the even component, which has zero population contribution under the odd Brownian weight $w/h_n$, thereby reducing finite-branch variance without changing the conditional control target \cite{Glasserman2004}.

% ============================================================================
% ============================================================================

\section{The Layerwise Backward Deep-Control BSDE Algorithm}
\label{sec:method}

Section~\ref{sec:preliminaries} established the relationship among the discrete ideal control, the frozen population control target, and the finite-branch Brownian labels. Building on these objects, we now construct the layerwise backward Deep-Control BSDE (\DCBSDE{}) algorithm, organized as a control-first recursion from the terminal condition.

% ============================================================================
\subsection{Algorithmic Components and Single-Layer Backward Dependencies}
\label{subsec:overview}

Set $U_N^{\mathrm{snap}}=g$. The superscript ``$\mathrm{snap}$'' denotes a self-contained immutable function snapshot comprising the network parameters, state preprocessing, active feature transformations, structural projections, and associated interface configuration. Upon entering level $n$, $U_{n+1}^{\mathrm{snap}}$ and any future-control snapshots required by the discretization have already been fixed and are treated as immutable. The frozen successor value defines the population control target and its finite-branch labels, from which the current control is regressed. The trained control, together with the conditional mean of the successor value, then determines the implicit value target. After within-layer coordination, the resulting pair is stored as $(U_n^{\mathrm{snap}},Z_n^{\mathrm{snap}})$. Consequently, subsequent updates to shared parameters affect only the current trainable representation and cannot alter the backward targets defined by previously completed time levels.

The method consists of three groups of components: the discrete backward backbone, generic training and within-layer coordination mechanisms, and a unified equation-informed structural screening interface. Table~\ref{tab:method_components} summarizes the roles of the three groups.

\begin{table}[!htbp]
\centering
\caption{Three component groups of \DCBSDE{}}
\label{tab:method_components}
\small
\begin{tabularx}{\textwidth}{@{}p{3.2cm}X X@{}}
\toprule
Component group & Main objects & Role in the computational pipeline \\
\midrule
Discrete backward backbone
&
Frozen successor functions, Brownian control targets, the implicit value operator, and immutable snapshots
&
Defines the population control and value targets at the current level and specifies the layerwise backward dependency structure
\\

Generic training and coordination mechanisms
&
Antithetic pairing, a shared control trunk with time-local residual heads, replay, implicit fixed-point solves, compatibility regularization, control refitting, and value polishing
&
Reduces finite-sample fluctuations, stabilizes shared training across time levels, and coordinates the independently learned $U_n$ and $Z_n$
\\
Unified equation-informed structural screening interface
&
Verified symmetry projections, automatically screened equation-informed value features, and validated linear-response baselines with automatic-differentiation or zero fallbacks
&
Screens and activates prespecified structural candidates using component-specific validity and conditioning checks, freezes the retained configuration before training, and invokes generic fallbacks when a candidate is unavailable or rejected
\\
\bottomrule
\end{tabularx}
\end{table}

\begin{figure}[!tbp]
\centering
\resizebox{\textwidth}{!}{%
\begin{tikzpicture}[
  >=Latex,
  node distance=8mm and 10mm,
  core/.style={draw,very thick,rounded corners,align=center,
               minimum height=9mm,text width=27mm,fill=blue!7},
  generic/.style={draw,rounded corners,align=center,
                  minimum height=8mm,text width=29mm,fill=green!7},
  structure/.style={draw,dashed,rounded corners,align=center,
                    minimum height=8mm,text width=31mm,fill=orange!8},
  line/.style={->,thick},
  gline/.style={->,thick},
  sline/.style={->,thick,dashed}
]

\node[core] (snap)
  {Use frozen successor snapshot\\$U_{n+1}^{\mathrm{snap}}$};

\node[core,right=of snap] (ztar)
  {Generate finite-branch\\control labels $\widehat Z_n^M$};

\node[core,right=of ztar] (zfit)
  {Control regression\\$Z_n^{\mathrm{init}}$};

\node[core,right=of zfit] (ytar)
  {Construct implicit value target\\$\widehat Y_n^{\mathrm{init}}$};

\node[core,right=of ytar] (ufit)
  {Value regression\\$U_n^{\mathrm{init}}$};

\node[core,right=of ufit] (freeze)
  {Coordinate and store snapshot\\
   $(U_n^{\mathrm{snap}},Z_n^{\mathrm{snap}})$};

\node[structure,above=of ztar] (zstruct)
  {Validated control enhancements\\
   response baseline and\\
   equivariant control projection};

\node[structure,above=of ufit] (ustruct)
  {Screened value features\\
 and verified invariant projection};

\node[generic,below=of ztar] (anti)
  {Antithetic pairing, label refreshing,\\
   and variance estimation};

\node[generic,below=of zfit] (shared)
  {Shared trunk, time-local head,\\
   and replay distillation};

\node[generic,below=of ytar] (fp)
  {Conditional-mean estimation\\
   and implicit fixed-point solve};

\node[generic,below=of ufit] (compat)
  {Value--control compatibility\\
   regularization};

\node[generic,below=of freeze] (corr)
  {$Z$ refitting, value polishing,\\
   and constant-bias calibration};

\draw[line] (snap) -- (ztar);
\draw[line] (ztar) -- (zfit);
\draw[line] (zfit) -- (ytar);
\draw[line] (ytar) -- (ufit);
\draw[line] (ufit) -- (freeze);

\draw[sline] (zstruct) -- (ztar);
\draw[sline] (ustruct) -- (ufit);

\draw[gline] (anti) -- (ztar);
\draw[gline] (shared) -- (zfit);
\draw[gline] (fp) -- (ytar);
\draw[gline] (compat) -- (ufit);
\draw[gline] (corr) -- (freeze);

\end{tikzpicture}%
}
\caption{Within-layer computational dependencies at time level $n$}
\label{Fig1}
\end{figure}

Figure~\ref{Fig1} illustrates the computational dependencies at time level $n$. The middle row represents the discrete algorithmic backbone, the lower green row contains the generic training and within-layer coordination mechanisms, and the upper dashed orange row contains candidate structural enhancements. The remainder of this section follows the actual order of computation shown in the figure.

% ============================================================================
\subsection{Pretraining Setup and Interface Conventions}
\label{subsec:training_setup}

The outer states determine the design measures for function regression, whereas the inner Brownian branches approximate conditional means and Brownian-weighted conditional moments. Training, replay, structural verification, bias calibration, validation, and testing data serve distinct purposes. The control labels, initial value targets, value-polishing targets, and bias-calibration targets use mutually independent inner random streams so that the Monte Carlo fluctuations of the different stages can be assessed separately.

Let $\nu_n^{\mathrm{tr}}$, $\nu_n^{\mathrm{rep}}$, $\nu_n^{\mathrm{bias}}$, and $\nu_n^{\mathrm{val}}$ denote the training, replay, bias-calibration, and validation state distributions, respectively. The default training design is $\operatorname{Law}(\bar X_n)$. When stronger identification is needed near the deterministic initial condition or in low-probability regions, locally perturbed states may be mixed with on-path states. The mixture proportion, perturbation scale, and state preprocessing are fixed before training. In particular, when $\bar X_0=x_0$ is deterministic and the compatibility term $\sigma^\top\nabla U_0$ is used at level zero, local perturbations provide the neighborhood coverage required to compute spatial gradients. If only $U_0(x_0)$ and $Z_0(x_0)$ are required, one may simply take $\nu_0^{\mathrm{tr}}=\delta_{x_0}$.

Before backward training, the unified equation-informed structural screening interface evaluates four optional components: a value projection $\mathcal P_{U,n}$, a control projection $\mathcal P_{Z,n}^{Q}$, value features $\bm\varphi_n$, and an explicit response baseline $B_n^{\mathrm{prob}}$. Candidates are retained only after the corresponding validity or conditioning checks. When the corresponding structure is unavailable, we set
\[
    \mathcal P_{U,n}=I,\qquad
    \mathcal P_{Z,n}^{Q}=I,\qquad
    \bm\varphi_n=\bm\varphi_n^{\mathrm{gen}}\ \text{or}\ \varnothing,
    \qquad B_n^{\mathrm{prob}}=\varnothing.
\]
The retained configuration is then fixed for all training stages. These components affect only the finite-sample representation or label variance; they do not alter the population control target, the implicit value operator, or the layerwise backward order.

\medskip 
\noindent\textit{Optional symmetry projections.}
Let $\mathscr R_cx:=2c-x$. For a candidate center $c\in\R^d$ and an orthogonal involution
$Q_n\in\R^{m\times m}$, suppose that
\[
\begin{aligned}
g(\mathscr R_cx) =g(x),\quad  
\Phi_n(\mathscr R_cx,Q_nw) =\mathscr R_c\Phi_n(x,w),\quad
f(t_n,\mathscr R_cx,y,Q_nz) =f(t_n,x,y,z),
\end{aligned}
\]
and that the discrete recursion has a unique solution. Then
\[
u_n^\pi(\mathscr R_cx)=u_n^\pi(x),\qquad
z_n^\pi(\mathscr R_cx)=Q_nz_n^\pi(x).
\]
The associated projections are
\[
\begin{aligned}
(\mathcal P_{U,n}F)(x) :=\frac{F(x)+F(\mathscr R_cx)}{2}, \quad
(\mathcal P_{Z,n}^{Q}F_Z)(x) :=\frac{F_Z(x)+Q_nF_Z(\mathscr R_cx)}{2}.
\end{aligned}
\]
They impose the corresponding invariance or equivariance on the value model, the direct control network, and the gradient-induced control \cite{PunyEtAl2022}.

% ============================================================================
\subsection{Brownian-Weighted Control Labels and Control Regression}
\label{subsec:z_estimation}

% ============================================================================
\subsubsection{Frozen successor value and the current control target}

At time level $n$, we first freeze $U_{n+1}^{\mathrm{snap}}$. By~\eqref{eq:frozen_z_target}, the population target for the current control is
$z_n^{\mathrm{tar}}=\mathcal M_nU_{n+1}^{\mathrm{snap}}$, which depends only on the successor snapshot and the one-step transition. Hence, $Z_n$ can be trained before $U_n$. This ordering separates the control-regression error associated with the Brownian-weighted conditional moment from the value-regression error associated with the successor conditional mean and the implicit generator relation.

% ============================================================================
\subsubsection{Frozen linear-response baselines and antithetic labels}
To stabilize the finite-branch control labels, we combine a frozen linear-response baseline with antithetic pairing. The baseline subtracts an approximation to the dominant first-order response, while antithetic pairing cancels the even component of the successor response.

Let $\sg(\cdot)$ denote the stop-gradient operator. The generic local linear-response baseline $B_n^{\mathrm{lin}}$ is defined in~\eqref{eq:local_linear_baseline}. For $0\leq n\leq N-2$, a frozen control from the adjacent future layer may alternatively provide the response scale
\begin{equation}
B_n^{\mathrm{fut}}(x):=\sg\left[Z_{n+1}^{\mathrm{snap}}\left(\Phi_n(x,0)\right)\right].
\label{eq:future_control_baseline}
\end{equation}
Here, $B_n^{\mathrm{fut}}$ approximates the neighboring response shape; it is not used as the current control. At the terminal-adjacent level $n=N-1$, this future-control baseline is unavailable, and the terminal-gradient baseline in~\eqref{eq:terminal_baseline} may be used instead. The response interface may also supply a validated analytic candidate $B_n^{\mathrm{prob}}$. Because the external compensation and the internal linear term in~\eqref{eq:antithetic_target} form a matched pair, the selected baseline changes finite-branch fluctuations without altering the population target $\mathcal M_nU_{n+1}^{\mathrm{snap}}$.

Let the total successor-evaluation budget per outer state be an even integer $M$, and set $K=M/2\geq2$. Draw independent increments
$w^{(k)}\sim\mathcal N(0,h_nI_m)$, evaluate the positive and negative branches, and construct $\widehat Z_n^M(x)$ according to~\eqref{eq:odd_response}--\eqref{eq:antithetic_target}. Since the Gaussian law is symmetric and the Brownian weight is odd, the even response component has zero population contribution; antithetic pairing cancels it within each sample pair, thereby reducing finite-branch variance without changing the conditional target $z_n^{\mathrm{tar}}$.

Define
\[
\overline C_n(x):=\frac1K\sum_{k=1}^{K}C_n^{(k)}(x).
\]
If the sample pairs are conditionally independent and identically distributed given the frozen information, an unbiased estimator of the conditional covariance trace of the label mean is
\begin{equation}
\widehat v_n^M(x):=\frac{1}{K(K-1)}\sum_{k=1}^{K}\left|C_n^{(k)}(x)-\overline C_n(x)\right|^2.
\label{eq:target_variance}
\end{equation}
This quantity monitors state-dependent label fluctuations and is later used to assess the finite-sample reliability of the Brownian labels during control refitting.

% ============================================================================
\subsubsection{Shared temporal trunk, local residual heads, and replay distillation}

The shared trunk captures time--state structure common across time levels, while the local residual heads represent level-specific deviations. This design balances sample and parameter reuse with per-level flexibility, avoiding both the limited expressiveness of a fully shared model and the loss of temporal coupling associated with independent networks. We therefore parameterize the control as
\[
\begin{aligned}
    \widetilde Z_n^{\mathrm{train}}(x) =Z_{\theta^{\mathrm{sh}}}^{\mathrm{sh}}(t_n,x)+\eta_{Z,n}\mathcal N_{\psi_n}^{Z}(x),\quad
    Z_n^{\mathrm{train}}(x) =\mathcal P_{Z,n}^{Q}\widetilde Z_n^{\mathrm{train}}(x).
\end{aligned}
\]
Here, $Z_{\theta^{\mathrm{sh}}}^{\mathrm{sh}}$ learns a representation shared across time levels, $\mathcal N_{\psi_n}^{Z}$ learns the residual specific to level $n$, $\eta_{Z,n}\geq0$ controls the scale of the local component, and $\mathcal P_{Z,n}^{Q}$ imposes a verified control equivariance on the final output. The local head is initialized with zero output so that training at the current level starts from the shared representation and gradually absorbs level-specific information.

When time level $n$ is trained, the shared parameters continue to change. Future-layer functions reconstructed from the live shared trunk may therefore drift away from the snapshots stored at previously completed time levels. To stabilize backward training, let $\mathcal I_n\subset\{n+1,\ldots,N-1\}$ be the replay-layer set and define
\[
Z_{j\leftarrow n}^{\mathrm{live}}(x):=\mathcal P_{Z,j}^{Q}\!\left[
Z_{\theta^{\mathrm{sh}}}^{\mathrm{sh}}(t_j,x)+\eta_{Z,j}\mathcal N_{\psi_j^{\mathrm{snap}}}^{Z}(x)
\right].
\]
The replay-distillation loss is
\[
\mathcal L_{\mathrm{rep},n}:=\sum_{j\in\mathcal I_n}\omega_{n,j} \E_{x\sim\nu_j^{\mathrm{rep}}}
\left[ \left|Z_{j\leftarrow n}^{\mathrm{live}}(x)-\sg\!\left(Z_j^{\mathrm{snap}}(x)\right)\right|^2\right],
\]
where $\omega_{n,j}\geq0$. This loss constrains the live reconstruction of the shared trunk on completed time levels and thereby suppresses cross-layer representational drift. The frozen snapshots themselves remain unchanged and continue to define the functional targets in the backward recursion.

The initial control regression minimizes
\begin{equation}
\begin{aligned}
    \mathcal L_{Z,n}^{\mathrm{init}}=\mathcal L_{Z,n}^{M} +\lambda_{\mathrm{rep},n}\mathcal L_{\mathrm{rep},n},\quad
    \mathcal L_{Z,n}^{M}=\E_{x\sim\nu_n^{\mathrm{tr}}}
      \left[\left|Z_n^{\mathrm{train}}(x)-\sg\!\left(\widehat Z_n^M(x)\right)\right|^2 \right].
\end{aligned}
\label{eq:z_loss}
\end{equation}
Here $\lambda_{\mathrm{rep},n}\geq0$. Control labels are refreshed at a fixed frequency so that the network continues to observe new samples of the conditional moment. At the end of this stage, the current composite control function is deep-copied as $Z_n^{\mathrm{init}}$, which provides a fixed control input for the implicit value target.

% ============================================================================
\subsection{A Unified Implicit Value Operator and Value Regression}
\label{subsec:u_estimation}

% ============================================================================
\subsubsection{Population targets, finite-branch targets, and fixed-point solution}

After the control regression, the current value is determined jointly by the conditional mean of the successor value and $Z_n^{\mathrm{init}}$. Given a successor function $V$ and a current control function $\bar Z$, define the two-input implicit value operator $\mathfrak U_n[V,\bar Z]$ by
\[
\begin{aligned}
    \mathfrak U_n[V,\bar Z](x)&=(P_nV)(x)+h_nf\!\left(t_n,x,\mathfrak U_n[V,\bar Z](x),\bar Z(x)\right).
\end{aligned}
\]
When $h_nL_y<1$, this scalar implicit equation is contractive in the value variable and therefore admits a unique solution. The population target for the first value regression is $\mathfrak U_n\bigl[U_{n+1}^{\mathrm{snap}},Z_n^{\mathrm{init}}\bigr].$
The two-input notation allows the initial value regression and the subsequent value polishing to use the same operator: the successor value function governs the conditional-mean term, whereas the current control function enters the generator.

Given $K_U$ Brownian increments $\boldsymbol w=(w^{(1)},\ldots,w^{(K_U)})$, define the finite-branch operator $\widehat{\mathfrak U}_{n,K_U}[V,\bar Z;\boldsymbol w]$ as the unique solution of
\begin{equation}
\begin{aligned}
    \widehat{\mathfrak U}_{n,K_U}[V,\bar Z;\boldsymbol w](x)
    =\frac1{K_U}\sum_{\ell=1}^{K_U}V\!\left(\Phi_n(x,w^{(\ell)})\right)+h_nf\!\left(
      t_n,x,\widehat{\mathfrak U}_{n,K_U}[V,\bar Z;\boldsymbol w](x), \bar Z(x)\right).
\end{aligned}
\label{eq:value_target}
\end{equation}
The initial value label is
\[
    \widehat Y_n^{\mathrm{init}}(x)
    :=\widehat{\mathfrak U}_{n,K_U}
      \bigl[U_{n+1}^{\mathrm{snap}},Z_n^{\mathrm{init}};
      \boldsymbol w_{n,U}\bigr](x),
\]
where $\boldsymbol w_{n,U}$ is independent of the branches used for the control labels. Because of the implicit nonlinearity, the finite-branch value label is generally not exactly unbiased. Its deviation from the population value target is nevertheless controlled by the sampling error of the conditional mean:
\begin{equation}
\begin{aligned}
&\left|\widehat{\mathfrak U}_{n,K_U}[V,\bar Z;\boldsymbol w](x)-\mathfrak U_n[V,\bar Z](x)\right|\\
&\qquad\leq\frac{1}{1-h_nL_y}\left|\frac1{K_U}\sum_{\ell=1}^{K_U}V\!\left(\Phi_n(x,w^{(\ell)})\right)
      -(P_nV)(x)\right|.
\end{aligned}
\label{eq:value_target_mc_stability}
\end{equation}
Increasing $K_U$ therefore improves the Monte Carlo accuracy of the successor conditional mean directly, and this improvement is transferred to the implicit value target through the factor $(1-h_nL_y)^{-1}$.

Let
\[
    \widehat m_{n,K_U}[V](x):=\frac1{K_U}\sum_{\ell=1}^{K_U}V\!\left(\Phi_n(x,w^{(\ell)})\right).
\]
Starting from $y^{(0)}=\widehat m_{n,K_U}[V](x)$, solve the finite-branch implicit equation through the Picard iteration
\begin{equation}
    y^{(r+1)} =\widehat m_{n,K_U}[V](x) +h_nf\!\left(t_n,x,y^{(r)},\bar Z(x)\right).
\label{eq:picard_iteration}
\end{equation}
With a fixed maximum iteration count $J_{\mathrm{fp}}$ and tolerance $\varepsilon_{\mathrm{fp}}$, convergence is declared when the fixed-point residual $r_{\mathrm{fp}}:=\left|y^{(r)}-\widehat m_{n,K_U}[V](x)-h_nf(t_n,x,y^{(r)},\bar Z(x))\right|$ satisfies $r_{\mathrm{fp}}\leq\varepsilon_{\mathrm{fp}}(1+|y^{(r)}|)$.

% ============================================================================
\subsubsection{Explicit-baseline--neural-residual value model}

The value model separates dominant low-order variation from the remaining nonlinear correction:
\[
\begin{aligned}
    \widetilde U_n^{\mathrm{train}}(x) =\beta_{n,0}+\overline{\bm\varphi}_n(x)^\top\beta_n+s_{r,n}\mathcal N_{\phi_n}^{U}(x),\quad
    U_n^{\mathrm{train}}(x) =\mathcal P_{U,n}\widetilde U_n^{\mathrm{train}}(x).
\end{aligned}
\]
Here, $\overline{\bm\varphi}_n$ is a feature vector after fixed preprocessing, $\beta_{n,0}$ is an intercept, $\beta_n$ contains explicit feature coefficients, $s_{r,n}$ is a stop-gradient scale estimated from the training residual, $\mathcal N_{\phi_n}^{U}$ is a neural residual initialized with zero output, and $\mathcal P_{U,n}$ imposes a verified value invariance.

The value model follows the generic representation ``explicit baseline $+$ neural residual.'' The explicit baseline captures the dominant low-dimensional variation, while the neural residual represents the remaining nonlinear structure. Validated equation-informed features and value projections may be incorporated through the common structural interface; otherwise, generic or empty features and the identity projection are used. By reducing the residual scale, this decomposition may also improve the stability of the automatically differentiated gradient used in control coordination.

Given training targets $\{(x_i,\widehat Y_n^{\mathrm{init}}(x_i))\}_{i=1}^{N_n^{\mathrm{tr}}}$, the ridge baseline is determined by
\[
\begin{aligned}
  (\widehat\beta_{n,0},\widehat\beta_n)\in\operatorname*{argmin}_{b_0,b}
  \Bigg\{&\frac1{N_n^{\mathrm{tr}}}\sum_{i=1}^{N_n^{\mathrm{tr}}}\left|
    b_0+\overline{\bm\varphi}_n(x_i)^\top b-\widehat Y_n^{\mathrm{init}}(x_i)\right|^2
  +\lambda_{\mathrm{ridge},n}|b|^2\Bigg\},
\end{aligned}
\]
where the intercept is not penalized. After fitting, $(\widehat\beta_{n,0},\widehat\beta_n)$ is frozen, and the neural residual is trained separately so that the explicit baseline and nonlinear correction perform distinct approximation tasks.

% ============================================================================
\subsubsection{Value--control compatibility regularization}

The control function is supervised directly by a Brownian conditional moment, whereas the value function is supervised directly by an implicit conditional-mean target. At the population level these two targets satisfy the Markovian relation. Finite branch counts and finite function classes, however, may cause their learned approximations to disagree. Define the current control induced by a value function as
\[
\mathcal G_n[V](x):=\mathcal P_{Z,n}^{Q}\left[\sigma(t_n,x)^\top\nabla_xV(x)\right].
\]
The gradient is taken with respect to the physical state $x$. If the network input contains standardization, logarithmic transformations, or other state transformations, automatic differentiation applies the complete chain rule and returns $\nabla_xV$.

After the ridge baseline has been frozen, the neural residual is trained by minimizing
\begin{equation}
\begin{aligned}
    \mathcal L_{U,n}^{\mathrm{init}}
    &=\E_{x\sim\nu_n^{\mathrm{tr}}}
      \left[\left|U_n^{\mathrm{train}}(x)-\sg\!\left(\widehat Y_n^{\mathrm{init}}(x)\right)\right|^2
      \right]\\
    &\quad+\lambda_{UZ,n}\E_{x\sim\nu_n^{\mathrm{tr}}}
      \left[\left|\mathcal G_n[U_n^{\mathrm{train}}](x)-\sg\!\left(Z_n^{\mathrm{init}}(x)\right)\right|^2
      \right],
\end{aligned}
\label{eq:u_loss}
\end{equation}
which yields $U_n^{\mathrm{init}}$. The first term fits the implicit value target. The second coordinates the value gradient and current control without replacing the direct Brownian supervision of the control. The parameter $\lambda_{UZ,n}$ balances the two objectives. Setting $\lambda_{UZ,n}=0$ recovers value regression without a within-layer compatibility constraint.

% ============================================================================
\subsection{Within-Layer \texorpdfstring{$U$--$Z$}{U--Z} Coordination, Value Calibration, and Snapshot Freezing}
\label{subsec:uz_correction}

After the initial control and value are learned from Brownian conditional-moment and implicit conditional-mean targets, respectively, within-layer coordination combines direct Brownian supervision with the smooth value-gradient proxy through variance-aware control refitting, followed by value polishing under the final control and constant-level calibration.

% ============================================================================
\subsubsection{Variance-aware control refitting}

Construct the stop-gradient proxy from the initial value function:
\[
Z_n^{\nabla U}(x):=\sg\!\left(\mathcal G_n[U_n^{\mathrm{init}}](x)\right).
\]
The finite-sample Monte Carlo fluctuation of $\widehat Z_n^M$ can be estimated directly, whereas $Z_n^{\nabla U}$ inherits the approximation and differentiation errors of the value network. We define a gating weight from the relative fluctuation of the Brownian label:
\[
\gamma_n(x):=\operatorname{clip}\!\left(
      \frac{\widehat v_n^M(x)}{\widehat v_n^M(x)+\tau_\gamma^2(|\widehat Z_n^M(x)|+\varepsilon_Z)^2},
      \gamma_{\min},\gamma_{\max}
    \right),
\]
where  $\operatorname{clip}(r,a,b):=\min\{b,\max\{a,r\}\},$ $\tau_\gamma>0$, $\varepsilon_Z>0$, and $0\leq\gamma_{\min}\leq\gamma_{\max}\leq1$. When the relative fluctuation of the Brownian label is large, $\gamma_n(x)$ increases and the refitting stage gives more weight to the value-gradient proxy. When the label is stable, direct Brownian supervision receives greater weight.

Initialize the control function from $Z_n^{\mathrm{init}}$, freeze $U_n^{\mathrm{init}}$, and minimize
\begin{equation}
\begin{aligned}
    \mathcal L_{Z,n}^{\mathrm{corr}}
    &=\E_{x\sim\nu_n^{\mathrm{tr}}}
      \Bigl[ (1-\gamma_n(x))\left|Z_n^{\mathrm{train}}(x)-\sg\!\left(\widehat Z_n^M(x)\right)\right|^2\\[-1mm]
    &\hspace{16mm} +\gamma_n(x)\left|Z_n^{\mathrm{train}}(x)-Z_n^{\nabla U}(x)\right|^2
      \Bigr]+\lambda_{\mathrm{rep},n}^{\mathrm{corr}}\mathcal L_{\mathrm{rep},n},
\end{aligned}
\label{eq:z_refit}
\end{equation}
to obtain $Z_n^{\mathrm{final}}$. This stage performs a state-dependent soft fusion of the direct control label and the value-gradient proxy, while replay continues to preserve the shared representation across completed time levels.

% ============================================================================
\subsubsection{Value polishing under the final control}

The first value target corresponds to $Z_n^{\mathrm{init}}$. After control refitting has produced $Z_n^{\mathrm{final}}$, the value target is rebuilt so that the final stored value function and the control function actually used by the subsequent backward recursion correspond to the same two-input implicit operator. Using fresh independent branches, define
\begin{equation}
    \widehat Y_n^{\mathrm{corr}}(x):=\widehat{\mathfrak U}_{n,K_U^{\mathrm{corr}}}
      \bigl[U_{n+1}^{\mathrm{snap}},Z_n^{\mathrm{final}};\boldsymbol w_{n,U}^{\mathrm{corr}}
      \bigr](x).
\label{eq:value_correction_target}
\end{equation}
Keep the feature set, preprocessing, and ridge coefficients fixed, and update only the neural residual for a short budget by minimizing
\begin{equation}
\begin{aligned}
    \mathcal L_{U,n}^{\mathrm{corr}}&=\E_{x\sim\nu_n^{\mathrm{tr}}}
      \left[
        \left|U_n^{\mathrm{train}}(x)-\sg\!\left(\widehat Y_n^{\mathrm{corr}}(x)\right)\right|^2
      \right]\\
    &\quad+\lambda_{UZ,n}^{\mathrm{corr}}\E_{x\sim\nu_n^{\mathrm{tr}}}
      \left[
        \left|\mathcal G_n[U_n^{\mathrm{train}}](x)-\sg\!\left(Z_n^{\mathrm{final}}(x)\right)\right|^2
      \right].
\end{aligned}
\label{eq:u_correction_loss}
\end{equation}
Denote the resulting function by $U_n^{\mathrm{corr}}$. If control refitting is disabled, set $Z_n^{\mathrm{final}}=Z_n^{\mathrm{init}}$, and the value-polishing stage may be skipped. A short polishing stage may still be performed under the same control when one wishes only to continue optimizing the current implicit target.

% ============================================================================
\subsubsection{Constant value calibration and immutable snapshots}

Finite-branch conditional means, the intercept of the explicit baseline, and short-budget optimization may leave an approximately constant level bias in the value function. Constant calibration changes only the function level and leaves $\nabla_xU_n$ unchanged. It therefore does not directly alter the control induced by the value gradient. Let $\mathcal D_n^{\mathrm{bias}}=\{x_{n,i}^{\mathrm{bias}}\}_{i=1}^{N_n^{\mathrm{bias}}}$ be an independent set of bias-calibration states. For each state, construct a value target $\widehat Y_{n,i}^{\mathrm{bias}}$ of the same implicit form using fresh Brownian branches and $Z_n^{\mathrm{final}}$, and estimate

\begin{equation}
\begin{aligned}
a_n^\star&:= \operatorname*{arg\,min}_{a \in \mathbb{R}} \frac{1}{N_n^{\mathrm{bias}}}\sum_{i=1}^{N_n^{\mathrm{bias}}}\Bigl|U_n^{\mathrm{corr}}\bigl(x_{n,i}^{\mathrm{bias}}\bigr)
        + a- \widehat{Y}_{n,i}^{\mathrm{bias}}\Bigr|^2\\
&= \frac{1}{N_n^{\mathrm{bias}}}\sum_{i=1}^{N_n^{\mathrm{bias}}}
    \Bigl[\widehat{Y}_{n,i}^{\mathrm{bias}}- U_n^{\mathrm{corr}}\bigl(x_{n,i}^{\mathrm{bias}}\bigr)\Bigr],
\\[2mm]
U_n^{\mathrm{final}}(x)&:= U_n^{\mathrm{corr}}(x) + a_n^\star .
\end{aligned}
\label{eq:constant_bias_correction}
\end{equation}

When the function class or symmetry type requires a zero constant component, set $a_n^\star=0$.

After the current layer has been completed, deep-copy the network parameters, preprocessing, feature transformations, and projection operators of $U_n^{\mathrm{final}}$ and $Z_n^{\mathrm{final}}$ to obtain $(U_n^{\mathrm{snap}},Z_n^{\mathrm{snap}})$, which are used at the next backward time level.

% ============================================================================
\subsection{Complete Algorithm}
\label{subsec:algorithm}

The inputs are the PDE data $(\mu,\sigma,f,g)$, the one-step transition maps $\{\Phi_n\}_{n=0}^{N-1}$, the time grid $\pi$, outer-state design measures, inner-branch budgets, a common adaptive structural interface, and network and optimization parameters. The outputs are the frozen value family $\{U_n^{\mathrm{snap}}\}_{n=0}^{N}$ and the frozen control family $\{Z_n^{\mathrm{snap}}\}_{n=0}^{N-1}$, with $U_N^{\mathrm{snap}}=g$. Algorithm~\ref{alg:dcbsde} presents the complete layerwise backward procedure.

\begin{algorithm}[H]
\caption{Layerwise backward training of \DCBSDE{}}
\label{alg:dcbsde}
\scriptsize
\begin{algorithmic}[1]
\setlength{\itemsep}{0pt}

\Require PDE data $(\mu,\sigma,f,g)$; one-step transition maps $\{\Phi_n\}_{n=0}^{N-1}$; time grid $\pi=\{t_n\}_{n=0}^{N}$;
state-design measures $\{\nu_n^{\mathrm{tr}},\nu_n^{\mathrm{rep}}, \nu_n^{\mathrm{bias}},\nu_n^{\mathrm{val}}\}$;
control branch budget $M=2K$; value branch budgets $K_U$ and $K_U^{\mathrm{corr}}$; a unified equation-informed structural screening interface;
network architectures; 
and optimization parameters.

\Ensure Frozen function families $\{U_n^{\mathrm{snap}}\}_{n=0}^{N}$ and $\{Z_n^{\mathrm{snap}}\}_{n=0}^{N-1}$.

\State Fix the outer-state designs, inner random streams, label-refresh frequencies, fixed-point rules, and stopping criteria for all training stages.

\State Through the unified equation-informed structural screening interface, select and configure the candidate value projections $\mathcal P_{U,n}$, control projections $\mathcal P_{Z,n}^{Q}$, value features $\bm\varphi_n$, and response baselines $B_n^{\mathrm{prob}}$.
\Comment{Adaptive structural configuration}

\State For candidates that are unavailable or not retained, set $\mathcal P_{U,n}=I$ and $\mathcal P_{Z,n}^{Q}=I$, and use generic or empty value features together with an automatic-differentiation or zero response baseline.

\State Initialize the terminal snapshot $U_N^{\mathrm{snap}}\gets g$, the shared control trunk, the time-local heads, and the replay buffer.

\For{$n=N-1,N-2,\ldots,0$}

    \State Use the frozen successor snapshot $U_{n+1}^{\mathrm{snap}}$ and the admissible future-control snapshots.
    \Comment{Discrete backward backbone}

     \State Select $B_n\in\{B_n^{\mathrm{prob}},B_n^{\mathrm{lin}},B_n^{\mathrm{fut}},0\}$ for $n\leq N-2$, and $B_{N-1}\in\{B_{N-1}^{\mathrm{prob}},B_{N-1}^{\mathrm{grad}},B_{N-1}^{\mathrm{lin}},0\}$ for $n=N-1$.

    \State For $x\sim\nu_n^{\mathrm{tr}}$, draw $K$ independent Brownian increments, pair each increment with its negative, and construct $(\widehat Z_n^M(x),\widehat v_n^M(x))$ using~\eqref{eq:antithetic_target} and~\eqref{eq:target_variance}.
    \Comment{Finite-branch control labels}

    \State Minimize~\eqref{eq:z_loss} using the shared temporal trunk, time-local residual head, control projection, and replay distillation.

    \State Store a deep copy of the trained current-layer control as $Z_n^{\mathrm{init}}$.
    \Comment{Initial control regression}

    \State Using Brownian branches independent of those used for the control labels, construct the finite-branch implicit value equation
    in~\eqref{eq:value_target}.

    \State Solve the implicit equation by~\eqref{eq:picard_iteration} to obtain the initial within-layer value labels
    $\widehat Y_n^{\mathrm{init}}$.

    \State Fit and freeze the explicit value baseline, and then minimize~\eqref{eq:u_loss} to obtain $U_n^{\mathrm{init}}$.
    \Comment{Initial value regression}

    \If{control refitting is enabled}

        \State Compute the state-dependent gate $\gamma_n$ from $\widehat v_n^M$, and construct the value-gradient proxy $Z_n^{\nabla U}$ from $U_n^{\mathrm{init}}$.

        \State Minimize~\eqref{eq:z_refit} to obtain $Z_n^{\mathrm{final}}$.

    \Else

        \State Set $Z_n^{\mathrm{final}}\gets Z_n^{\mathrm{init}}$.

    \EndIf

    \If{value polishing under the final control is enabled}

        \State Using fresh independent branches, reconstruct the value target under $Z_n^{\mathrm{final}}$
        according to~\eqref{eq:value_correction_target}.

        \State Minimize~\eqref{eq:u_correction_loss} to obtain $U_n^{\mathrm{corr}}$.

    \Else

        \State Set $U_n^{\mathrm{corr}}\gets U_n^{\mathrm{init}}$.

    \EndIf

    \If{constant-bias calibration is enabled and the function class permits a nonzero constant component}

        \State Estimate $a_n^\star$ on independent calibration states using~\eqref{eq:constant_bias_correction}.

        \State Set
        $U_n^{\mathrm{final}}\gets U_n^{\mathrm{corr}}+a_n^\star$.

    \Else

        \State Set $a_n^\star\gets0$ and $U_n^{\mathrm{final}}\gets U_n^{\mathrm{corr}}$.

    \EndIf

    \State Deep-copy the final value and control functions together with their preprocessing, feature transformations, and projection operators:
    \[
        U_n^{\mathrm{snap}}\gets\operatorname{DeepCopy}\bigl(U_n^{\mathrm{final}}\bigr),\qquad
        Z_n^{\mathrm{snap}}\gets\operatorname{DeepCopy}\bigl(Z_n^{\mathrm{final}}\bigr).
    \]

    \State Update the replay buffer and continue to the next backward time level.

\EndFor

\State \Return
$\{U_n^{\mathrm{snap}}\}_{n=0}^{N}$ and $\{Z_n^{\mathrm{snap}}\}_{n=0}^{N-1}$.

\end{algorithmic}
\end{algorithm}

For a deterministic initial state $x_0$, the algorithm returns $U_0^{\mathrm{snap}}(x_0)$ and $Z_0^{\mathrm{snap}}(x_0)$. Pathwise outputs are obtained by evaluating $\{U_n^{\mathrm{snap}},Z_n^{\mathrm{snap}}\}$ along the corresponding discrete forward states. 
% ============================================================================
\subsection{Error Decomposition, Backward Stability, and Complexity Analysis}
\label{subsec:theoretical_analysis}

% ============================================================================
\subsubsection{Separating discretization error from local algorithmic error}
\label{subsubsec:error_decomposition}

Let $(U_n^\theta,Z_n^\theta)$ denote the frozen functions after all training, coordination, and calibration stages at the current time level, and set
\[
    z_n^\pi=\mathcal M_nu_{n+1}^\pi,\qquad
    z(t_n,x)=\sigma(t_n,x)^\top\nabla_xu(t_n,x).
\]
The control error can be decomposed as
\begin{equation}
\begin{aligned}
Z_n^\theta-z(t_n,\cdot)={}
\underbrace{\left(Z_n^\theta-\mathcal M_nU_{n+1}^\theta\right)}_{\text{current-layer control residual}}
+\underbrace{\mathcal M_n\left(U_{n+1}^\theta-u_{n+1}^\pi\right)}_{\text{propagated successor-value error}} 
+\underbrace{\left(z_n^\pi-z(t_n,\cdot)\right)}_{\text{time-discretization error}}.
\end{aligned}
\label{eq:control_error_decomposition}
\end{equation}
Likewise, using the two-input value operator,
\begin{equation}
\begin{aligned}
U_n^\theta-u(t_n,\cdot)={}&\underbrace{\left(
U_n^\theta-\mathfrak U_n[U_{n+1}^\theta,Z_n^\theta]\right)}_{\text{current-layer value residual}} 
+\underbrace{\left(\mathfrak U_n[U_{n+1}^\theta,Z_n^\theta]-\mathfrak U_n[u_{n+1}^\pi,z_n^\pi]
\right)}_{\text{propagated successor-value and control errors}}\\
&+\underbrace{\left(u_n^\pi-u(t_n,\cdot)\right)}_{\text{time-discretization error}}.
\end{aligned}
\label{eq:value_error_decomposition}
\end{equation}
Here $u_n^\pi=\mathfrak U_n[u_{n+1}^\pi,z_n^\pi]$. Equations~\eqref{eq:control_error_decomposition}--\eqref{eq:value_error_decomposition} separate three sources of error: the time-discretization error between the continuous problem and the discrete recursion, propagation of successor-function errors through the exact discrete operators, and local residuals generated at the current level by finite branching, function approximation, optimization, and coordination.

% ============================================================================
\subsubsection{Brownian projection contraction and stability of the exact backward backbone}
\label{subsubsec:projection_stability}

Define
\[
    \mathcal Z_n[V]:=\mathcal M_nV,\qquad
    \mathcal U_n[V]:=\mathfrak U_n[V,\mathcal Z_n[V]].
\]
Thus, $\mathcal U_n$ inserts the successor value function into both the conditional-mean term and the ideal discrete control, and is the exact discrete backward operator for the implicit Euler scheme.

\begin{lemma}[Conditional contraction of the Brownian projection]
\label{lem:brownian_projection}
For every square-integrable function $e:\R^d\to\R$,
\begin{equation}
    h_n\left|(\mathcal M_ne)(x)\right|^2\leq(P_n|e|^2)(x)-|(P_ne)(x)|^2.
\label{eq:brownian_projection}
\end{equation}
\end{lemma}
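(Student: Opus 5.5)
The inequality is a Bessel-type bound: it follows from projecting the centered one-step response onto the span of the normalized Brownian coordinates. Fix $x$, let $w\sim\mathcal N(0,h_nI_m)$, and set $\xi:=e(\Phi_n(x,w))$. We may assume $(P_n|e|^2)(x)<\infty$; otherwise the right-hand side is $+\infty$ and there is nothing to prove. Writing $\bar\xi:=(P_ne)(x)=\E_w[\xi]$, the right-hand side of \eqref{eq:brownian_projection} is exactly $\E_w[|\xi-\bar\xi|^2]$, the variance of $\xi$. Since $\E_w[w]=0$, centering does not change the Brownian moment:
\[
(\mathcal M_ne)(x)=\frac1{h_n}\E_w[(\xi-\bar\xi)w].
\]

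Next introduce the normalized increments $g_i:=w_i/\sqrt{h_n}$, $i=1,\dots,m$. These form an orthonormal family in $L^2$ of the Gaussian law, because $\E_w[ww^\top]=h_nI_m$, and each $g_i$ has mean zero. The coefficients of $\xi-\bar\xi$ along them are $c_i:=\E_w[(\xi-\bar\xi)g_i]$, so that
\[
\sqrt{h_n}\,(\mathcal M_ne)(x)=\bigl(c_1,\dots,c_m\bigr)^\top .
\]
Bessel's inequality in $L^2(\mathcal N(0,h_nI_m))$ then gives
\[
h_n|(\mathcal M_ne)(x)|^2=\sum_i c_i^2\le \E_w[|\xi-\bar\xi|^2].
\]
Equivalently, one can expand
\[
0\le \E_w\Bigl[\bigl|\xi-\bar\xi-\textstyle\sum_i c_ig_i\bigr|^2\Bigr]
\]
and use orthonormality, which mirrors the projection residual orthogonality stated in Section~\ref{subsec:projection_stein}. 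This yields \eqref{eq:brownian_projection}.

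The only delicate point is integrability. Square-integrability of $e$ must be read as $(P_n|e|^2)(x)<\infty$; then $\xi\in L^2$, and Cauchy--Schwarz makes $(\mathcal M_ne)(x)$ and $(P_ne)(x)$ well defined and finite. No differentiability of $e$ is needed, so Gaussian integration by parts (\eqref{eq:euler_stein_identity}) is not used, and the argument holds pointwise in $x$ for any transition map $\Phi_n$.
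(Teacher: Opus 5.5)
Your proof is correct and essentially the same as the paper's. Both arguments first center the response using $\E_w[w]=0$, so that the right-hand side becomes the variance $\E_w|\bar e_x(w)|^2$. The paper then applies Cauchy--Schwarz along an arbitrary unit direction $a$ with $\E_w[(a^\top w)^2]=h_n$ and takes the supremum over $a$, which is equivalent to your Bessel inequality for the orthonormal family $w_i/\sqrt{h_n}$.
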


\begin{proof}
Let $\bar e_x(w):=e(\Phi_n(x,w))-(P_ne)(x)$. Since $\E_w[w]=0$,
\[
    (\mathcal M_ne)(x)=\frac1{h_n}\E_w[\bar e_x(w)w].
\]
For any unit vector $a\in\R^m$, the Cauchy--Schwarz inequality and $\E_w[(a^\top w)^2]=h_n$ imply
\[
    |a^\top(\mathcal M_ne)(x)|^2\leq\frac1{h_n}\E_w|\bar e_x(w)|^2.
\]
Taking the supremum over $a$ and using
$\E_w|\bar e_x(w)|^2=(P_n|e|^2)(x)-|(P_ne)(x)|^2$ yields the result.
\end{proof}

This contraction converts the $h_n^{-1}$ Brownian weight into a conditional-variance term. 

\begin{proposition}[One-step stability of the exact discrete backward operator]
\label{prop:backward_stability}
Assume that the generator satisfies
\[
    |f(t,x,y,z)-f(t,x,y',z')|\leq L_y|y-y'|+L_z|z-z'|,
\]
and that $h_nL_y<1$. Then, for any square-integrable successor functions $V,V'$,
\begin{equation}
    h_n
    |\mathcal Z_n[V](x)-\mathcal Z_n[V'](x)|^2\leq(P_n|V-V'|^2)(x),
\label{eq:control_operator_stability}
\end{equation}
and
\begin{equation}
    |\mathcal U_n[V](x)-\mathcal U_n[V'](x)|^2\leq\alpha_n(P_n|V-V'|^2)(x),
\label{eq:value_operator_stability}
\end{equation}
where
\[
\alpha_n:=\frac{1+h_nL_z^2}{(1-h_nL_y)^2}.
\]
\end{proposition}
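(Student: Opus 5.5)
The plan is to work pointwise in $x$ with the difference $e:=V-V'$ and to use linearity of the one-step operators. Since $\mathcal M_n$ and $P_n$ are linear, $\mathcal Z_n[V]-\mathcal Z_n[V']=\mathcal M_ne$ and $(P_nV)-(P_nV')=P_ne$.

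\textbf{Control bound.} For \eqref{eq:control_operator_stability}, I apply Lemma~\ref{lem:brownian_projection} to $e$. This gives
\[
h_n|\mathcal M_ne(x)|^2\leq(P_n|e|^2)(x)-|(P_ne)(x)|^2 .
\]
Dropping the nonpositive term $-|(P_ne)(x)|^2$ yields the claim immediately.

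\textbf{Value bound, first step.} For \eqref{eq:value_operator_stability}, write $y:=\mathcal U_n[V](x)$, $y':=\mathcal U_n[V'](x)$, $\zeta:=\mathcal Z_n[V](x)$ and $\zeta':=\mathcal Z_n[V'](x)$. Both $y$ and $y'$ are well defined because $h_nL_y<1$. Subtracting the two implicit equations defining $\mathfrak U_n$ gives
\[
y-y'=(P_ne)(x)+h_n\bigl[f(t_n,x,y,\zeta)-f(t_n,x,y',\zeta')\bigr].
\]
The Lipschitz hypothesis then gives $|y-y'|\leq|(P_ne)(x)|+h_nL_y|y-y'|+h_nL_z|\zeta-\zeta'|$. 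Absorbing the $y$-term on the left,
\[
(1-h_nL_y)|y-y'|\leq |(P_ne)(x)|+h_nL_z|\mathcal M_ne(x)| .
\]

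\textbf{Value bound, key step.} The step requiring care is squaring this inequality so that the constant comes out exactly as $\alpha_n$ and no stray factor appears. I write the right-hand side as $1\cdot a+(\sqrt{h_n}L_z)\cdot(\sqrt{h_n}\,b)$, with $a=|(P_ne)(x)|$ and $b=|\mathcal M_ne(x)|$. The Cauchy--Schwarz inequality in $\R^2$ then gives
\[
\bigl(a+h_nL_zb\bigr)^2\leq(1+h_nL_z^2)\bigl(a^2+h_nb^2\bigr).
\]
Now I use the full strength of Lemma~\ref{lem:brownian_projection}, keeping the subtracted mean-square term: $h_nb^2\leq(P_n|e|^2)(x)-a^2$. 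Hence $a^2+h_nb^2\leq(P_n|e|^2)(x)$, because the conditional-variance form exactly cancels $|P_ne|^2$. Dividing by $(1-h_nL_y)^2$ yields \eqref{eq:value_operator_stability} with $\alpha_n=(1+h_nL_z^2)/(1-h_nL_y)^2$.

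Square-integrability of $V$ and $V'$ ensures that all conditional expectations involved are finite. This is the only integrability input needed.
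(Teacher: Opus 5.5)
Your proposal is correct and follows the paper's proof essentially step for step. You use linearity to reduce to $e=V-V'$, get the control bound from Lemma~\ref{lem:brownian_projection}, and absorb the $L_y$-term to obtain $(1-h_nL_y)|y-y'|\le a+h_nL_z|\mathcal M_ne(x)|$. You then finish with the same two-dimensional Cauchy--Schwarz bound and the full variance form $a^2+h_n|\mathcal M_ne(x)|^2\le (P_n|e|^2)(x)$.
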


\begin{proof}
Let $e=V-V'$. The control estimate follows directly from Lemma~\ref{lem:brownian_projection}. For the value operator, its definition and the Lipschitz continuity of the generator give
\[
\begin{aligned} &(1-h_nL_y)|\mathcal U_n[V](x)-\mathcal U_n[V'](x)| \leq |(P_ne)(x)|+h_nL_z|(\mathcal M_ne)(x)|.
\end{aligned}
\]
Set $a:=|(P_ne)(x)|$ and $b:=\sqrt{h_n}|(\mathcal M_ne)(x)|$. Lemma~\ref{lem:brownian_projection} yields
$a^2+b^2\leq(P_n|e|^2)(x)$, whereas $(a+\sqrt{h_n}L_zb)^2 \leq(1+h_nL_z^2)(a^2+b^2).$
Combining the two inequalities proves~\eqref{eq:value_operator_stability}.
\end{proof}

For any scalar- or vector-valued function $F$ with finite second moment, define $\|F\|_{n,2}^2:=\E\!\left[|F(\bar X_n)|^2\right],$ where the expectation is also taken over algorithmic randomness when $F$ is random.
Taking expectations at $\bar X_n$ and iterating backward gives
\[
\|U_n-U_n'\|_{n,2}^2\leq\exp(C(T-t_n))\|U_N-U_N'\|_{N,2}^2.
\]

% ============================================================================
\subsubsection{Backward stability under local residuals}
\label{subsubsec:residual_stability}

The exact-operator estimate describes the propagation of successor errors. To cover finite-branch regression and neural training, define the normalized local residuals of the final frozen functions relative to the population discrete recursion:
\begin{equation}
\begin{aligned}
    \mathscr R_{U,n}^\theta(x)&:=\frac{1}{h_n}\Bigl(U_n^\theta(x)-(P_nU_{n+1}^\theta)(x)
      -h_nf(t_n,x,U_n^\theta(x),Z_n^\theta(x))\Bigr),\\
    \mathscr R_{Z,n}^\theta(x)&:=Z_n^\theta(x)-(\mathcal M_nU_{n+1}^\theta)(x).
\end{aligned}
\label{eq:normalized_local_residuals}
\end{equation}
After division by $h_n$, the value residual has the same scale as the generator. Consequently, the local value and control errors both enter the global energy estimate with an $h_n$ weight.

\begin{proposition}[Local-residual-driven backward stability]
\label{prop:residual_backward_stability}
Under Assumption~\textup{(H2)}, there exist constants $h_0>0$, $c_0>0$, and $C_0>0$ depending only on $L_y$ and $L_z$ such that, whenever $|\pi|\leq h_0$, the errors
\[
    e_n:=U_n^\theta-u_n^\pi,\qquad
    q_n:=Z_n^\theta-z_n^\pi
\]
satisfy the one-step estimate
\begin{equation}
\begin{aligned}
    |e_n(x)|^2+c_0h_n|q_n(x)|^2
    \leq{}&(1+C_0h_n)(P_n|e_{n+1}|^2)(x) +C_0h_n\Bigl(|\mathscr R_{U,n}^\theta(x)|^2
      +|\mathscr R_{Z,n}^\theta(x)|^2\Bigr).
\end{aligned}
\label{eq:one_step_residual_stability}
\end{equation}
Moreover, there exists a constant $C>0$ depending only on $T,L_y,L_z$ such that, for every $0\leq n\leq N$,
\begin{equation}
\begin{aligned}
&\|e_n\|_{n,2}^2+c_0\sum_{j=n}^{N-1}h_j\|q_j\|_{j,2}^2\leq
C\Biggl[\|e_N\|_{N,2}^2+\sum_{j=n}^{N-1}h_j
\left(\|\mathscr R_{U,j}^\theta\|_{j,2}^2+\|\mathscr R_{Z,j}^\theta\|_{j,2}^2\right)
\Biggr].
\end{aligned}
\label{eq:global_residual_stability}
\end{equation}
\end{proposition}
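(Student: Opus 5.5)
Fix $x$ and write the learned pair through its residuals. By \eqref{eq:normalized_local_residuals},
\[
q_n(x)=Z_n^\theta(x)-z_n^\pi(x)=\mathscr R_{Z,n}^\theta(x)+(\mathcal M_ne_{n+1})(x),
\]
and
\[
e_n(x)=(P_ne_{n+1})(x)+h_n\bigl[f(t_n,x,U_n^\theta,Z_n^\theta)-f(t_n,x,u_n^\pi,z_n^\pi)\bigr]+h_n\mathscr R_{U,n}^\theta(x),
\]
obtained by subtracting \eqref{eq:value_operator} from the definition of $\mathscr R_{U,n}^\theta$. The plan is a discrete energy estimate in the spirit of Proposition~\ref{prop:backward_stability}, but with the residuals carried along. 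Lemma~\ref{lem:brownian_projection}, applied to $e_{n+1}$, is the engine. With
\[
a:=|(P_ne_{n+1})(x)|,\qquad b:=\sqrt{h_n}\,|(\mathcal M_ne_{n+1})(x)|,
\]
it gives $a^2+b^2\le (P_n|e_{n+1}|^2)(x)$.

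\textbf{Control bound.} From the first identity and Young's inequality,
\[
h_n|q_n|^2\le (1+\epsilon)\,b^2+(1+\epsilon^{-1})\,h_n|\mathscr R_{Z,n}^\theta|^2 .
\]
Since $b^2$ appears here with a coefficient larger than one, the crucial point is to spend only a small fraction of the budget $a^2+b^2$ on the $q_n$-term. This is why the statement carries a small $c_0$.

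\textbf{Value bound.} By the Lipschitz property of $f$,
\[
(1-h_nL_y)|e_n|\le a+h_nL_z|q_n|+h_n|\mathscr R_{U,n}^\theta| .
\]
Substituting $|q_n|\le b/\sqrt{h_n}+|\mathscr R_{Z,n}^\theta|$ gives
\[
(1-h_nL_y)|e_n|\le a+\sqrt{h_n}\,L_z b+h_nL_z|\mathscr R_{Z,n}^\theta|+h_n|\mathscr R_{U,n}^\theta| .
\]
For the squaring step, use a weighted Cauchy--Schwarz inequality of the form
\[
(a+\sqrt{h_n}L_zb+r)^2\le (1+h_nL_z^2+h_n)(a^2+b^2)+(1+h_n^{-1})\,r^2,
\]
with $r=h_n(L_z|\mathscr R_{Z,n}^\theta|+|\mathscr R_{U,n}^\theta|)$. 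Then $(1+h_n^{-1})r^2\lesssim h_n(|\mathscr R_{U,n}^\theta|^2+|\mathscr R_{Z,n}^\theta|^2)$, which is the required residual scale. Dividing by $(1-h_nL_y)^2\le(1-h_0L_y)^2$ yields
\[
|e_n|^2\le (1+C_1h_n)(a^2+b^2)+C_1h_n(|\mathscr R_{U,n}^\theta|^2+|\mathscr R_{Z,n}^\theta|^2).
\]

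\textbf{Main obstacle.} Adding $c_0$ times the control bound must not push the coefficient of $a^2+b^2$ beyond $1+C_0h_n$. The value bound already consumes the full budget $a^2+b^2$, and the $c_0 b^2$ contribution from the control bound carries no factor of $h_n$. So I would instead keep $a$ and $b$ separate when squaring:
\[
(a+\sqrt{h_n}L_zb+r)^2\le (1+h_n)a^2+(1+h_n^{-1})(\sqrt{h_n}L_zb+r)^2 .
\]
This makes the coefficient of $b^2$ of order $(1+h_n)L_z^2$, which is again not small. Alternatively, I would choose the Young splitting so that the value part uses $(1+\delta)a^2+C_\delta h_nL_z^2b^2$, leaving $b^2$ with coefficient $O(h_n)$ in $|e_n|^2$. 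The remaining budget $(1-O(h_n))b^2$ is then available for $c_0h_n|q_n|^2\le 2c_0 b^2+\cdots$, and any $c_0\le 1/4$ works. The cost is that $\delta$ must be $O(h_n)$, which forces $C_\delta\sim h_n^{-1}$.

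I would resolve this tension by taking $\delta=h_n$. Then
\[
(a+\sqrt{h_n}L_zb)^2\le (1+h_n)a^2+(1+h_n^{-1})h_nL_z^2b^2=(1+h_n)a^2+(1+h_n)L_z^2b^2 .
\]
If this still fails for large $L_z$, I would fall back to rescaling: weight $e_n$ by a constant and take $c_0$ depending on $L_z$, namely $c_0\sim 1/(1+L_z^2)$. This is the step I expect to need the most care, and getting the bookkeeping of $a^2$ versus $b^2$ right is the crux.

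\textbf{Global estimate.} Once \eqref{eq:one_step_residual_stability} holds, evaluate it at $x=\bar X_n$ and take expectations. The Markov property gives
\[
\E[(P_n|e_{n+1}|^2)(\bar X_n)]=\|e_{n+1}\|_{n+1,2}^2 ,
\]
after conditioning on the algorithmic randomness, which is frozen before the current layer. A discrete Gronwall argument using $\prod_j(1+C_0h_j)\le e^{C_0T}$, summed from $n$ to $N-1$, then yields \eqref{eq:global_residual_stability}.
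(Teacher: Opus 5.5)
The setup is sound and matches the paper: the two residual identities, Lemma~\ref{lem:brownian_projection} giving $a^2+b^2\le(P_n|e_{n+1}|^2)(x)$, and the final expectation/Gronwall step. However, the crux you flag is not resolved, so the one-step estimate is not proved as written. With $\delta=h_n$, your value bound becomes $(a+\sqrt{h_n}L_zb)^2\le(1+h_n)a^2+(1+h_n)L_z^2b^2$. The value part alone therefore charges $L_z^2(1+h_n)b^2$ against a budget that contains only $b^2$. Once $L_z\ge1$ there is no room for any $c_0>0$, and for $L_z>1$ the bound cannot be brought under $(1+C_0h_n)(a^2+b^2)$ even with $c_0=0$. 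Your fallback cannot repair this. Multiplying $e_n$ by a constant rescales $|e_n|^2$ and $P_n|e_{n+1}|^2$ identically. Shrinking $c_0$ only removes the control term, not the excess produced by the value bound itself. As it stands, your argument covers only $L_z<1$.

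The missing observation is that the $b^2$-coefficient in the value bound need not be $O(h_n)$. It only needs to be a fixed constant strictly below $1$, so that the rest of the budget pays for $c_0h_n|q_n|^2\le 2c_0b^2+2c_0h_n|\mathscr R_{Z,n}^\theta|^2$. Your concern that $C_\delta\sim h_n^{-1}$ is therefore harmless once the constant in your Young parameter is chosen large. With $\delta=\kappa h_n$,
\[
(a+\sqrt{h_n}L_zb)^2\le(1+\kappa h_n)a^2+\Bigl(h_nL_z^2+\frac{L_z^2}{\kappa}\Bigr)b^2 .
\]
Taking $\kappa=1+4L_z^2$ makes the $b^2$-coefficient at most $\tfrac14+O(h_n)$. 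The $a^2$-coefficient stays $1+O(h_n)$ with an $L_z$-dependent constant, which the statement permits. Then $c_0=\tfrac14$ and $\epsilon=1$ in your control bound close the estimate.

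This is exactly the paper's device. It bounds $|m|^2=|e_n-h(\delta f_n+\mathscr R_{U,n}^\theta)|^2\ge(1-\kappa h)|e_n|^2-\tfrac{h}{\kappa}|\delta f_n+\mathscr R_{U,n}^\theta|^2$ with $\kappa=1+12L_z^2$. The induced term $3L_z^2h|q_n|^2/\kappa\le\tfrac14h|q_n|^2$ is then absorbed by the $\tfrac12h|q_n|^2$ extracted from $h|\zeta|^2$. The paper keeps $q_n$ as an unknown, while you eliminate it through $|q_n|\le b/\sqrt{h_n}+|\mathscr R_{Z,n}^\theta|$. The two routes are equivalent once $\kappa$ is taken proportional to $L_z^2$.
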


\begin{proof}
Fix a time level and state $x$, abbreviate $h=h_n$, and set
\[
    m=(P_ne_{n+1})(x),\qquad
    \zeta=(\mathcal M_ne_{n+1})(x),\qquad
    p=(P_n|e_{n+1}|^2)(x).
\]
By the exact discrete recursion and~\eqref{eq:normalized_local_residuals},
\begin{equation}
\begin{aligned}
    e_n(x)=m+h\,\delta f_n(x)+h\mathscr R_{U,n}^\theta(x),\qquad
    q_n(x)=\zeta+\mathscr R_{Z,n}^\theta(x),
\end{aligned}
\label{eq:residual_error_recursion}
\end{equation}
where
\[
\delta f_n(x):= f(t_n,x,U_n^\theta(x),Z_n^\theta(x))-f(t_n,x,u_n^\pi(x),z_n^\pi(x))
\]
satisfies $|\delta f_n|\leq L_y|e_n|+L_z|q_n|$. Lemma~\ref{lem:brownian_projection} gives
\begin{equation}
    p\geq |m|^2+h|\zeta|^2.
\label{eq:projection_energy_pair}
\end{equation}

Let $\kappa:=1+12L_z^2$. When $\kappa h<1$, Young's inequality yields
\begin{align*}
|m|^2&=\left|e_n-h(\delta f_n+\mathscr R_{U,n}^\theta)\right|^2\\
&\geq(1-\kappa h)|e_n|^2-\frac{h}{\kappa}|\delta f_n+\mathscr R_{U,n}^\theta|^2\\
&\geq\left(1-\kappa h-\frac{3L_y^2}{\kappa}h\right)|e_n|^2-\frac{3L_z^2}{\kappa}h|q_n|^2
-\frac{3}{\kappa}h|\mathscr R_{U,n}^\theta|^2.
\end{align*}
On the other hand, $\zeta=q_n-\mathscr R_{Z,n}^\theta$, and hence
\[
    h|\zeta|^2\geq \frac{1}{2}h|q_n|^2-h|\mathscr R_{Z,n}^\theta|^2.
\]
Since $3L_z^2/\kappa\leq1/4$, substituting these two bounds into~\eqref{eq:projection_energy_pair} gives
\[
\begin{aligned}
 p\geq{}&(1-Ah)|e_n|^2+\frac{1}{4}h|q_n|^2 -C_1h\left(
 |\mathscr R_{U,n}^\theta|^2+|\mathscr R_{Z,n}^\theta|^2
 \right),
\end{aligned}
\]
where $A=\kappa+3L_y^2/\kappa$, and $C_1$ depends only on $L_y,L_z$. Choose $h_0$ such that $Ah_0\leq1/2$ and absorb $(1-Ah)^{-1}$ into $1+C_0h$. This proves~\eqref{eq:one_step_residual_stability}.

Taking expectations at $x=\bar X_n$ and using
$\E[(P_n|e_{n+1}|^2)(\bar X_n)]=\|e_{n+1}\|_{n+1,2}^2$ gives
\[
\begin{aligned}
\|e_n\|_{n,2}^2+c_0h_n\|q_n\|_{n,2}^2
\leq{}&(1+C_0h_n)\|e_{n+1}\|_{n+1,2}^2+C_0h_n\left(
\|\mathscr R_{U,n}^\theta\|_{n,2}^2+\|\mathscr R_{Z,n}^\theta\|_{n,2}^2
\right).
\end{aligned}
\]
Multiplying by the discrete integrating factor $\prod_{k<n}(1+C_0h_k)$, summing from $n$ to $N-1$, and using
$\prod_{k=n}^{j}(1+C_0h_k)\leq\exp(C_0(T-t_n))$ proves~\eqref{eq:global_residual_stability}.
\end{proof}

Proposition~\ref{prop:residual_backward_stability} reduces the global error of the complete algorithm to the terminal error and the population-recursion residual at each time level. Replay, structured parameterization, compatibility regularization, control refitting, value polishing, and constant calibration all enter the same stability bound through the final residuals $\mathscr R_{U,n}^\theta$ and $\mathscr R_{Z,n}^\theta$.

% ============================================================================
\subsubsection{Local learning errors and finite-branch errors}
\label{subsubsec:local_branch_errors}

To connect the stability estimate more directly with the actual training targets, let $\widehat Z_n^M$ be the finite-branch control label constructed from $U_{n+1}^\theta$. Let $\widehat Y_n^{\mathrm{br}}$ denote the finite-branch implicit value target solved exactly under the final control $Z_n^\theta$, and let $\widehat Y_n^{\mathrm{cmp}}$ denote the target actually returned by the fixed-point iteration. Define the control-side errors
\begin{equation}
\begin{aligned}
    \ell_{Z,n}:=Z_n^\theta-\widehat Z_n^M,\qquad
    \xi_{Z,n}:=\widehat Z_n^M-\mathcal M_nU_{n+1}^\theta,
\end{aligned}
\label{eq:control_target_error_split}
\end{equation}
and the normalized value-side errors
\begin{equation}
\begin{aligned}
    \ell_{U,n}&:=\frac{U_n^\theta-\widehat Y_n^{\mathrm{cmp}}}{h_n},
    &\varepsilon_{\mathrm{fp},n}&:=\frac{\widehat Y_n^{\mathrm{cmp}}-\widehat Y_n^{\mathrm{br}}}{h_n},\\
    \xi_{U,n}&:=\frac{\widehat Y_n^{\mathrm{br}}-
    \mathfrak U_n[U_{n+1}^\theta,Z_n^\theta]}{h_n}.
\end{aligned}
\label{eq:value_target_error_split}
\end{equation}
Here, $\ell_{Z,n}$ and $\ell_{U,n}$ collect the local learning discrepancy remaining after function approximation, empirical-risk generalization, optimization, replay, and within-layer coordination. The term $\varepsilon_{\mathrm{fp},n}$ is the numerical fixed-point error, whereas $\xi_{Z,n}$ and $\xi_{U,n}$ are the finite-Brownian-branch errors relative to the corresponding population targets.

By definition,
\begin{equation}
    \mathscr R_{Z,n}^\theta=\ell_{Z,n}+\xi_{Z,n}.
\label{eq:z_residual_target_split}
\end{equation}
Next let $Y_n^{\mathrm{pop}}:=\mathfrak U_n[U_{n+1}^\theta,Z_n^\theta]$. Since
$Y_n^{\mathrm{pop}}=P_nU_{n+1}^\theta+h_nf(t_n,\cdot,Y_n^{\mathrm{pop}},Z_n^\theta)$,
\begin{equation}
\begin{aligned}
|\mathscr R_{U,n}^\theta|&\leq(1+h_nL_y)\left|\frac{U_n^\theta-Y_n^{\mathrm{pop}}}{h_n}\right|\\
&\leq(1+h_nL_y)\left(|\ell_{U,n}|+|\varepsilon_{\mathrm{fp},n}|+|\xi_{U,n}|\right).
\end{aligned}
\label{eq:u_residual_target_split}
\end{equation}

When the labels and trained outputs contain algorithmic randomness, the norms below average over both the discrete state chain and the algorithmic randomness. Define the aggregated local learning and numerical error by
\begin{equation}
\begin{aligned}
\mathcal E_{\mathrm{loc}}(\pi):={}&\|U_N^\theta-g\|_{N,2}^2 +\sum_{n=0}^{N-1}h_n
\left(\|\ell_{Z,n}\|_{n,2}^2+\|\ell_{U,n}\|_{n,2}^2+\|\varepsilon_{\mathrm{fp},n}\|_{n,2}^2
\right),
\end{aligned}
\label{eq:aggregate_local_error}
\end{equation}
and the finite-branch error by
\begin{equation}
    \mathcal E_{\mathrm{br}}(\pi):=\sum_{n=0}^{N-1}h_n
    \left(\|\xi_{Z,n}\|_{n,2}^2+\|\xi_{U,n}\|_{n,2}^2
    \right).
\label{eq:aggregate_branch_error}
\end{equation}

\begin{corollary}[Learning--branch decomposition of the error to the discrete solution]
\label{cor:discrete_algorithm_error}
Under the conditions of Proposition~\ref{prop:residual_backward_stability}, there exists a constant $C>0$, independent of the grid and the network parameters, such that
\begin{equation}
\begin{aligned}
&\max_{0\leq n\leq N}\|U_n^\theta-u_n^\pi\|_{n,2}^2+\sum_{n=0}^{N-1}h_n\|Z_n^\theta-z_n^\pi\|_{n,2}^2 \leq C\left(\mathcal E_{\mathrm{loc}}(\pi)+\mathcal E_{\mathrm{br}}(\pi)
\right).
\end{aligned}
\label{eq:discrete_algorithm_bound}
\end{equation}
\end{corollary}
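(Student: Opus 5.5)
The corollary follows by feeding the residual splits \eqref{eq:z_residual_target_split} and \eqref{eq:u_residual_target_split} into the global estimate \eqref{eq:global_residual_stability} of Proposition~\ref{prop:residual_backward_stability}. The only extra work is to take the maximum over $n$ on the value side and to control the constants.

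\textbf{Step 1: residuals in terms of the error classes.} By \eqref{eq:z_residual_target_split} and $(a+b)^2\le 2a^2+2b^2$, we have $\|\mathscr R_{Z,j}^\theta\|_{j,2}^2\le 2\|\ell_{Z,j}\|_{j,2}^2+2\|\xi_{Z,j}\|_{j,2}^2$. By \eqref{eq:u_residual_target_split}, together with $h_jL_y\le |\pi|L_y<1$ and the three-term Cauchy--Schwarz inequality,
\[
\|\mathscr R_{U,j}^\theta\|_{j,2}^2\le 12\bigl(\|\ell_{U,j}\|_{j,2}^2+\|\varepsilon_{\mathrm{fp},j}\|_{j,2}^2+\|\xi_{U,j}\|_{j,2}^2\bigr),
\]
using $(1+h_jL_y)^2\le 4$. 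These norms average over both the state chain and the algorithmic randomness. Before invoking the proposition, I will check that it remains valid in that setting. Its one-step bound \eqref{eq:one_step_residual_stability} is pointwise in $x$ for fixed (frozen) functions, so it holds conditionally on the algorithmic randomness. The tower property $\E[(P_n|e_{n+1}|^2)(\bar X_n)]=\|e_{n+1}\|_{n+1,2}^2$ also survives, because the Euler increment $\Delta W_n$ of the evaluation chain is independent of the training randomness that determines $U_{n+1}^\theta$.

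\textbf{Step 2: apply the global bound.} The terminal error is $e_N=U_N^\theta-g$, since $u_N^\pi=g$, so it is the first term of $\mathcal E_{\mathrm{loc}}(\pi)$. Substituting Step 1 into \eqref{eq:global_residual_stability} gives, for every $n$,
\[
\|e_n\|_{n,2}^2+c_0\sum_{j\ge n}h_j\|q_j\|_{j,2}^2\le C'\bigl(\mathcal E_{\mathrm{loc}}(\pi)+\mathcal E_{\mathrm{br}}(\pi)\bigr).
\]
Here the partial sums over $j\ge n$ are bounded by the full sums over $j\ge 0$ because every summand is nonnegative. The right-hand side does not depend on $n$, so taking the maximum over $n$ controls the first term of \eqref{eq:discrete_algorithm_bound}. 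Setting $n=0$ and dividing by $c_0$ controls the control sum. Adding the two bounds gives the claim with $C=2C'\max(1,c_0^{-1})$.

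\textbf{Main obstacle: constants and step sizes.} The constants must be independent of the grid and the networks. From the proposition, $C'$ depends only on $T$, $L_y$, $L_z$, and the numerical factors above are absolute, so this holds. The one subtlety is the requirement $|\pi|\le h_0$. This is inherited from "the conditions of Proposition~\ref{prop:residual_backward_stability}", so no separate coarse-grid argument is needed. The other point needing care is the measurability/independence check described in Step 1, which must be confirmed before the proposition is applied.
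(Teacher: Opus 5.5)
Your proposal is correct and follows essentially the same route as the paper: you bound the residuals through the splits \eqref{eq:z_residual_target_split}--\eqref{eq:u_residual_target_split}, substitute into \eqref{eq:global_residual_stability}, and maximize over the starting level $n$, using $n=0$ for the control sum. Your explicit constant tracking and your check that the tower identity survives averaging over algorithmic randomness are details the paper leaves implicit; strictly, that check needs $\Delta W_n$ to be independent of $\bar X_n$ and the training randomness jointly, not only of the training randomness.
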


\begin{proof}
Equations~\eqref{eq:z_residual_target_split}--\eqref{eq:u_residual_target_split} and the finite-sum inequality imply
\[
\sum_{n=0}^{N-1}h_n\left(\|\mathscr R_{U,n}^\theta\|_{n,2}^2+\|\mathscr R_{Z,n}^\theta\|_{n,2}^2
\right)
\leq C\left(\mathcal E_{\mathrm{loc}}(\pi)+\mathcal E_{\mathrm{br}}(\pi)\right).
\]
Substituting this estimate into~\eqref{eq:global_residual_stability} and maximizing over the starting level $n$ proves the result.
\end{proof}

The finite-branch terms are controlled directly by the label analysis in Section~\ref{sec:preliminaries}. On the control side,~\eqref{eq:antithetic_mc_variance} gives the conditional mean-square error of $\xi_{Z,n}$. On the value side,~\eqref{eq:value_target_mc_stability} yields
\begin{equation}
    |\xi_{U,n}(x)|
    \leq\frac{1}{h_n(1-h_nL_y)}\left|\widehat m_{n,K_U}[U_{n+1}^\theta](x)-(P_nU_{n+1}^\theta)(x)
    \right|.
\label{eq:normalized_value_branch_bound}
\end{equation}
Thus, the number of control branches determines the accuracy of the Brownian conditional moment, while the number of value branches determines the accuracy of the conditional mean on a one-time-step scale. Together, the two branch budgets govern $\mathcal E_{\mathrm{br}}(\pi)$.

% ============================================================================
\subsubsection{Consistency under vanishing discretization, local learning, and finite-branch errors}
\label{subsubsec:consistency}

Define the discretization error of the implicit Euler scheme relative to the continuous Markovian solution by
\begin{equation}
\begin{aligned}
\mathcal E_{\mathrm{disc}}(\pi):={}&\max_{0\leq n\leq N}
\|u_n^\pi-u(t_n,\cdot)\|_{n,2}^2 +\sum_{n=0}^{N-1}h_n\|z_n^\pi-z(t_n,\cdot)\|_{n,2}^2.
\end{aligned}
\label{eq:discretization_error_energy}
\end{equation}
Under Assumptions~\textup{(H1)}--\textup{(H2)} and the corresponding temporal regularity, consistency of the selected implicit Euler scheme means that $\mathcal E_{\mathrm{disc}}(\pi)\to0$.

\begin{corollary}[Total consistency when all three error classes vanish]
\label{cor:total_consistency}
Consider a sequence of \DCBSDE{} approximations with $|\pi|\to0$. Suppose that the stability constants in Proposition~\ref{prop:residual_backward_stability} remain uniformly bounded. Then
\begin{equation}
\begin{aligned}
&\max_{0\leq n\leq N}
\|U_n^\theta-u(t_n,\cdot)\|_{n,2}^2+\sum_{n=0}^{N-1}h_n\|Z_n^\theta-z(t_n,\cdot)\|_{n,2}^2\\
&\qquad\leq C\left(\mathcal E_{\mathrm{disc}}(\pi)+\mathcal E_{\mathrm{loc}}(\pi)+\mathcal E_{\mathrm{br}}(\pi)
\right).
\end{aligned}
\label{eq:total_error_bound}
\end{equation}
In particular, if
\begin{equation}
    \mathcal E_{\mathrm{disc}}(\pi)+\mathcal E_{\mathrm{loc}}(\pi)+\mathcal E_{\mathrm{br}}(\pi)
    \longrightarrow0,
\label{eq:three_error_vanishing}
\end{equation}
then
\begin{equation}
\begin{aligned}
    \max_{0\leq n\leq N}\|U_n^\theta-u(t_n,\cdot)\|_{n,2}^2
    +\sum_{n=0}^{N-1}h_n\|Z_n^\theta-z(t_n,\cdot)\|_{n,2}^2 \longrightarrow0.
\end{aligned}
\label{eq:conditional_consistency}
\end{equation}
\end{corollary}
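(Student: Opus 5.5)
The plan is to combine Corollary~\ref{cor:discrete_algorithm_error} with the definition of $\mathcal E_{\mathrm{disc}}(\pi)$ through a triangle inequality. At each level $n$ and for each state, I would split the errors as
\[
U_n^\theta-u(t_n,\cdot)=\bigl(U_n^\theta-u_n^\pi\bigr)+\bigl(u_n^\pi-u(t_n,\cdot)\bigr),\qquad
Z_n^\theta-z(t_n,\cdot)=\bigl(Z_n^\theta-z_n^\pi\bigr)+\bigl(z_n^\pi-z(t_n,\cdot)\bigr).
\]
This splitting is consistent with the decompositions~\eqref{eq:control_error_decomposition}--\eqref{eq:value_error_decomposition}: there the first two summands are simply grouped into the error relative to the discrete solution. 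Applying $|a+b|^2\le 2|a|^2+2|b|^2$ pointwise and then taking $\|\cdot\|_{n,2}$-expectations along $\bar X_n$, jointly with the algorithmic randomness, gives
\[
\|U_n^\theta-u(t_n,\cdot)\|_{n,2}^2\le 2\|U_n^\theta-u_n^\pi\|_{n,2}^2+2\|u_n^\pi-u(t_n,\cdot)\|_{n,2}^2 ,
\]
and the analogous bound for the control terms.

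Next I would take the maximum over $n$ in the value part and the $h_n$-weighted sum in the control part. Using $\max(a_n+b_n)\le\max a_n+\max b_n$, the left-hand side of~\eqref{eq:total_error_bound} is at most twice the sum of two quantities:
\begin{itemize}
\item the left-hand side of~\eqref{eq:discrete_algorithm_bound}, which is bounded by $C(\mathcal E_{\mathrm{loc}}+\mathcal E_{\mathrm{br}})$;
\item the quantity $\mathcal E_{\mathrm{disc}}(\pi)$, exactly as defined in~\eqref{eq:discretization_error_energy}.
\end{itemize}
This yields~\eqref{eq:total_error_bound} with constant $2\max(C,1)$.

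The main point needing care is that the constant must be uniform along the sequence of grids. This is where the hypothesis that the stability constants of Proposition~\ref{prop:residual_backward_stability} remain uniformly bounded enters. For $|\pi|\le h_0$, the constants $c_0$, $C_0$ and the exponential factor $\exp(C_0T)$ depend only on $T,L_y,L_z$. For the finitely many initial grids with $|\pi|>h_0$, the inequality is irrelevant to the limit, or is covered by the assumed uniform bound. The constant $C$ in Corollary~\ref{cor:discrete_algorithm_error} also contains the factor $(1+h_nL_y)^2\le(1+h_0L_y)^2$ from~\eqref{eq:u_residual_target_split}, which is likewise grid-independent. I would also check that $c_0$ is bounded below uniformly, so that dividing by it in the control sum is harmless.

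The convergence statement~\eqref{eq:conditional_consistency} then follows immediately: under~\eqref{eq:three_error_vanishing}, the right-hand side of~\eqref{eq:total_error_bound} is $C$ times a null sequence with $C$ fixed.
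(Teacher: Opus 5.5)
Your proposal is correct and follows the paper's proof essentially verbatim. You add and subtract the discrete solution $(u_n^\pi,z_n^\pi)$, apply the squared triangle inequality, and combine Corollary~\ref{cor:discrete_algorithm_error} with the definition~\eqref{eq:discretization_error_energy} of $\mathcal E_{\mathrm{disc}}(\pi)$. Your extra remarks on the uniformity of the constant once $|\pi|\le h_0$ make explicit what the paper leaves implicit in its hypothesis of uniformly bounded stability constants.
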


\begin{proof}
For both value and control, add and subtract the discrete solution $(u_n^\pi,z_n^\pi)$, apply the squared triangle inequality, and combine Corollary~\ref{cor:discrete_algorithm_error} with~\eqref{eq:discretization_error_energy}. This gives~\eqref{eq:total_error_bound}. Equation~\eqref{eq:conditional_consistency} then follows from~\eqref{eq:three_error_vanishing}.
\end{proof}

Equation~\eqref{eq:total_error_bound} makes explicit the route to conditional consistency of the learned approximation. Grid refinement controls $\mathcal E_{\mathrm{disc}}$. Function-class approximation, empirical-risk generalization, optimization, fixed-point solution, and within-layer coordination control $\mathcal E_{\mathrm{loc}}$. The inner branch budgets for control and value determine $\mathcal E_{\mathrm{br}}$. These three error classes accumulate through the backward energy estimate of Proposition~\ref{prop:residual_backward_stability}, rather than being amplified by a fixed factor at every time level.

% ============================================================================
\subsubsection{Computational complexity and storage requirements}
\label{subsubsec:complexity}

We separate the complexity into target construction, network optimization, and snapshot storage. Let $N_{Z,n}^{\mathrm{out}}$, $N_{U,n}^{\mathrm{out}}$, and $N_{C,n}^{\mathrm{out}}$ be the numbers of outer states used at level $n$ for the control labels, initial value targets, and value-correction targets, respectively. Let $R_{Z,n}$, $R_{U,n}$, and $R_{C,n}$ be the corresponding numbers of target refreshes. Let $K_U$, $K_U^{\mathrm{corr}}$, and $K_U^{\mathrm{bias}}$ be the inner branch counts used in the three value stages. The dominant number of evaluations of the frozen successor value function is
\[
\begin{aligned}
\mathcal C_{\mathrm{succ}}=\sum_{n=0}^{N-1}
\Bigl(&R_{Z,n}M N_{Z,n}^{\mathrm{out}}+R_{U,n}K_U N_{U,n}^{\mathrm{out}}+R_{C,n}K_U^{\mathrm{corr}}N_{C,n}^{\mathrm{out}}
+K_U^{\mathrm{bias}}N_n^{\mathrm{bias}}
\Bigr).
\end{aligned}
\]
Here $M=2K$ already counts both positive and negative Brownian branches. If validation diagnostics call the successor value function again, those evaluations should be added in the same way. Fixed-point solution is performed after the branch average has been formed. If each target at level $n$ uses, on average, $J_{\mathrm{fp},n}$ generator evaluations, the additional number of generator calls is
\[
\mathcal C_{\mathrm{fp}}=O\!\left(
      \sum_{n=0}^{N-1}J_{\mathrm{fp},n}
      \bigl(R_{U,n}N_{U,n}^{\mathrm{out}}+R_{C,n}N_{C,n}^{\mathrm{out}}+N_n^{\mathrm{bias}}\bigr)
    \right).
\]

To describe the optimization cost, let $I_{Z,n}^{\mathrm{init}}$, $I_{Z,n}^{\mathrm{corr}}$, $I_{U,n}^{\mathrm{init}}$, and $I_{U,n}^{\mathrm{corr}}$ be the numbers of gradient steps in the four training stages. Let $B_{Z,n}$ and $B_{U,n}$ be the corresponding batch sizes. Let $C_Z(P_{Z,n})$ and $C_U(P_{U,n})$ denote the per-sample forward--backward costs for control and value networks with parameter counts $P_{Z,n}$ and $P_{U,n}$. Let $C_{\nabla U}(P_{U,n})$ be the automatic-differentiation cost of the physical-state gradient, and let $C_{\mathrm{rep},n}$ be the cost of one replay batch. Ignoring hardware parallelism and constant factors, the network-training cost is
\begin{equation}
\begin{aligned}
\mathcal C_{\mathrm{train}}=O\!\Bigg(
\sum_{n=0}^{N-1}\Bigl\{
&I_{Z,n}^{\mathrm{init}}B_{Z,n}C_Z(P_{Z,n})+I_{Z,n}^{\mathrm{corr}}B_{Z,n}C_Z(P_{Z,n})\\
&+I_{U,n}^{\mathrm{init}}B_{U,n}\bigl[C_U(P_{U,n})+C_{\nabla U}(P_{U,n})\bigr]\\
&+I_{U,n}^{\mathrm{corr}}B_{U,n}\bigl[C_U(P_{U,n})+C_{\nabla U}(P_{U,n})\bigr]+\left(I_{Z,n}^{\mathrm{init}}+I_{Z,n}^{\mathrm{corr}}\right)
 C_{\mathrm{rep},n}
\Bigr\}
\Bigg).
\end{aligned}
\label{eq:training_complexity}
\end{equation}
When control refitting, value polishing, the gradient-compatibility term, or replay is disabled, the corresponding terms are removed from~\eqref{eq:training_complexity}.

Let $S(U_n^{\mathrm{snap}})$ and $S(Z_n^{\mathrm{snap}})$ denote the storage required by the complete immutable value and control snapshots at level $n$. Let $S_{\mathrm{live}}$ denote the storage for the live shared trunk, local heads, and optimizer state, and let $S_{\mathrm{cache}}$ denote replay and nested-branch caches. The final storage and peak working storage satisfy
\begin{equation}
\begin{aligned}
    S_{\mathrm{final}}=O\!\left(
      \sum_{n=0}^{N}S(U_n^{\mathrm{snap}})+\sum_{n=0}^{N-1}S(Z_n^{\mathrm{snap}})+S_{\mathrm{meta}}
    \right),\\
    S_{\mathrm{peak}}=O\!\left(S_{\mathrm{live}}+\max_{0\leq n<N}\left\{S(U_{n+1}^{\mathrm{snap}})+S_{\mathrm{cache},n}\right\}\right),
\end{aligned}
\label{eq:storage_complexity}
\end{equation}
where $S_{\mathrm{meta}}$ includes state preprocessing, feature selection, and structural-projection information. If all snapshots remain resident on the accelerator, their total storage must also be added to the peak device-memory requirement. If snapshots are stored in host memory or on disk and loaded one layer at a time, the main additional cost is data transfer rather than a larger number of simultaneously trainable parameters.

Relative to methods that do not construct explicit control labels, the principal additional costs of \DCBSDE{} are Brownian-branch evaluations, a separate control regression, and immutable snapshots.

% ============================================================================
% ============================================================================
\section{Numerical Experiments}
\label{sec:experiments}

This section evaluates \DCBSDE{} numerically. We first compare end-to-end accuracy and then assess the learned value process $U$, control process $Z$, and discrete dynamic consistency. We finally examine the roles of the main algorithmic components and whether the observed differences can be attributed solely to training time.

% ============================================================================
\subsection{Experimental Setting and Evaluation Criteria}
\label{subsec:experimental_design}

The experiments cover six benchmark problems. Each problem--method combination uses $n=10$ independent random seeds, paired across methods within each benchmark. For paired descriptive bootstrap intervals, the common seed indices are resampled jointly within each benchmark so that the method pairing is preserved. Appendix~\ref{app:benchmark_definitions} specifies the forward coefficients $(\mu,\sigma)$, generator $f$, terminal condition $g$, reference information, and evaluation endpoints. Table~\ref{tab:benchmark_protocol_revised} summarizes the main-comparison configurations.

\begin{table}[H]
\centering
\caption{Numerical configurations and prespecified primary metrics.}
\label{tab:benchmark_protocol_revised}
\small
\resizebox{\textwidth}{!}{%
\begin{tabular}{@{}llcccll@{}}
\toprule
Benchmark class & Configuration & $d$ & $T$ & $N$ & Reference information & Primary metric \\
\midrule
HJB & HJB-Quadratic & 50 & 1.0 & 40 & Analytic path reference & $U$-path RMSE \\
HJB & Paper-HJB & 100 & 1.0 & 20 & Monte Carlo initial-value reference & $|U_0-U_{0,\mathrm{ref}}|$ \\
Allen--Cahn & Allen--Cahn & 100 & 0.3 & 20 & Branching-diffusion initial-value reference & $|U_0-U_{0,\mathrm{ref}}|$ \\
Burgers & Burgers--20 & 20 & 1.0 & 80 & Analytic path reference & $U$-path RMSE \\
Quadratic-gradient & Quadratic-gradient & 100 & 1.0 & 30 & Manufactured path reference & $U$-path RMSE \\
Reaction--diffusion & Reaction--diffusion & 100 & 1.0 & 30 & Manufactured path reference & $U$-path RMSE \\
\bottomrule
\end{tabular}}
\end{table}

The classical baselines are DeepBSDE-style, DBDP1, and DBDP2. DeepBSDE-style optimizes the initial value and all time-level controls through a global terminal loss. DBDP1 proceeds backward and jointly approximates $U_n$ and $Z_n$ using a one-step loss, whereas DBDP2 approximates the value function and recovers the control from its spatial gradient. By contrast, at each level \DCBSDE{} constructs an explicit Brownian conditional-moment label from the frozen successor value, trains the control, solves the implicit value target, and performs the within-layer coordination described in Section~\ref{sec:method}.

We additionally compare \DCBSDE{} with DPI, a recent regression-based solver for high-dimensional PDEs \cite{HanHuLongZhao2026}. At each Picard iteration, DPI generates value and gradient labels from Feynman--Kac and variance-reduced Bismut--Elworthy--Li representations and fits a space--time network through gradient-augmented regression.

Within each benchmark, all five methods use the same forward discretization, paired training and test seeds, and evaluation grid. Network architectures, optimizers, batch sizes, learning rates, training steps, stopping rules, and method-specific budgets were fixed before the formal runs. Reference solutions and test metrics were reserved for final evaluation. Experiments were conducted on five NVIDIA GeForce RTX~5090 GPUs using Python~3.12.3, PyTorch~2.8.0, and CUDA~12.8. Each training run used one GPU, with independent benchmark--method--seed jobs distributed across devices; wall-clock runs were executed without competing jobs on the same GPU.

For HJB-Quadratic, Burgers--20, Quadratic-gradient, and Reaction--diffusion, pathwise references are available, and the primary metric is
\begin{equation}
\operatorname{RMSE}_{U,\mathrm{path}}=
\left[\frac{1}{N}\sum_{n=0}^{N-1}\mathbb E\left|\widehat U_n(\bar X_n)-u(t_n,\bar X_n)
\right|^2\right]^{1/2}.
\label{eq:exp_u_path_rmse}
\end{equation}
Here, for each evaluated trained method, we denote its returned value and control approximations by
\[
\widehat U_n:=U_n^{\mathrm{snap}}, \qquad \widehat Z_n:=Z_n^{\mathrm{snap}}.
\]

For Allen--Cahn and Paper-HJB, only an independently computed initial-value reference is available, so the primary metric is
\begin{equation}
E_0 = \left|\widehat U_0(x_0)-U_{0,\mathrm{ref}}\right|.
\label{eq:exp_u0_error}
\end{equation}
Because these metrics evaluate different quantities, rankings and ratios are reported only within the same benchmark and under the same metric.

When a reference control path is available, we also report
\begin{equation}
\operatorname{RMSE}_{Z,\mathrm{path}} =
\left[
\frac{1}{N}\sum_{n=0}^{N-1}\mathbb E\left\|
\widehat Z_n(\bar X_n)-\sigma(t_n,\bar X_n)^{\mathsf T} \nabla_xu(t_n,\bar X_n)
\right\|_2^2
\right]^{1/2}.
\label{eq:exp_z_path_rmse}
\end{equation}

Two reference-free diagnostics assess the discrete dynamics. The one-step BSDE residual is
\begin{equation}
\begin{aligned}
\widehat r_n={}&\widehat U_{n+1}(\bar X_{n+1})-\widehat U_n(\bar X_n)\\
&+h_nf\!\left(t_n,\bar X_n,\widehat U_n(\bar X_n),\widehat Z_n(\bar X_n)\right)-\widehat Z_n(\bar X_n)^{\mathsf T}\Delta W_n,
\end{aligned}
\label{eq:exp_one_step_residual}
\end{equation}
The corresponding pathwise residual RMSE is
\begin{equation}
\operatorname{RMSE}_{\mathrm{res}}:=\left[\frac1N\sum_{n=0}^{N-1}\E|\widehat r_n|^2\right]^{1/2}.
\label{eq:exp_one_step_residual_rmse}
\end{equation}

For the terminal-rollout diagnostic, initialize $\widetilde Y_0=\widehat U_0(\bar X_0)$ and propagate
\begin{equation}
\widetilde Y_{n+1} = \widetilde Y_n -h_nf\!\left( t_n,\bar X_n,\widetilde Y_n, \widehat Z_n(\bar X_n)
\right) +\widehat Z_n(\bar X_n)^{\mathsf T}\Delta W_n.
\label{eq:exp_terminal_rollout}
\end{equation}
The corresponding RMSE is
\begin{equation}
\operatorname{RMSE}_{\mathrm{roll}}:=\left(\E\left|\widetilde Y_N-g(\bar X_N)\right|^2\right)^{1/2}.
\label{eq:exp_terminal_rollout_rmse}
\end{equation}

The one-step residual measures local discrete consistency, whereas the terminal rollout measures accumulated discrepancy.

% ============================================================================
\subsection{Benchmark Comparison under the Primary Metrics}
\label{subsec:main_matrix_results_revised}

\medskip 
\noindent\textit{(Comparison with established deep BSDE solvers.}
Tables~\ref{tab:main_path_results_revised} and~\ref{tab:main_u0_results_revised} report the primary value metrics. Entries are the mean $\pm$ sample standard deviation of the corresponding error metric over $n=10$ paired seeds. Boldface indicates the smallest observed mean in each row. Across the six configurations, \DCBSDE{} attains the smallest observed mean in five. On Burgers--20, DeepBSDE-style yields a slightly smaller mean $U$-path RMSE than \DCBSDE{} ($0.0091$ versus $0.0108$).

\begin{table}[H]
\centering
\caption{Prespecified primary value errors, $U-path RMSE$, for the four benchmarks with pathwise value references.}
\label{tab:main_path_results_revised}
\small
\resizebox{\textwidth}{!}{%
\begin{tabular}{@{}lcccc@{}}
\toprule
Benchmark
& \DCBSDE{}
& DeepBSDE-style
& DBDP1
& DBDP2 \\
\midrule

HJB-Quadratic
& $\bm{0.1173\pm0.0239}$
& $2.3177\pm0.4904$
& $20.2363\pm0.1260$
& $20.3243\pm0.1323$ \\

Burgers--20
& $0.0108\pm0.0019$
& $\bm{0.0091\pm0.0019}$
& $0.0664\pm0.0099$
& $0.0611\pm0.0091$ \\

Quadratic-gradient
& $\bm{0.0373\pm0.0004}$
& $0.1868\pm0.0040$
& $0.0733\pm0.0046$
& $0.0695\pm0.0041$ \\

Reaction--diffusion
& $\bm{0.0041\pm5.352\times10^{-4}}$
& $0.0129\pm0.0021$
& $0.0667\pm0.0028$
& $0.0608\pm0.0037$ \\

\bottomrule
\end{tabular}}
\end{table}

\begin{table}[H]
\centering
\caption{Prespecified primary value errors, $|U_0-U_{0,\mathrm{ref}}|$, for the two benchmarks with independently computed initial-value references. }
\label{tab:main_u0_results_revised}
\small
\resizebox{\textwidth}{!}{%
\begin{tabular}{@{}lcccc@{}}
\toprule
Benchmark
& \DCBSDE{}
& DeepBSDE-style
& DBDP1
& DBDP2 \\
\midrule

Allen--Cahn
& $\bm{9.940\times10^{-5}\pm1.310\times10^{-5}}$
& $2.926\times10^{-4}\pm2.943\times10^{-4}$
& $2.311\times10^{-4}\pm1.910\times10^{-4}$
& $1.542\times10^{-4}\pm9.951\times10^{-5}$ \\

Paper-HJB
& $\bm{7.544\times10^{-4}\pm2.613\times10^{-4}}$
& $0.0093\pm0.0037$
& $3.9906\pm3.640\times10^{-5}$
& $0.0102\pm0.0047$ \\

\bottomrule
\end{tabular}}
\end{table}

Figure~\ref{fig:strongest_baseline_vs_full} compares \DCBSDE{} with the baseline having the smallest observed mean within each configuration. The ratio is defined as ``best observed baseline/\DCBSDE{}''; values above $1$ favor \DCBSDE{}, whereas values below $1$ favor the selected baseline. Comparisons with all three baselines are reported in Appendix~\ref{app:extended_experiments}.

\begin{figure}[H]
\centering
\includegraphics[width=0.86\textwidth]
{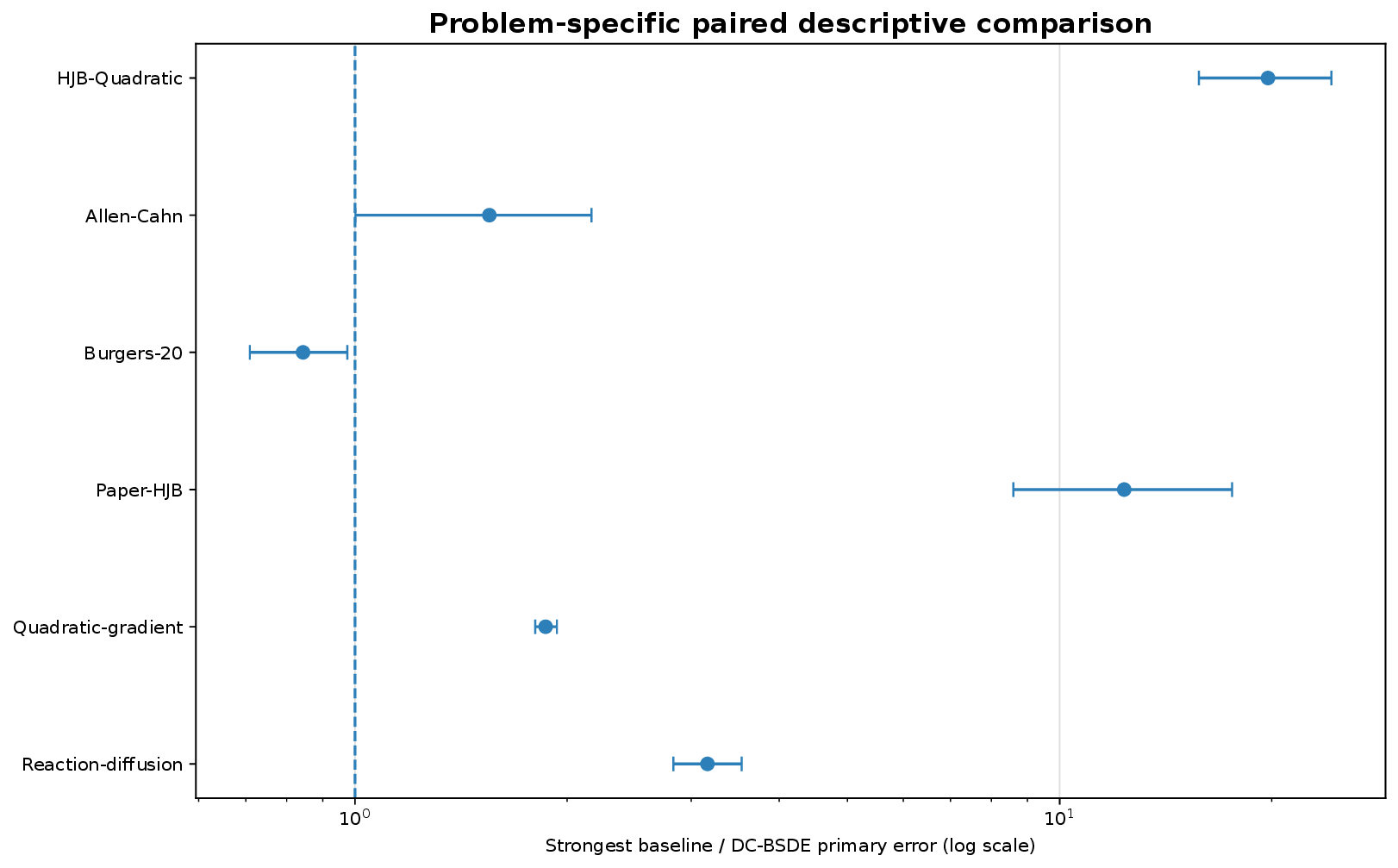}
\caption{Ratio of the primary error of the best observed baseline to that of \DCBSDE{}, with $95\%$ paired descriptive bootstrap intervals}
\label{fig:strongest_baseline_vs_full}
\end{figure}

The largest observed separations occur on HJB-Quadratic and Paper-HJB, where the best-baseline/\DCBSDE{} ratios are $19.753$ and $12.347$, respectively. The corresponding ratios are $1.864$ for Quadratic-gradient, $3.164$ for Reaction--diffusion, and $1.551$ for Allen--Cahn, all favoring \DCBSDE{}.

The ranking is reversed on Burgers--20, where DeepBSDE-style attains a smaller mean value-path error, corresponding to a best-baseline/\DCBSDE{} ratio of $0.844$. Thus, the primary value metric does not establish uniform dominance by any method and motivates the joint examination of control accuracy and discrete dynamic consistency.

The comparatively strong DeepBSDE-style result on Burgers--20 may be attributed to the smooth solution's effectively one-dimensional dependence on the averaged state and to the effectiveness of its global terminal objective in transmitting a stable end-to-end training signal. This advantage is confined to the primary value metric; the control and dynamic diagnostics below reveal a different accuracy trade-off.

\medskip 
\noindent\textit{Comparison with Deep Picard Iteration.}
All $60$ DPI benchmark--seed jobs completed the prescribed computational workflow. However, a post-hoc stability audit identified divergent runs on HJB-Quadratic and Paper-HJB. Table~\ref{tab:dpi_primary_results} reports the DPI pass counts and value errors computed only over runs that passed the audit, whereas the \DCBSDE{} statistics use all ten seeds. 

\begin{table}[H]
\centering
\caption{Descriptive comparison of DPI and \DCBSDE{} under the benchmark-specific primary value metrics.}
\label{tab:dpi_primary_results}
\scriptsize
\resizebox{\textwidth}{!}{%
\begin{tabular}{@{}llcccc@{}}
\toprule
Benchmark
& Primary metric
& DPI passes
& DPI: conditional mean $\pm$ SD
& \DCBSDE{}: mean $\pm$ SD
& DPI/\DCBSDE{} \\
\midrule

HJB-Quadratic
& $U$-path RMSE
& $5/10$
& $11.9935\pm0.4734$
& $\mathbf{0.1173\pm0.0239}$
& $102.217$ \\

Allen--Cahn
& $E_0$
& $10/10$
& $6.828\times10^{-4}\pm1.782\times10^{-4}$
& $\mathbf{9.940\times10^{-5}\pm1.310\times10^{-5}}$
& $6.869$ \\

Paper-HJB
& $E_0$
& $1/10$
& $0.9168$ $(n=1)$
& $\mathbf{7.544\times10^{-4}\pm2.613\times10^{-4}}$
& $1215.209$ \\

Burgers--20
& $U$-path RMSE
& $10/10$
& $\mathbf{0.00298\pm5.015\times10^{-4}}$
& $0.01084\pm0.0019$
& $0.275$ \\

Quadratic-gradient
& $U$-path RMSE
& $10/10$
& $0.0935\pm0.0054$
& $\mathbf{0.0373\pm3.985\times10^{-4}}$
& $2.508$ \\

Reaction--diffusion
& $U$-path RMSE
& $10/10$
& $0.0128\pm0.0011$
& $\mathbf{0.0041\pm5.352\times10^{-4}}$
& $3.135$ \\

\bottomrule
\end{tabular}}
\end{table}

For Allen--Cahn, Burgers--20, Quadratic-gradient, and Reaction--diffusion, all ten DPI runs passed the stability audit, allowing direct comparisons based on the same number of paired seeds. \DCBSDE{} attains smaller primary value errors on Allen--Cahn, Quadratic-gradient, and Reaction--diffusion. On Burgers--20, however, DPI performs better, reducing the mean $U$-path RMSE from $0.01084$ to $0.00298$, or to $27.5\%$ of the \DCBSDE{} error.

DPI exhibits non-convergent runs on both HJB-Quadratic and Paper-HJB, with only five and one of the ten runs, respectively, passing the stability audit. Even among the passing runs, the conditional mean errors of DPI are substantially larger than the corresponding unconditional ten-seed means of \DCBSDE{}, with DPI/\DCBSDE{} ratios of $102.217$ and $1215.209$, respectively.
% ============================================================================
\subsection{Control Accuracy and Dynamic Diagnostics}
\label{subsec:uz_diagnostics_results_revised}

\medskip 
\noindent\textit{Comparison with established deep BSDE solvers.}
Table~\ref{tab:DCBSDE_vs_strongest_diagnostics_revised} reports the control-path error and two reference-free dynamic diagnostics. For each benchmark, the selected baseline is the method with the smallest observed mean primary value error. A dash indicates that no validated reference control path is available. Complete diagnostics for all four methods are reported in Appendix~\ref{app:extended_experiments}.

\begin{table}[H]
\centering
\caption{Mean control and dynamic-consistency diagnostics for \DCBSDE{} and the baseline with the smallest observed mean primary value error.}
\label{tab:DCBSDE_vs_strongest_diagnostics_revised}
\scriptsize
\setlength{\tabcolsep}{3.5pt}
\resizebox{\textwidth}{!}{%
\begin{tabular}{@{}llcccccc@{}}
\toprule
Benchmark
& Selected baseline
& \multicolumn{2}{c}{$Z$-path RMSE}
& \multicolumn{2}{c}{BSDE residual RMSE}
& \multicolumn{2}{c}{Terminal-rollout RMSE} \\
\cmidrule(lr){3-4}
\cmidrule(lr){5-6}
\cmidrule(lr){7-8}
&
& \DCBSDE{} & Baseline
& \DCBSDE{} & Baseline
& \DCBSDE{} & Baseline \\
\midrule

HJB-Quadratic
& DeepBSDE-style
& $\mathbf{0.0768}$ & $0.7567$
& $\mathbf{0.1631}$ & $1.3471$
& $\mathbf{0.9686}$ & $1.5729$ \\

Allen--Cahn
& DBDP2
& \textemdash & \textemdash
& $\mathbf{3.881\times10^{-4}}$ & $2.35\times10^{-3}$
& $\mathbf{0.00152}$ & $0.00520$ \\

Burgers--20
& DeepBSDE-style
& $\mathbf{0.0138}$ & $0.0317$
& $\mathbf{0.00958}$ & $0.0108$
& $0.0481$ & $\mathbf{0.0472}$ \\

Paper-HJB
& DeepBSDE-style
& \textemdash & \textemdash
& $\mathbf{0.00757}$ & $0.0381$
& $\mathbf{0.0337}$ & $0.0518$ \\

Quadratic-gradient
& DBDP2
& $\mathbf{0.0660}$ & $0.1568$
& $\mathbf{0.0161}$ & $0.0947$
& $\mathbf{0.1168}$ & $0.3756$ \\

Reaction--diffusion
& DeepBSDE-style
& $\mathbf{0.00651}$ & $0.0211$
& $\mathbf{0.0112}$ & $0.0141$
& $\mathbf{0.0585}$ & $0.0622$ \\

\bottomrule
\end{tabular}}
\end{table}

Figure~\ref{fig:u_z} displays the joint value--control errors for the four benchmarks with complete pathwise references. Both axes are logarithmic; movement toward the lower-left corner indicates simultaneous reductions in the $U$- and $Z$-path RMSEs.

\begin{figure}[H]
\centering
\includegraphics[width=0.92\textwidth]{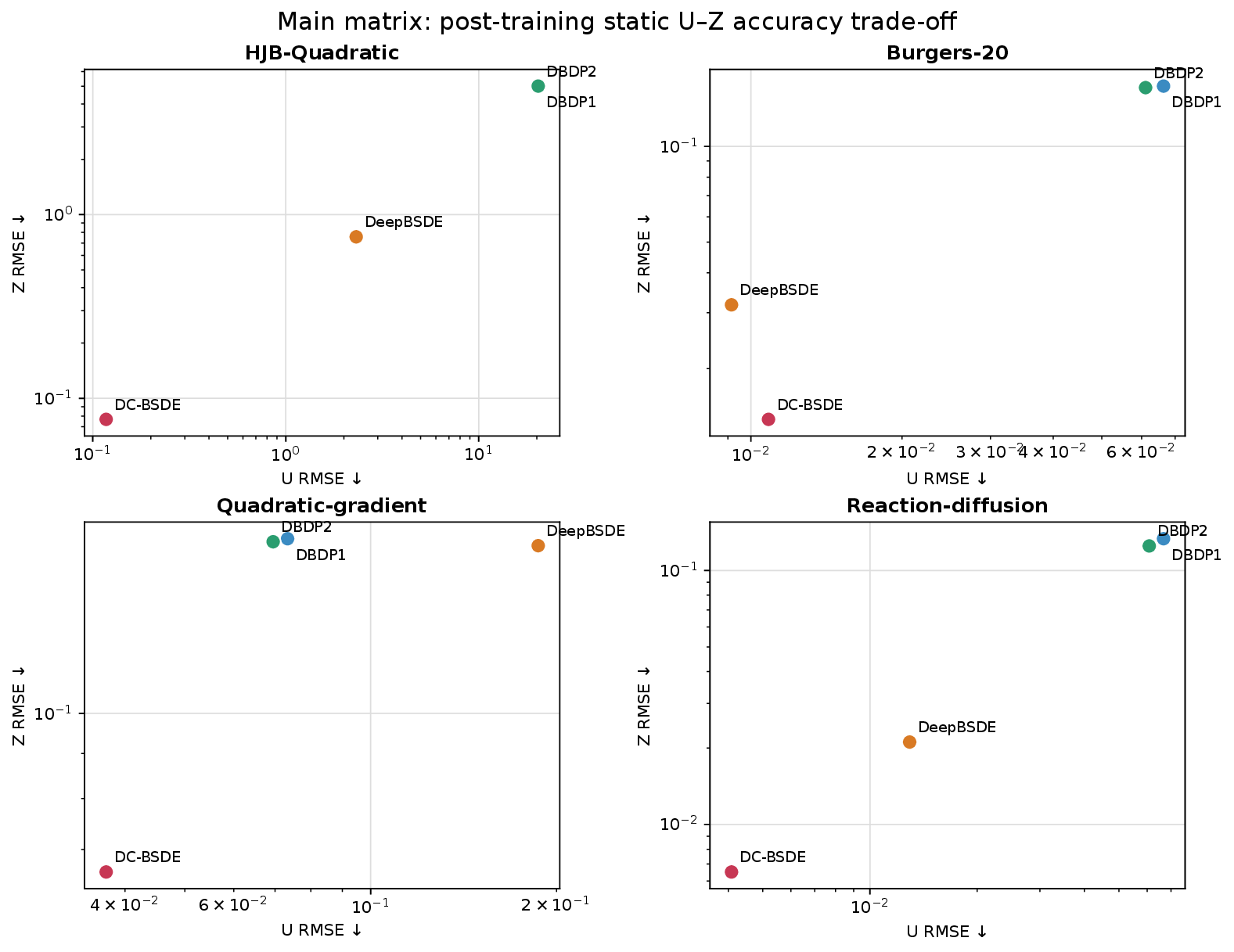}
\caption{Joint $U$-path and $Z$-path errors after training}
\label{fig:u_z}
\end{figure}

On HJB-Quadratic, Quadratic-gradient, and Reaction--diffusion, \DCBSDE{} has smaller mean $U$- and $Z$-path errors than the selected baseline. Its $Z$-path RMSE is approximately $10.1\%$, $42.1\%$, and $30.9\%$ of the corresponding baseline error, respectively. These results are consistent with the intended role of explicit Brownian control supervision: control accuracy can improve independently of the scalar value
error rather than arising only as a by-product of value fitting.

For Allen--Cahn and Paper-HJB, no validated reference control path is available, so no direct conclusion about $Z$ accuracy can be drawn. Nevertheless, \DCBSDE{} yields smaller mean BSDE-residual and terminal-rollout errors than the selected baseline on both benchmarks, supporting improved discrete dynamic consistency.

Burgers--20 exhibits a different trade-off. DeepBSDE-style has the smaller mean $U$-path RMSE, $0.0091$ compared with $0.0108$, whereas the \DCBSDE{} $Z$-path RMSE, $0.0138$, is only $43.6\%$ of the corresponding
DeepBSDE-style error. The two methods therefore occupy different positions on the empirical $U$--$Z$ Pareto front: DeepBSDE-style is preferable under the value-only endpoint, while \DCBSDE{} provides the more accurate control approximation. 

\medskip 
\noindent\textit{Comparison with Deep Picard Iteration.}
The comparison with DPI reveals benchmark-dependent value--control trade-offs. On Burgers--20, all ten DPI runs passed the stability audit, and DPI achieves smaller mean $U$-path RMSE ($0.00298$ versus $0.01084$) and $Z$-path RMSE ($0.0021$ versus $0.0138$) than \DCBSDE{}. On Quadratic-gradient and Reaction--diffusion, \DCBSDE{} has the smaller mean $U$ error, whereas DPI has the smaller mean $Z$ error; the corresponding DPI/\DCBSDE{} ratios for the $Z$-path RMSE are approximately $0.488$ and $0.477$.
On HJB-Quadratic, \DCBSDE{} has smaller mean errors in both $U$ and $Z$ than the DPI runs that passed the stability audit. This comparison is conditional because only five of the ten DPI runs passed the audit, whereas all ten \DCBSDE{} runs completed stably.

Thus, although neither method uniformly dominates the other, \DCBSDE{} exhibits a more consistent overall balance between value and control accuracy across these benchmarks, together with stronger cross-seed stability on the HJB problems.

Figure~\ref{fig:dpi_hjb_stability} illustrates the seed-dependent DPI instability on the HJB benchmarks. The validation losses of runs that failed the audit increase by several orders of magnitude after a seed-dependent number of Picard iterations. Solid curves denote runs passing the post-hoc stability audit, red dashed curves denote runs failing it, and the vertical axes are logarithmic.

\begin{figure}[!htbp]
\centering

\begin{minipage}[t]{0.48\textwidth}
\centering
\includegraphics[width=\linewidth]
{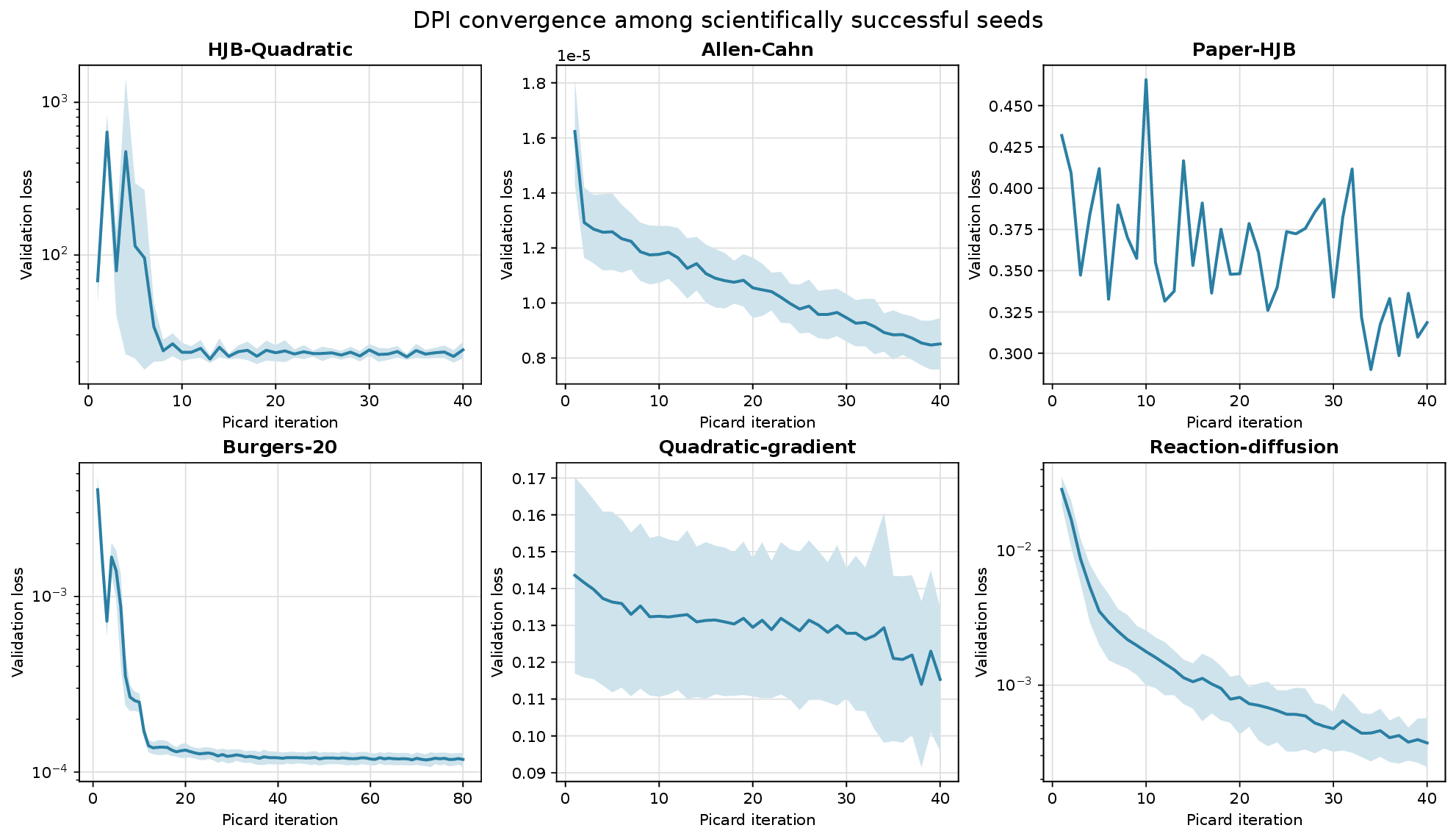}
\end{minipage}
\hfill
\begin{minipage}[t]{0.48\textwidth}
\centering
\includegraphics[width=\linewidth]
{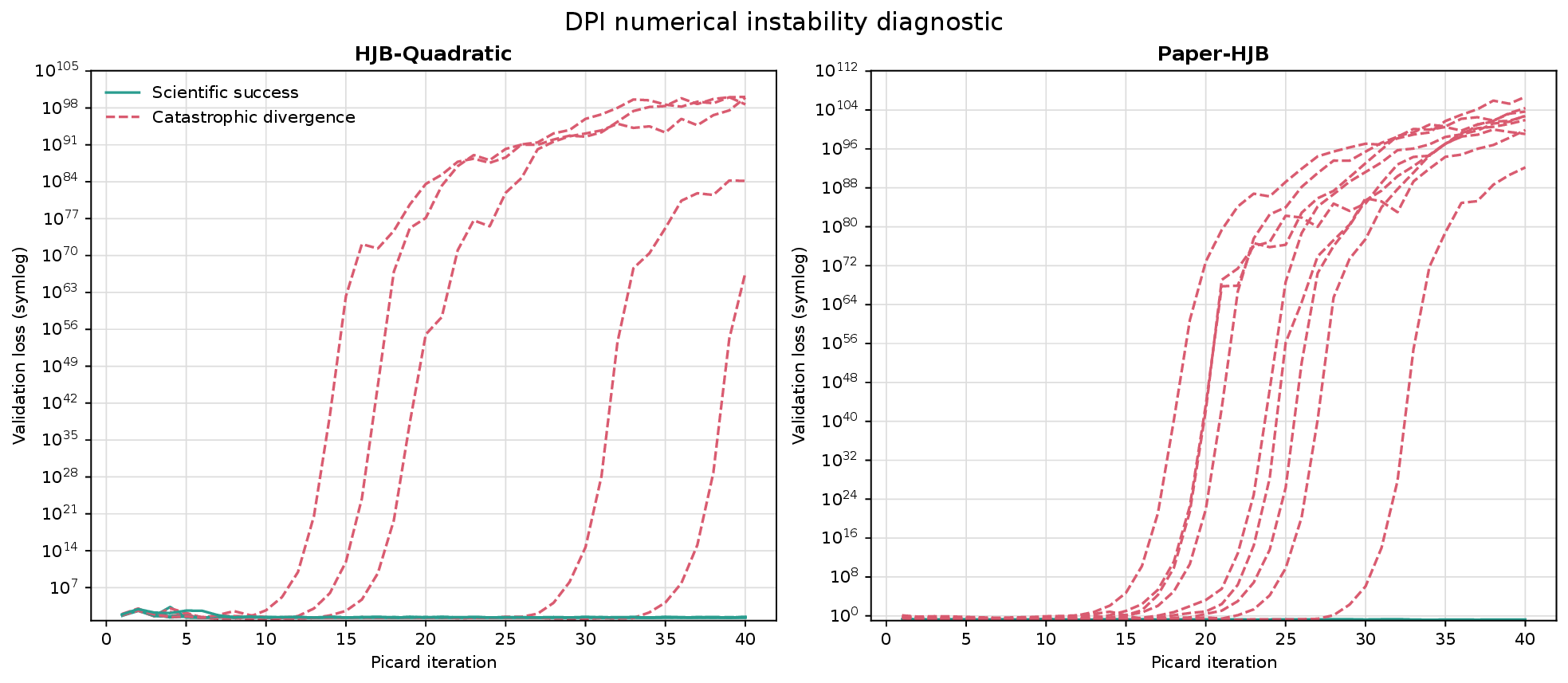}
\end{minipage}

\caption{DPI validation-loss trajectories and divergence diagnostics on the two HJB benchmarks}
\label{fig:dpi_hjb_stability}
\end{figure}

The quadratic Hamiltonian may act as an error-amplification channel during Picard iteration. For $q(z)=-\tfrac12\lVert z\rVert_2^2$, a perturbation $e$ gives $q(z+e)-q(z)=-z^{\mathsf T}e-\tfrac12\lVert e\rVert_2^2$.
Because the fitted network is reused in subsequent iterations, control errors may be repeatedly propagated and amplified. The available DPI theory assumes global Lipschitz continuity and does not provide a uniform contraction bound for this quadratic setting \cite{HanHuLongZhao2026}.

% ============================================================================
\subsection{Component Ablations and Structural-Interface Sensitivity}
\label{subsec:mechanism_results_revised}

We conduct two complementary controlled studies. Component ablations disable selected training, variance-reduction, or within-layer coordination mechanisms in the locked full \DCBSDE{} configuration, whereas the structural-interface study varies the candidate sources supplied to the common screening interface. In both studies, the equation data, discrete backward backbone, prescribed training budget, frozen population control target, and layerwise backward order remain unchanged.

% ============================================================================
\subsubsection{Ablation of Selected Training and Coordination Mechanisms}
\label{subsec:numerical_ablation_results_revised}

Starting from the locked full \DCBSDE{} configuration, we disable selected training and coordination mechanisms. The \emph{local-only} variant removes the shared temporal trunk and its associated replay constraint, retaining independent time-local control representations. The \emph{neural-only $U$} variant suppresses the feature-based explicit branch while keeping the candidate pool and screening outcome fixed. The \emph{no antithetic sampling}, \emph{no control baseline}, and \emph{no $Z$ refit} variants disable antithetic pairing, the frozen linear-response baseline, and control refitting, respectively. 

Quadratic-gradient and Reaction--diffusion both use $d=100$, $T=1$, and $N=30$, with manufactured references for $u$ and $z$ evaluated along common test paths. This permits joint assessment of value error, control error, and the one-step dynamic residual under comparable discretizations. Table~\ref{tab:numerical_ablation_results} reports error ratios relative to the locked full configuration. 

\begin{table}[H]
\centering
\caption{Component-ablation results relative to the locked full \DCBSDE{} configuration.}
\label{tab:numerical_ablation_results}
\scriptsize
\setlength{\tabcolsep}{3.5pt}
\renewcommand{\arraystretch}{1.05}
\begin{tabularx}{\textwidth}{@{}lXcccc@{}}
\toprule
Benchmark
& Ablated variant
& \shortstack{Primary-$U$\\ratio}
& \shortstack{$95\%$ paired\\interval}
& \shortstack{$Z$\\ratio}
& \shortstack{Residual\\ratio} \\
\midrule
Quadratic-gradient
& local-only
& 0.925
& $[0.918,0.932]$
& 2.160
& 2.509 \\
Quadratic-gradient
& neural-only $U$
& 1.878
& $[1.862,1.894]$
& 2.496
& 3.506 \\
Quadratic-gradient
& no antithetic sampling
& 0.940
& $[0.929,0.949]$
& 1.342
& 2.904 \\
Quadratic-gradient
& no control baseline
& 0.932
& $[0.919,0.946]$
& 1.386
& 2.895 \\
Quadratic-gradient
& no $Z$ refit
& 1.030
& $[1.017,1.043]$
& 0.909
& 0.986 \\
\addlinespace[2pt]
Reaction--diffusion
& local-only
& 0.991
& $[0.886,1.109]$
& 73.957
& 6.462 \\
Reaction--diffusion
& neural-only $U$
& 4.827
& $[4.500,5.227]$
& 6.169
& 2.382 \\
Reaction--diffusion
& no antithetic sampling
& 1.203
& $[1.084,1.337]$
& 64.948
& 5.603 \\
Reaction--diffusion
& no control baseline
& 1.135
& $[1.081,1.199]$
& 69.344
& 6.746 \\
Reaction--diffusion
& no $Z$ refit
& 1.658
& $[1.553,1.781]$
& 1.444
& 1.010 \\
\bottomrule
\end{tabularx}
\end{table}

Figure~\ref{fig:numerical_ablations_revised} displays the primary-$U$ error ratios and their $95\%$ paired descriptive bootstrap intervals. The horizontal axis is logarithmic, and the reference line at $1$ separates decreases from increases relative to the full configuration. The corresponding control-error and residual ratios are reported in Table~\ref{tab:numerical_ablation_results}.

\begin{figure}[H]
\centering
\includegraphics[width=0.84\textwidth]{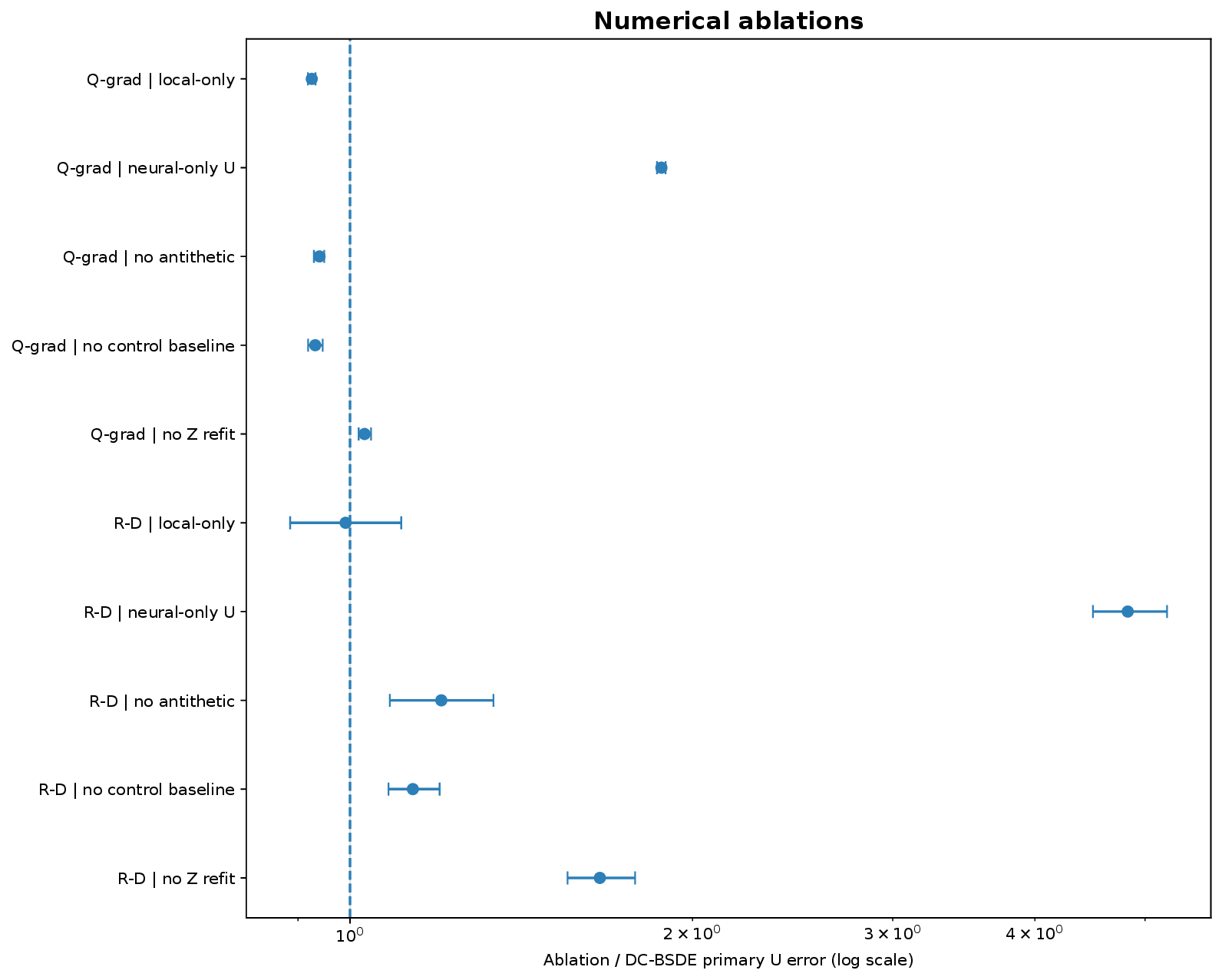}
\caption{Primary-$U$ error ratios for the component ablations}
\label{fig:numerical_ablations_revised}
\end{figure}

\medskip 
\noindent\textit{Feature-based value representation.}
Suppressing the feature-based explicit branch increases the primary $U$ error by factors of $1.878$ and $4.827$ on Quadratic-gradient and Reaction--diffusion, respectively, while the corresponding $Z$-error ratios
increase to $2.496$ and $6.169$. This joint deterioration is consistent with the ``explicit baseline $+$ neural residual'' representation reducing the approximation burden on the neural residual and supporting the final control approximation through value--control coordination. 

\medskip 
\noindent\textit{Shared--local control representation and replay.}
On Quadratic-gradient, the local-only variant slightly reduces the primary $U$ error but increases the $Z$ error and dynamic residual by factors of $2.160$ and $2.509$. On Reaction--diffusion, its primary $U$ error remains close to that of the full method, whereas its $Z$-error and residual ratios reach $73.957$ and $6.462$. This pattern is consistent with the shared--local representation and replay constraint primarily supporting control learning across time levels rather than uniformly improving a value-only endpoint.

\medskip 
\noindent\textit{Finite-branch variance reduction.}
Disabling antithetic sampling or the control-label response baseline substantially increases the control error on Reaction--diffusion, with $Z$-error ratios of $64.948$ and $69.344$, respectively. On Quadratic-gradient, the same ablations slightly reduce the primary $U$ error while increasing both the $Z$ error and dynamic residual. This metric-dependent response is consistent with the intended role of these mechanisms: they act directly on the finite-branch Brownian labels, and their effect is therefore more visible in control accuracy and dynamic consistency than in every value-only metric.

\medskip 
\noindent\textit{Within-layer control refitting.}
Removing control refitting slightly increases the primary $U$ error on Quadratic-gradient while reducing the $Z$ error and residual to $0.909$ and $0.986$ of their full-configuration values, respectively. On Reaction--diffusion, the corresponding $U$-, $Z$-, and residual ratios are $1.658$, $1.444$, and $1.010$. Thus, control refitting can improve joint value--control performance.

% ============================================================================
\subsubsection{Sensitivity to the Structural-Interface Configuration}
\label{subsubsec:representation_results_revised}

The structural-interface study keeps the discrete backward backbone, generic training and coordination mechanisms, and prescribed budgets fixed while varying the candidate-source policies A--D. Table~\ref{tab:profiles_revised} defines the four policies, with D denoting the full \DCBSDE{} configuration and A--no-symmetry denoting a diagnostic variant of A in which both symmetry projections are set to the identity.

\begin{table}[H]
\centering
\caption{Candidate-source policies used by the common structural-screening interface.}
\label{tab:profiles_revised}
\small
\setlength{\tabcolsep}{4pt}
\renewcommand{\arraystretch}{1.08}
\begin{tabularx}{\textwidth}{@{}c
>{\raggedright\arraybackslash}X
>{\raggedright\arraybackslash}X
>{\raggedright\arraybackslash}X@{}}
\toprule
Configuration
& Symmetry candidate
& Value-feature candidates
& Response-baseline candidate \\
\midrule
A
& Automatically proposed and verified; identity if rejected
& Generic automatic feature library
& Automatic differentiation; zero if unavailable \\
B
& Same as A
& Screened equation-informed features
& Same as A \\
C
& Same as A
& Generic automatic feature library
& Validated equation-informed response; automatic-differentiation fallback \\
D
& Verified equation-declared candidate preferred; otherwise the rule of A
& Screened equation-informed features
& Validated equation-informed response; automatic-differentiation fallback \\
\bottomrule
\end{tabularx}
\end{table}

HJB-Quadratic is used to assess the effect of the symmetry projection, whereas Quadratic-gradient and Reaction--diffusion are used to examine sensitivity to the sources of the retained value features and response baselines. Table~\ref{tab:structural_results_revised} reports error ratios relative to the locked full \DCBSDE{} configuration. A ratio below $1$ indicates a smaller error for the candidate policy on the corresponding metric. The reported intervals are $95\%$ paired descriptive bootstrap intervals for the primary-$U$ error ratios.

\begin{table}[H]
\centering
\caption{Sensitivity to the structural-interface configuration.}
\label{tab:structural_results_revised}
\scriptsize
\setlength{\tabcolsep}{3.5pt}
\renewcommand{\arraystretch}{1.05}
\begin{tabularx}{\textwidth}{@{}lXcccc@{}}
\toprule
Benchmark
& Candidate policy
& \shortstack{Primary-$U$\\ratio}
& \shortstack{$95\%$ paired\\interval}
& \shortstack{$Z$\\ratio}
& \shortstack{Residual\\ratio} \\
\midrule
HJB-Quadratic
& A
& 1.000
& $[1.000,1.000]$
& 1.000
& 1.000 \\
HJB-Quadratic
& A--no-symmetry
& 10.267
& $[9.320,11.367]$
& 3.901
& 1.037 \\
\addlinespace[2pt]
Quadratic-gradient
& C
& 0.826
& $[0.814,0.837]$
& 1.288
& 2.439 \\
Quadratic-gradient
& B
& 1.000
& $[1.000,1.000]$
& 1.000
& 1.000 \\
Quadratic-gradient
& A
& 0.826
& $[0.815,0.837]$
& 1.288
& 2.439 \\
Quadratic-gradient
& A--no-symmetry
& 0.988
& $[0.975,1.001]$
& 2.253
& 2.522 \\
\addlinespace[2pt]
Reaction--diffusion
& C
& 1.115
& $[1.058,1.184]$
& 69.302
& 5.964 \\
Reaction--diffusion
& B
& 1.000
& $[1.000,1.000]$
& 1.000
& 1.000 \\
Reaction--diffusion
& A
& 1.115
& $[1.058,1.183]$
& 69.302
& 5.964 \\
\bottomrule
\end{tabularx}
\end{table}

\medskip 
\noindent\textit{Symmetry projection.}
For HJB-Quadratic, policy A induces the same effective configuration as the full policy, whereas forcing the symmetry projections to the identity increases the primary $U$ error by a factor of $10.267$ and the $Z$ error by a factor of $3.901$. The residual ratio increases only to $1.037$. Within this benchmark, the contrast is consistent with the verified symmetry projection reducing the effective hypothesis class by imposing the known invariance. The limited change in the residual also shows that a small one-step residual need not imply an accurate value or control approximation.

\medskip 
\noindent\textit{Value-feature and response-source sensitivity.}
For Quadratic-gradient, policies A and C induce the same effective retained configuration. Their primary $U$ error is $0.826$ of the full-configuration error, while their $Z$ and residual ratios increase to $1.288$ and $2.439$. Policy B induces the full configuration. The A--no-symmetry variant produces a primary $U$ ratio close to one, but its $Z$ and residual ratios rise to $2.253$ and $2.522$. Hence, a candidate-source policy that improves the function-value metric need not simultaneously improve control accuracy or dynamic
consistency.

For Reaction--diffusion, policies A and C again induce the same effective retained configuration, whereas policy B maps to the full configuration. Relative to the full method, A and C increase the primary $U$ error to $1.115$ and the $Z$ error to $69.302$. The much larger change in the control metric shows that the primary value error is comparatively insensitive to the deterioration in the learned control. This comparison is consistent with the alternative retained configuration providing a more suitable joint
value--control representation for this benchmark.

\FloatBarrier

% ============================================================================
\subsection{Computational Cost and Wall-Clock Comparison}
\label{subsec:wallclock_results_revised}

Both DBDP2 and \DCBSDE{} train backward one time level at a time, but they use different mechanisms to identify the control. In the wall-clock-matched experiments, DBDP2 is trained for approximately the same elapsed time as the complete \DCBSDE{} pipeline. This tests whether the accuracy gap can be explained merely by the shorter native training time of DBDP2.
Table~\ref{tab:wallclock_results} reports the resulting descriptive comparison. The error ratio is ``matched DBDP2/\DCBSDE{}.''

\begin{table}[H]
\centering
\caption{Wall-clock-matched comparison with DBDP2.}
\label{tab:wallclock_results}
\small
\resizebox{\textwidth}{!}{%
\begin{tabular}{@{}lrrrrrrr@{}}
\toprule
Benchmark & Matched budget (s) & \DCBSDE{} time (s) & DBDP2 time (s) & Time ratio
& \DCBSDE{} error & DBDP2 error [min,max] & Error ratio \\
\midrule
Quadratic-gradient & 2121.579 & 2121.6 & 2122.3 & 1.000
& 0.0373 & 0.1186 $[0.0979,0.1704]$ & 3.18 \\
Reaction--diffusion & 1411.963 & 1390.4 & 1412.7 & 1.016
& 0.00408 & 0.0286 $[0.0231,0.0422]$ & 7.01 \\
\bottomrule
\end{tabular}}
\end{table}

The matched time ratios are $1.000$ and $1.016$, whereas the corresponding DBDP2/\DCBSDE{} primary-error ratios are $3.18$ and $7.01$. Every paired seed favors \DCBSDE{} on both equations. Within these two configurations, merely extending the DBDP2 training time therefore does not remove the observed accuracy gap.

% ============================================================================
% ============================================================================
\section{Conclusion}
\label{sec:conclusion}

Motivated by Gaussian perturbation and conditional regression in denoising score matching, we proposed Deep-Control BSDE (\DCBSDE{}), a layerwise control-regression method for high-dimensional Markovian BSDEs. At each time level, the frozen successor value function defines an explicit Brownian conditional-moment target for the current control. The control is regressed first, the value is then updated through the implicit BSDE relation, and the two approximations are coordinated before being stored as immutable snapshots. The Gaussian--Stein identity further relates this control target to the diffusion-directional gradient of the successor value function.

The hierarchy $z_n^\pi$, $z_n^{\mathrm{tar}}$, $\widehat Z_n^M$, and $Z_n^\theta$ separates time-discretization, successor-propagation, finite-branch, and local learning errors. Antithetic Brownian pairing and a frozen linear-response baseline reduce finite-branch fluctuations without changing the conditional mean of the labels. Generic coordination mechanisms stabilize layerwise training, while optional equation-informed structural interfaces improve representation without altering the population control target or the backward dependency structure. We establish stability of the exact discrete operators, residual-driven backward stability of the learned recursion, and conditional consistency when the discretization, local learning and numerical, and finite-branch errors vanish jointly.

Across six numerical configurations, \DCBSDE{} achieves the lowest mean primary error in five cases. Among benchmarks with complete pathwise references, it simultaneously improves the $U$- and $Z$-path errors in three cases. Burgers--20 instead exhibits a clear Pareto trade-off between value and control accuracy. Structural comparisons and component ablations further show that a small value error does not necessarily imply an accurate or stable value gradient, highlighting the importance of jointly evaluating value accuracy, control accuracy, and dynamic consistency. Wall-clock-matched experiments also indicate that the observed gains cannot generally be reproduced by simply extending the training time of a baseline method.

Beyond estimating $u(0,x_0)$, \DCBSDE{} provides approximations $U_n(x)\approx u(t_n,x), \qquad
Z_n(x)\approx \sigma(t_n,x)^\top\nabla_xu(t_n,x)$ at multiple time levels. It is therefore particularly suitable for applications requiring reliable joint approximation of the value and control processes, gradient sensitivities, or feedback policies, especially when control errors significantly affect backward value propagation through the generator.

Future work will extend the analysis beyond the global Lipschitz setting, develop adaptive allocation of time steps, training states, Brownian branches, and coordination budgets, and evaluate whether improved control accuracy translates into better performance in stochastic control, risk management, derivative pricing and hedging, and high-dimensional feedback-policy learning.

%%%%%%%%%%%%%%%%%%%%%%%%%%%%%%%%%%%%%%%%%%%%%%%%%%%%%%%%%%%%%%%%%%%%%%%%%%%%%%%%%%%%%%%%%%%%%%%%%%%%%%%%%%%%%%%%%%%%%%%%%%%%%%%%%%%%%%%%%%%%%%%%%%%%%%%%%%%%%%%%%%%%%%%%%%%%%%%%%%%%%%%%%%%%%%%%%%%%%%%%%%%%%%%%%%%%%%%%%%%%%%%%%%%%

\appendix

\providecommand{\dd}{\mathrm{d}}
\providecommand{\one}{\bm{1}}
\providecommand{\sigmoid}{\operatorname{sigmoid}}
\providecommand{\DCBSDE}{\textsc{DC-BSDE}}
% ============================================================================
% ============================================================================
\section{Benchmark Equations, Reference Information, and Evaluation Endpoints}
\label{app:benchmark_definitions}

This section gives a unified description of the continuous models, temporal discretization parameters, terminal conditions, reference information, and evaluation endpoints used for the six main numerical configurations and the supplementary stress tests. Unless stated otherwise, all pathwise errors are computed along independent test trajectories.

Unless otherwise specified, every benchmark is written in the form
\begin{align*}
    \dd X_t&=\mu(t,X_t)\,\dd t+\sigma(t,X_t)\,\dd W_t, \qquad X_0=x_0,\\
    Y_t&=g(X_T)+\int_t^T f(s,X_s,Y_s,Z_s)\,\dd s-\int_t^T Z_s^{\mathsf T}\,\dd W_s.
\end{align*}
We adopt the Markovian convention
\[
Y_t=u(t,X_t), \qquad
Z_t=\sigma(t,X_t)^{\mathsf T}\nabla_xu(t,X_t),
\]
so that the corresponding semilinear parabolic PDE is
\[
\begin{cases}
\displaystyle
    \partial_tu+\mu(t,x)^{\mathsf T}\nabla_xu
    +\dfrac12\operatorname{tr}\!\left(
       \sigma(t,x)\sigma(t,x)^{\mathsf T}D_x^2u
     \right)
    +f\!\left(t,x,u,\sigma(t,x)^{\mathsf T}\nabla_xu\right)=0,\\[1.5mm]
    u(T,x)=g(x).
\end{cases}
\]
The forward processes have constant diffusion coefficients, and all benchmarks use Euler--Maruyama time discretization.

The appendix contains both benchmarks satisfying the baseline Lipschitz assumptions of Section~\ref{sec:preliminaries} and empirical extensions beyond that setting. HJB-Quadratic, Paper-HJB, and Quadratic-gradient contain quadratic-gradient terms. Allen--Cahn contains a cubic reaction term that is not globally Lipschitz. The Burgers equation contains a multiplicative value--control coupling. These examples are used to test finite-budget robustness beyond the scope of the baseline theory.

% ============================================================================
\subsection{HJB-Quadratic}
\label{appsubsec:hjb_quadratic_definition}

\[
    d=50,
    \qquad T=1,
    \qquad N=40,
    \qquad x_0=\bm0,
\]
and
\[
\dd X_t=\dd W_t,\qquad f(t,x,y,z)=-\frac12\lVert z\rVert_2^2, \qquad g(x)=\lVert x\rVert_2^2.
\]
The corresponding PDE is
\[
\partial_tu+\frac12\Delta u-\frac12\lVert\nabla_xu\rVert_2^2=0,\qquad u(T,x)=\lVert x\rVert_2^2.
\]
The analytic solution and control are
\begin{align*}
    u(t,x)&=\frac d2\log\!\left(1+2(T-t)\right)+\frac{\lVert x\rVert_2^2}{1+2(T-t)},\\
    z(t,x)&=\frac{2x}{1+2(T-t)}.
\end{align*}
The primary metric is the $U$-path RMSE, and the $Z$-path RMSE is also reported. The initial value is $u(0,\bm0)=25\log 3\approx27.4653$. The solution depends only on $\lVert x\rVert_2^2$ and therefore has an explicit orthogonal-invariance structure.

% ============================================================================
\subsection{Cole--Hopf HJB (Paper-HJB)}
\label{appsubsec:paper_hjb_definition}

To distinguish it from HJB-Quadratic, we refer to the 100-dimensional HJB benchmark from the literature as Cole--Hopf HJB; its experiment-archive label is Paper-HJB. The parameters are
\[
    d=100,
    \qquad T=1,
    \qquad N=20,
    \qquad x_0=\bm0,
\]
with
\[
\dd X_t=\sqrt2\,\dd W_t, \qquad
f(t,x,y,z)=-\frac12\lVert z\rVert_2^2,\qquad
g(x)=\log\!\left(\frac{1+\lVert x\rVert_2^2}{2}\right).
\]
The corresponding PDE is
\[
\partial_tu+\Delta u-\lVert\nabla_xu\rVert_2^2=0, \qquad
u(T,x)=\log\!\left(\frac{1+\lVert x\rVert_2^2}{2}\right).
\]
The Cole--Hopf transformation gives the probabilistic representation
\[
u(t,x)=-\log\mathbb{E}\!\left[ \exp\!\left(-g\bigl(x+\sqrt2 W_{T-t}\bigr)\right) \right].
\]
The experiments use the independent Monte Carlo reference
\[
 u_{\mathrm{ref}}(0,x_0)=4.5901
\]
and take $|\widehat u(0,x_0)-4.5901|$ as the primary metric \cite{EHanJentzen2017}. This initial-value reference is not used to construct a full-path reference without independent error control.

% ============================================================================
\subsection{Allen--Cahn}
\label{appsubsec:allen_cahn_definition}

This benchmark follows the standard 100-dimensional Allen--Cahn setting \cite{EHanJentzen2017}:
\[
    d=100,
    \qquad T=0.3,
    \qquad N=20,
    \qquad x_0=\bm0,
\]
with
\[
\dd X_t=\sqrt2\,\dd W_t,\qquad
 f(t,x,y,z)=y-y^3, \qquad
    g(x)=\frac{1}{2+\frac25\lVert x\rVert_2^2}.
\]
The corresponding PDE is
\[
\partial_tu+\Delta u+u-u^3=0,\qquad
u(T,x)=\frac{1}{2+\frac25\lVert x\rVert_2^2},
\]
and $Z_t=\sqrt2\,\nabla_xu(t,X_t)$. The experiments use the independent branching-diffusion reference
\[
u_{\mathrm{ref}}(0,x_0)=0.052802
\]
and take $|\widehat u(0,x_0)-0.052802|$ as the primary metric.

% ============================================================================
\subsection{Burgers--20}
\label{appsubsec:burgers_definitions}

For a given dimension $d$, let
\[
\dd X_t=d\,\dd W_t,\qquad
\sigma=dI_d,
\]
and
\begin{align*}
    f(t,x,y,z) &=\left(y-\frac{d+2}{2d}\right)\sum_{i=1}^d z_i,\\
    g(x) &=\sigmoid\left(T+\frac1d\sum_{i=1}^d x_i\right).
\end{align*}
The corresponding PDE is
\[
\partial_tu+\frac{d^2}{2}\Delta u +d\left(u-\frac{d+2}{2d}\right) \sum_{i=1}^d\partial_{x_i}u=0.
\]
The analytic solution and control are
\begin{align*}
    u(t,x) &= \ sigmoid \left(t+\frac1d\sum_{i=1}^d x_i\right),\\
    z(t,x) &=u(t,x)\bigl[1-u(t,x)\bigr]\one_d.
\end{align*}
Burgers--20 uses
\[
    d=20,
    \qquad T=1,
    \qquad N=80,
    \qquad x_0=\bm0,
\]
and is a main-text configuration with $U$-path RMSE as the primary metric,  $u(0,\bm0)=1/2$. The solution depends only on the coordinate mean and therefore has a low-dimensional ridge structure.

% ============================================================================
\subsection{Quadratic-Gradient}
\label{appsubsec:quadratic_gradient_definition}

\[
    d=100,
    \qquad T=1,
    \qquad N=30,
    \qquad x_0=\bm0,
    \qquad \alpha=0.4,
\]
and define
\begin{align*}
    s(t,x)&=T-t+\lVert x\rVert_2^2,
    &F(s)&=\sin(s^\alpha),
    &\psi_{\mathrm{QG}}(t,x)&=F(s(t,x)).
\end{align*}
The required derivatives are
\begin{align*}
    F'(s) &=\alpha s^{\alpha-1}\cos(s^\alpha),\\
    F''(s) &=\alpha(\alpha-1)s^{\alpha-2}\cos(s^\alpha)-\alpha^2s^{2\alpha-2}\sin(s^\alpha),\\
    \partial_t\psi_{\mathrm{QG}}&=-F'(s),\\
    \nabla_x\psi_{\mathrm{QG}} &=2F'(s)x,\\
    \Delta\psi_{\mathrm{QG}} &=2dF'(s)+4\lVert x\rVert_2^2F''(s).
\end{align*}
The forward process is $\dd X_t=\dd W_t$. Define
\[
f(t,x,y,z)={}\lVert z\rVert_2^2-\lVert\nabla_x\psi_{\mathrm{QG}}(t,x)\rVert_2^2
      -\partial_t\psi_{\mathrm{QG}}(t,x)-\frac12\Delta\psi_{\mathrm{QG}}(t,x),
\]

and
\[
g(x)=\sin\!\left(\lVert x\rVert_2^{2\alpha}\right).
\]
Then
\[
\begin{aligned}
0={}&\partial_t u+\frac12\Delta u+\lVert\nabla_x u\rVert_2^2-\lVert\nabla_x\psi_{\mathrm{QG}}\rVert_2^2 -\partial_t\psi_{\mathrm{QG}}-\frac12\Delta\psi_{\mathrm{QG}}.
\end{aligned}
\]
and the manufactured exact solution and control are
\begin{align*}
    u(t,x)&=\psi_{\mathrm{QG}}(t,x)=\sin\!\left[ \left(T-t+\lVert x\rVert_2^2\right)^\alpha\right],\\
    z(t,x)&=2F'(s(t,x))x.
\end{align*}
The primary metric is the $U$-path RMSE, and the $Z$-path RMSE is also reported. The initial value is $u(0,\bm0)=\sin(1)\approx0.8414709848$. Since $\alpha<1$, the negative powers in the derivatives can be numerically unstable as $s\downarrow0$. The implementation evaluates the relevant derivatives using the safeguard $s_\varepsilon=\max\{s,10^{-12}\}$, while retaining the original terminal function and analytic function values for evaluation.

% ============================================================================
\subsection{Reaction--Diffusion}
\label{appsubsec:reaction_diffusion_definition}

This manufactured-solution benchmark uses
\[
    d=100,
    \qquad T=1,
    \qquad N=30,
    \qquad x_0=\bm0,
    \qquad \kappa=0.6,
    \qquad \lambda=d^{-1/2},
\]
and
\[
\psi_{\mathrm{RD}}(t,x)=1+\kappa +\sin\!\left(\lambda\sum_{i=1}^d x_i\right)\exp\!\left[\frac{\lambda^2d}{2}(t-T)\right].
\]
The forward process is $\dd X_t=\dd W_t$, and the generator and terminal condition are
\[
\begin{aligned}
    f(t,x,y,z) &=\min\left\{1,\left[y-\psi_{\mathrm{RD}}(t,x)\right]^2 \right\},\\
    g(x) &=1+\kappa+\sin\!\left(\lambda\sum_{i=1}^d x_i\right).
\end{aligned}
\]
The corresponding PDE is
\[
\partial_tu+\frac12\Delta u+\min\left\{1, \left[u-\psi_{\mathrm{RD}}(t,x)\right]^2 \right\}=0.
\]
Its exact solution and control are
\begin{align*}
    u(t,x)&=\psi_{\mathrm{RD}}(t,x),\\
    z(t,x)&=\lambda\cos\!\left(\lambda\sum_{i=1}^d x_i\right)\exp\!\left[\frac{\lambda^2d}{2}(t-T)\right]\one_d.
\end{align*}
The primary metric is the $U$-path RMSE, and the $Z$-path RMSE is also reported. The initial value is $u(0,\bm0)=1+\kappa=1.6$.

% ============================================================================
\section{Complete Results for the Main Experiments}
\label{app:extended_experiments}
% ============================================================================
\subsection{Primary-Error Comparisons with Classical Baselines and DPI}
\label{appsubsec:all_pairwise}

Table~\ref{tab:all_pairwise_main} reports the comparisons with DeepBSDE-style, DBDP1, and DBDP2. These methods use the original main-matrix primary metrics and the common paired-seed protocol, with ten paired runs for every benchmark--method combination. 

Table~\ref{tab:dpi_summary} combines the DPI primary-error comparisons with the available DPI control-path diagnostics. The primary-error ratios use the same benchmark-specific absolute primary metrics as in the main text. For HJB-Quadratic and Paper-HJB, the DPI numerator is the conditional mean over the $5$ and $1$ runs passing the stated stability audit, respectively, whereas the remaining four DPI comparisons use all ten runs. The \DCBSDE{} denominator always uses all ten seeds. The HJB-Quadratic and Paper-HJB ratios are therefore audit-conditioned descriptive summaries.

\begin{table}[H]
\centering
\caption{Classical baseline--\DCBSDE{} comparisons under the common paired-seed protocol}
\label{tab:all_pairwise_main}
\scriptsize
\resizebox{\textwidth}{!}{%
\begin{tabular}{@{}llcc@{}}
\toprule
Benchmark
& Baseline
& Baseline/\DCBSDE{}
& \shortstack{$95\%$ paired descriptive\\bootstrap interval} \\
\midrule
HJB-Quadratic & DeepBSDE-style
& 19.753
& $[15.763,24.303]$ \\
HJB-Quadratic & DBDP1
& 172.468
& $[153.551,194.144]$ \\
HJB-Quadratic & DBDP2
& 173.218
& $[154.365,195.759]$ \\
\midrule
Allen--Cahn & DeepBSDE-style
& 2.943
& $[1.492,5.072]$ \\
Allen--Cahn & DBDP1
& 2.325
& $[1.218,3.480]$ \\
Allen--Cahn & DBDP2
& 1.551
& $[1.001,2.166]$ \\
\midrule
Burgers--20 & DeepBSDE-style
& 0.844
& $[0.709,0.975]$ \\
Burgers--20 & DBDP1
& 6.122
& $[5.292,7.037]$ \\
Burgers--20 & DBDP2
& 5.638
& $[4.919,6.479]$ \\
\midrule
Paper-HJB & DeepBSDE-style
& 12.347
& $[8.596,17.565]$ \\
Paper-HJB & DBDP1
& 5289.680
& $[4399.983,6639.882]$ \\
Paper-HJB & DBDP2
& 13.533
& $[9.132,19.478]$ \\
\midrule
Quadratic-gradient & DeepBSDE-style
& 5.011
& $[4.945,5.079]$ \\
Quadratic-gradient & DBDP1
& 1.967
& $[1.896,2.047]$ \\
Quadratic-gradient & DBDP2
& 1.864
& $[1.802,1.934]$ \\
\midrule
Reaction--diffusion & DeepBSDE-style
& 3.164
& $[2.830,3.538]$ \\
Reaction--diffusion & DBDP1
& 16.332
& $[15.000,17.912]$ \\
Reaction--diffusion & DBDP2
& 14.890
& $[13.630,16.373]$ \\
\bottomrule
\end{tabular}}
\end{table}

\begin{table}[H]
\centering
\caption{DPI primary-error comparisons and available control-path diagnostics}
\label{tab:dpi_summary}
\small
\begin{tabular}{@{}lccc@{}}
\toprule
Benchmark
& Runs used/planned
& \shortstack{DPI/\DCBSDE{}\\primary-error ratio}
& \shortstack{DPI $Z$-path RMSE:\\mean $\pm$ sample SD} \\
\midrule
HJB-Quadratic
& $5/10$
& 102.217
& $0.5622\pm0.0663$ \\
Allen--Cahn
& $10/10$
& 6.869
& N/A \\
Burgers--20
& $10/10$
& 0.275
& $0.0021\pm1.440\times10^{-4}$ \\
Paper-HJB
& $1/10$
& 1215.209
& N/A \\
Quadratic-gradient
& $10/10$
& 2.508
& $0.0322\pm0.0031$ \\
Reaction--diffusion
& $10/10$
& 3.135
& $0.0031\pm3.972\times10^{-4}$ \\
\bottomrule
\end{tabular}
\end{table}

\subsection{Available Secondary Diagnostics for the Six Configurations}
\label{appsubsec:secondary_diagnostics}

Tables~\ref{tab:secondary_diag_part1}--\ref{tab:secondary_diag_part2} report the secondary diagnostics for \DCBSDE{}, DeepBSDE-style, DBDP1, and DBDP2. The three metrics quantify, respectively, control-path accuracy, one-step pathwise innovation consistency, and accumulated terminal consistency. All entries for these four methods are reported as the group mean followed by the minimum--maximum range over the ten seeds. In the $Z$-path column, N/A indicates that no validated reference control path is available for the corresponding benchmark.

\begin{table}[H]
\centering
\caption{Secondary diagnostics for HJB-Quadratic, Allen--Cahn, and Burgers--20}
\label{tab:secondary_diag_part1}
\scriptsize
\resizebox{\textwidth}{!}{%
\begin{tabular}{@{}llccc@{}}
\toprule
Benchmark
& Method
& $Z$-path RMSE: mean [min,max]
& BSDE residual RMSE: mean [min,max]
& Terminal rollout RMSE: mean [min,max] \\
\midrule
HJB-Quadratic & \DCBSDE{}
& $0.0768\ [0.0687,0.0847]$
& $0.1631\ [0.1628,0.1635]$
& $0.9686\ [0.9636,0.9713]$ \\
HJB-Quadratic & DeepBSDE-style
& $0.7567\ [0.7082,0.8463]$
& $1.3471\ [1.1040,1.8609]$
& $1.5729\ [1.4770,1.7145]$ \\
HJB-Quadratic & DBDP1
& $5.0045\ [5.0039,5.0055]$
& $22.490\ [22.412,22.561]$
& $50.357\ [50.330,50.379]$ \\
HJB-Quadratic & DBDP2
& $5.0058\ [5.0057,5.0058]$
& $5.5226\ [5.4244,5.6416]$
& $12.403\ [12.180,12.601]$ \\
\midrule
Allen--Cahn & \DCBSDE{}
& N/A
& $3.881\times10^{-4}\ [3.870\times10^{-4},3.896\times10^{-4}]$
& $0.00152\ [0.00152,0.00153]$ \\
Allen--Cahn & DeepBSDE-style
& N/A
& $0.00185\ [0.00149,0.00248]$
& $0.00254\ [0.00208,0.00271]$ \\
Allen--Cahn & DBDP1
& N/A
& $0.00242\ [0.00240,0.00244]$
& $0.00562\ [0.00558,0.00568]$ \\
Allen--Cahn & DBDP2
& N/A
& $0.00235\ [0.00232,0.00238]$
& $0.00520\ [0.00517,0.00523]$ \\
\midrule
Burgers--20 & \DCBSDE{}
& $0.0138\ [0.0114,0.0179]$
& $0.00958\ [0.00948,0.00973]$
& $0.0481\ [0.0474,0.0495]$ \\
Burgers--20 & DeepBSDE-style
& $0.0317\ [0.0299,0.0339]$
& $0.0108\ [0.0101,0.0113]$
& $0.0472\ [0.0464,0.0482]$ \\
Burgers--20 & DBDP1
& $0.1547\ [0.1402,0.1735]$
& $0.0307\ [0.0290,0.0325]$
& $0.1001\ [0.0946,0.1165]$ \\
Burgers--20 & DBDP2
& $0.1530\ [0.1422,0.1639]$
& $0.0248\ [0.0239,0.0265]$
& $0.1023\ [0.0953,0.1143]$ \\
\bottomrule
\end{tabular}}
\end{table}

\begin{table}[H]
\centering
\caption{Secondary diagnostics for Paper-HJB, Quadratic-gradient, and Reaction--diffusion}
\label{tab:secondary_diag_part2}
\scriptsize
\resizebox{\textwidth}{!}{%
\begin{tabular}{@{}llccc@{}}
\toprule
Benchmark
& Method
& $Z$-path RMSE: mean [min,max]
& BSDE residual RMSE: mean [min,max]
& Terminal rollout RMSE: mean [min,max] \\
\midrule
Paper-HJB & \DCBSDE{}
& N/A
& $0.00757\ [0.00756,0.00758]$
& $0.0337\ [0.0336,0.0338]$ \\
Paper-HJB & DeepBSDE-style
& N/A
& $0.0381\ [0.0231,0.0656]$
& $0.0518\ [0.0504,0.0539]$ \\
Paper-HJB & DBDP1
& N/A
& $1.7112\ [1.7031,1.7208]$
& $3.8293\ [3.8103,3.8520]$ \\
Paper-HJB & DBDP2
& N/A
& $0.0639\ [0.0631,0.0655]$
& $0.1504\ [0.1495,0.1509]$ \\
\midrule
Quadratic-gradient & \DCBSDE{}
& $0.0660\ [0.0646,0.0675]$
& $0.0161\ [0.0161,0.0162]$
& $0.1168\ [0.1164,0.1174]$ \\
Quadratic-gradient & DeepBSDE-style
& $0.1552\ [0.1542,0.1565]$
& $0.0865\ [0.0806,0.1003]$
& $0.3761\ [0.3751,0.3777]$ \\
Quadratic-gradient & DBDP1
& $0.1581\ [0.1556,0.1621]$
& $0.0965\ [0.0948,0.1000]$
& $0.3724\ [0.3701,0.3739]$ \\
Quadratic-gradient & DBDP2
& $0.1568\ [0.1543,0.1592]$
& $0.0947\ [0.0909,0.0969]$
& $0.3756\ [0.3733,0.3789]$ \\
\midrule
Reaction--diffusion & \DCBSDE{}
& $0.00651\ [0.00562,0.00724]$
& $0.0112\ [0.0111,0.0113]$
& $0.0585\ [0.0584,0.0587]$ \\
Reaction--diffusion & DeepBSDE-style
& $0.0211\ [0.0192,0.0226]$
& $0.0141\ [0.0135,0.0153]$
& $0.0622\ [0.0614,0.0628]$ \\
Reaction--diffusion & DBDP1
& $0.1334\ [0.1254,0.1402]$
& $0.4808\ [0.4753,0.4848]$
& $2.0563\ [2.0551,2.0577]$ \\
Reaction--diffusion & DBDP2
& $0.1249\ [0.1169,0.1387]$
& $0.0468\ [0.0459,0.0482]$
& $0.1749\ [0.1699,0.1806]$ \\
\bottomrule
\end{tabular}}
\end{table}

% ============================================================================
% ============================================================================
\section{LogHeat: A Verifiable Intersection between the Score and the BSDE Control}
\label{app:score_case}

For a general BSDE, the control process is not the score of a probability density. The LogHeat problem nevertheless provides a verifiable special case in which the two objects coincide after multiplication by the diffusion matrix. Let the positive function $\rho$ solve the backward heat equation generated by the diffusion matrix $\sigma$:
\begin{equation}
\partial_t\rho(t,x)+\frac12\operatorname{tr}\!\left(a\nabla_x^2\rho(t,x)\right)=0,\qquad
\rho(T,x)=\rho_T(x)>0,\qquad a=\sigma\sigma^\top.
\label{eq:logheat_density_revised}
\end{equation}
Set $u=\log\rho$. Then
\[
\partial_tu+\frac12\operatorname{tr}\!\left(a\nabla_x^2u\right)+\frac12\left\lVert\sigma^{\mathsf T}\nabla_xu\right\rVert_2^2=0,\qquad
u(T,x)=\log\rho_T(x).
\]
For the forward process $\dd X_t=\sigma\,\dd W_t$, the associated BSDE has generator $f(z)=\tfrac12\lVert z\rVert_2^2$, and
\begin{equation}
Z_t=\sigma^\top\nabla_xu(t,X_t)=\sigma^\top\nabla_x\log\rho(t,X_t).
\label{eq:logheat_score_control_revised}
\end{equation}
Thus, if $s(t,x)=\nabla_x\log\rho(t,x)$ denotes the density score, then $Z=\sigma^{\mathsf T}s$. This identity is specific to the logarithmically transformed class described by equations~\eqref{eq:logheat_density_revised}--\eqref{eq:logheat_score_control_revised}. It does not imply that the control process of a general BSDE is the score of a state-marginal density.

The experiment uses a positive Gaussian-mixture terminal density and considers only the setting $d=100$ and $N=20$. Control error is the prespecified evaluation endpoint. Each of the four BSDE methods uses $n=5$ paired random seeds. The main tabulated results are evaluated along paired BSDE path states. The time-layer plot additionally compares the control error on states sampled from the score density.

Table~\ref{tab:logheat_method_results} reports the relative $L^2$ error of the $Z$-path and the relative $Z$-error on the primary evaluation states.

\begin{table}[H]
\centering
\caption{Control errors for the $d=100$ LogHeat problem. }
\label{tab:logheat_method_results}
\small
\resizebox{\textwidth}{!}{%
\begin{tabular}{@{}lrrrr@{}}
\toprule
Method & $n$ & Relative $L^2$ error of the $Z$-path & $95\%$ descriptive interval & Relative $Z$-error on primary evaluation states \\
\midrule
DBDP1 & 5 & 1.9137 & $[1.9057,1.9218]$ & 2.2034 \\
DBDP2 & 5 & 2.7607 & $[2.4009,3.1204]$ & 2.4865 \\
DeepBSDE-style & 5 & 0.3681 & $[0.2277,0.5084]$ & 0.3324 \\
\DCBSDE{} & 5 & $\mathbf{0.0601}$ & $\mathbf{[0.0586,0.0615]}$ & $\mathbf{0.0714}$ \\
\bottomrule
\end{tabular}}
\end{table}

The relative $L^2$ error of the $Z$-path for \DCBSDE{} is $0.0601$, below those of all three BSDE baselines. The corresponding errors for DeepBSDE-style, DBDP1, and DBDP2 are $0.3681$, $1.9137$, and $2.7607$, respectively. In this special case, where the control--score relation is analytically verifiable, the results indicate that layerwise Brownian control supervision recovers the control path accurately.

Figure~\ref{fig:logheat_time_layers} shows that the DSM score baseline maintains a low error under direct score supervision. Among the four BSDE solvers, \DCBSDE{} exhibits the lowest and most stable control error. DeepBSDE-style ranks second, whereas DBDP1 and DBDP2 show larger biases or greater cross-seed variability at several time layers. The comparison demonstrates that direct score learning provides a strong auxiliary baseline when a genuine density-score structure is available, while \DCBSDE{} remains capable of recovering the corresponding control within a general BSDE-solver framework.

\begin{figure}[H]
\centering
\includegraphics[width=0.94\textwidth]{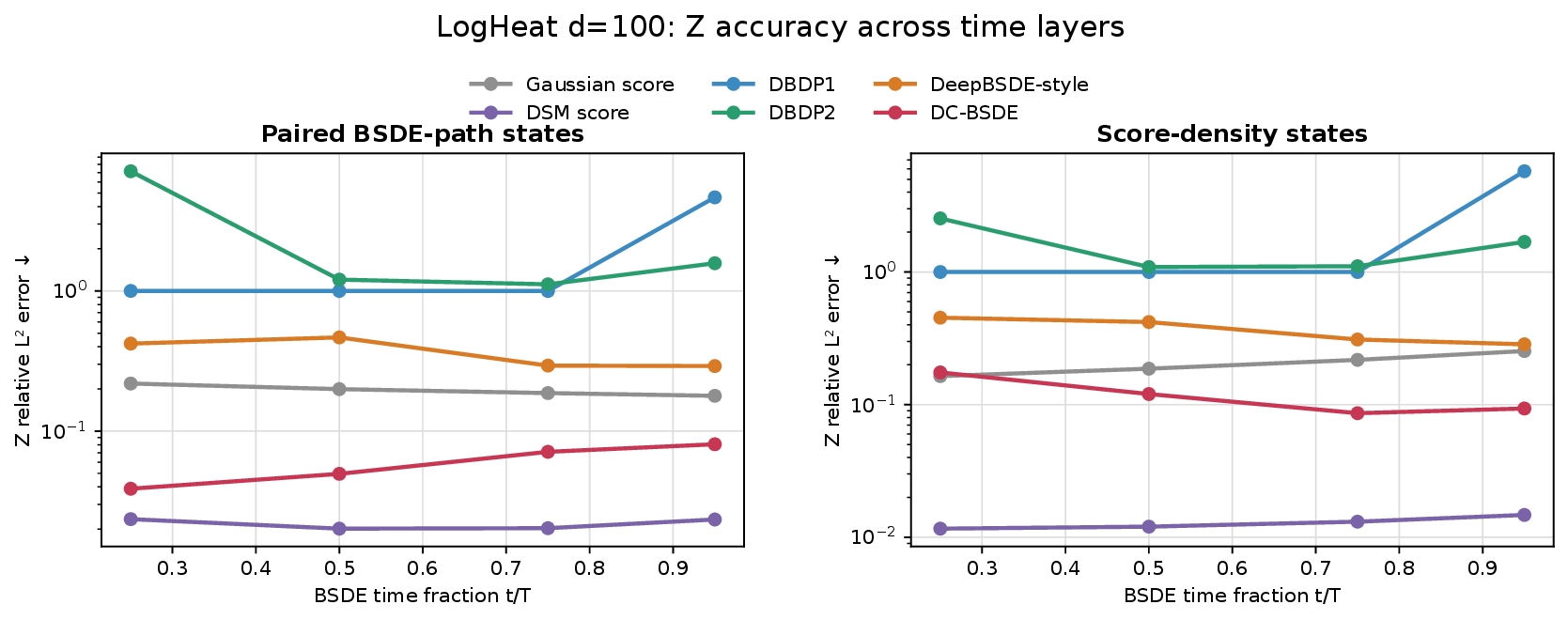}
\caption{Relative $L^2$ error of $Z$ across time layers for the $d=100$ LogHeat problem}
\label{fig:logheat_time_layers}
\end{figure}

% ============================================================================
% ============================================================================

%\clearpage
\FloatBarrier

\end{document}